\def\l@subsection#1#2{}
\def\l@subsubsection#1#2{}
\newtheorem{definition}{Definition}
\newtheorem{proposition}{Proposition}
\newtheorem*{informaltheorem*}{Theorem (informal)}
\newtheorem{problem}{Problem}
\newcommand{\calC}[0]{\mathcal{C}}
\newcommand{\calV}[0]{\mathcal{V}}
\newcommand{\calE}[0]{\mathcal{E}}
\newcommand{\calL}[0]{\mathcal{L}}
\newcommand{\calQ}[0]{\mathcal{Q}}
\newcommand{\calN}[0]{\mathcal{N}}
\newcommand{\calH}[0]{\mathcal{H}}
\newcommand{\NN}[0]{\mathbb{N}}
\newcommand{\RR}[0]{\mathbb{R}}
\newcommand{\CC}[0]{\mathbb{C}}
\newcommand{\poly}[0]{\operatorname{poly}}
\newcommand{\clique}[0]{\operatorname{Cl}}
\newcommand{\sng}[0]{\operatorname{sng}}
\newcommand{\prbm}[1]{\textsc{#1}}
\newcommand{\class}[1]{$\mathsf{#1}$}
\newcommand{\qmaone}[1]{$\mathsf{QMA}_1$}
\newcommand{\dqcone}[1]{$\mathsf{DQC}_1$}
\newcommand{\yes}[0]{\textsc{yes}}
\newcommand{\no}[0]{\textsc{no}}
\begin{document}

\preprint{APS/123-QED}

\title{Testing the presence of balanced and bipartite components in a sparse graph is \qmaone{}-hard}

\author{Massimiliano Incudini}
\email{massimiliano.incudini@univr.it}
\affiliation{University of Verona, Verona, Italy}%

\author{Casper Gyurik}
\email{casper.gyurik@pasqal.com}
\affiliation{Pasqal SaS, 7 rue L\'{e}onard de Vinci, 91300 Massy, France}%

\author{Riccardo Molteni}
\email{r.molteni@liacs.leidenuniv.nl}
\affiliation{applied Quantum algorithms (aQa), Leiden University, Leiden, The Netherlands}%

\author{Vedran Dunjko}
\email{v.dunjko@liacs.leidenuniv.nl}
\affiliation{applied Quantum algorithms (aQa), Leiden University, Leiden, The Netherlands}%

\date{\today}

\begin{abstract}
Determining whether an abstract simplicial complex, a discrete object often approximating a manifold, contains multi-dimensional holes is a task deeply connected to quantum mechanics and proven to be \qmaone{}-hard by Crichigno and Kohler. This task can be expressed in linear algebraic terms, equivalent to testing the non-triviality of the kernel of an operator known as the Combinatorial Laplacian. In this work, we explore the similarities between abstract simplicial complexes and signed or unsigned graphs, using them to map the spectral properties of the Combinatorial Laplacian to those of signed and unsigned graph Laplacians. We prove that our transformations preserve efficient sparse access to these Laplacian operators. Consequently, we show that key spectral properties, such as testing the presence of balanced components in signed graphs and the bipartite components in unsigned graphs, are \qmaone{}-hard. These properties play a paramount role in network science. The hardness of the bipartite test is relevant in quantum Hamiltonian complexity, as another example of testing properties related to the eigenspace of a stoquastic Hamiltonians are quantumly hard in the sparse input model for the graph.
\end{abstract}

\maketitle

\onecolumngrid

\tableofcontents

\clearpage

\section{Introduction}

Characterizing graph properties is a fundamental task in computer science. These properties include determining whether a graph is disconnected, meaning some vertices are unreachable from others, or bipartite, meaning its set of vertices can be divided into two groups such that every edge connects vertices from different groups. We can also characterize important properties of more complex combinatorial structures. For example, in \emph{signed} graphs, where edges can have positive or negative signs\textsuperscript{\footnote{Signed graphs should not be confused with directed graphs. Specifically, the Laplacian of a signed graph is symmetric and takes values in ${-1, 0, 1}$, whereas the Laplacian of a directed graph is not symmetric and takes values in ${0, 1}$.}}, a graph is called \emph{balanced} if its vertices can be partitioned into two groups such that all edges within each group are positive and those between groups are negative. In abstract simplicial complexes, which generalize graphs by allowing interactions among more than two elements called \emph{simplices}, an important property to test is the presence of homological holes, which is of great interest in algebraic topology and topological data analysis. Some of these properties can be described using spectral graph theory~\cite{chung1997spectral}, where graphs and their generalizations are associated with linear algebraic operators, allowing us to reformulate certain graph properties in terms of conditions on the spectra of these operators. Important examples include the graph Laplacian $\calL^G$ for graphs $G$, the signed graph Laplacian $\calL^{G_s}$ for signed graphs $G_s$, and the $p$-Combinatorial Hodge Laplacian $\calL^{\Gamma}_p$ for abstract simplicial complexes $\Gamma$ and integer $p \in \NN$.

A significant body of literature has focused on the capabilities of classical computation in characterizing input graphs as discussed above and on whether quantum computing can provide any advantages. Quantum Hamiltonian complexity~\cite{gharibian2015quantum} explores these tasks in relation to quantum complexity classes. In this context, the Hamiltonian corresponds to an embedding of the Laplacian of the combinatorial structure into a potentially larger operator acting on the Hilbert space of \(n\) qubits. 
These Hamiltonians must be efficiently encoded. This is achievable for the Combinatorial Hodge Laplacian of abstract simplicial complexes with $n$ vertices if we can test whether a given subset of vertices forms a simplex in the input complex in $\order{\poly n}$ time. An example of this is the clique complex of a graph $G$ with $n$ vertices, denoted as $\clique(G)$. Similarly, efficient access can be obtained for the Laplacian of sparse signed or unsigned graphs with $\order{2^n}$ vertices when the adjacency list can be computed by a classical circuit of size $\order{\poly n}$.

Recently, there has been substantial interest in quantum algorithms for topological data analysis and, more generally, in understanding the computational complexity of problems related to homology over simplicial complexes. A key finding in this area is that determining whether a clique complex has homological holes, previously known to be NP-hard~\cite{adamaszek2016complexity}, has now been proven to be \qmaone{}-hard~\cite{crichigno2022clique}. This result establishes an unexpected bridge between the mathematical properties of a particular combinatorial structure, the abstract simplicial complex, and quantum computing. This connection was further reinforced by \textcite{king2023promise}, who demonstrated that there exists a promise class of weighted simplicial complexes for which determining the homology is both \class{QMA} and \qmaone{}-hard. In this work, we extend these connections between combinatorial structures and quantum computational complexity by showing that the following problems are also \qmaone{}-hard.

\begin{informaltheorem*}
Determining if a sparse signed graph has a balanced connected component is \qmaone{}-hard. Furthermore, the task under the promise that its smallest eigenvalue is either zero or not less than inverse-polynomially small is contained in \class{QMA}.
\end{informaltheorem*}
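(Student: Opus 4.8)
The plan is to establish the two halves of the statement — $\mathsf{QMA}_1$-hardness of the decision problem and containment of the promise version in $\mathsf{QMA}$ — separately, in both cases via the transformation from the Combinatorial Laplacian to signed graph Laplacians developed earlier in this work, together with the spectral characterization of balance. For hardness I would reduce from the clique-homology problem shown $\mathsf{QMA}_1$-hard by Crichigno and Kohler~\cite{crichigno2022clique}: given a graph $G$ on $n$ vertices and a dimension $p$ (taken in the parameter regime of their hardness result, where the $p$-skeleton of $\clique(G)$ already has exponentially many faces while simplex membership is testable in $\order{\poly n}$ time), decide whether $\ker \calL^{\clique(G)}_p$ is non-trivial, under the promise that in the negative case $\lambda_{\min}(\calL^{\clique(G)}_p) \ge 1/\poly(n)$. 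First I would feed $\clique(G)$ into the transformation, obtaining a signed graph $G_s$ such that (i) efficient sparse access is preserved, so $G_s$ has $\order{2^{\poly(n)}}$ vertices of polynomially bounded degree with adjacency list computable by a $\poly(n)$-size circuit, and (ii) $\ker \calL^{G_s}$ is non-trivial exactly when $\ker \calL^{\clique(G)}_p$ is, with an inverse-polynomial spectral gap carried over to the negative case.

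It then remains to identify the kernel condition with the combinatorial property in the statement. Since $\calL^{G_s}\succeq 0$ and $x^{\mathsf T}\calL^{G_s}x = \sum_{(u,v)\in E}(x_u - \sigma_{uv}x_v)^2$, a kernel vector must satisfy $x_u = \sigma_{uv}x_v$ across every edge; propagating this relation along a spanning tree of a connected component determines the vector up to a global sign, and a nonzero solution exists precisely when every cycle in the component has an even number of negative edges, i.e.\ the component is balanced. Hence $G_s$ has a balanced connected component iff $\lambda_{\min}(\calL^{G_s}) = 0$ iff $\ker\calL^{G_s}\ne 0$, and composing the reductions shows the balanced-component test on sparse signed graphs is $\mathsf{QMA}_1$-hard already on instances meeting the stated promise, hence in particular for the general problem.

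For containment I would observe that under the promise the problem is exactly ``decide whether $\lambda_{\min}(\calL^{G_s})$ equals $0$ or is at least $1/\poly$'' for the Hermitian positive-semidefinite operator $\calL^{G_s}$ acting on $n$ qubits, where $2^n$ is the number of vertices. Sparsity of $G_s$ makes each row of $\calL^{G_s}$ have $\poly(n)$ nonzero integer entries computable in $\poly(n)$ time from the adjacency-list circuit, with $\|\calL^{G_s}\| \le 2\max_v \deg(v) = \poly(n)$, so $e^{-i\calL^{G_s}t}$ is efficiently simulable. The $\mathsf{QMA}$ verifier takes as witness a claimed ground state $\ket{\psi}$ on $n$ qubits, runs phase estimation of this evolution to precision finer than the promise gap, and accepts iff the estimated energy lies below half the gap. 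Completeness holds because in the yes case the exact zero eigenvector yields estimate $0$ up to the phase-estimation bin resolution, which is amplifiable to constant success; soundness holds because in the no case every $n$-qubit state has energy at least $1/\poly$, so the readout exceeds half the gap except with the small, boostable phase-estimation failure probability.

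The main obstacle is not this composition but the two-sided control demanded of the transformation in the hardness step: it must at once be implementable as a local rewriting that converts ``is this subset a simplex'' oracle access into a $\poly(n)$-size adjacency-list circuit while keeping the degree of $G_s$ polynomially bounded — needed both for the sparse-input model and for the norm bound $\|\calL^{G_s}\| = \poly(n)$ used in the $\mathsf{QMA}$ verifier — and it must track the spectrum quantitatively, so that an inverse-polynomial gap for the clique-homology instance becomes an inverse-polynomial gap for the signed Laplacian and the $\mathsf{QMA}_1$-hard instances fall inside the promise. A secondary point to handle carefully is that Crichigno--Kohler hardness is invoked in the regime where the $p$-skeleton is exponentially large, so every size and gap bound must be expressed in terms of $n$ rather than the number of simplices or of vertices.
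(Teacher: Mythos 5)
Your overall architecture (reduce from clique homology via the Hodge-Laplacian-to-signed-Laplacian correspondence, identify balance with $\ker\calL^{G_s}\neq 0$, then verify the promise version by phase estimation) matches the paper's. However, there is a genuine gap in your hardness argument. You invoke Crichigno--Kohler hardness for clique homology \emph{together with} an inverse-polynomial lower bound on $\lambda_{\min}(\calL_p^{\clique(G)})$ in the \no{} case, and conclude that balancedness is \qmaone{}-hard ``already on instances meeting the stated promise.'' No such gap promise is available: the unweighted \prbm{clique homology} problem proven \qmaone{}-hard in \textcite{crichigno2022clique} is a bare decision problem (kernel trivial or not), and the gapped variant is only known to be hard for \emph{weighted} complexes (\textcite{king2023promise}); whether the unweighted gapped version is \qmaone{}-hard is open, which is precisely why the paper proves hardness only for the non-promise problem and explicitly remarks that the promise variant is not necessarily \qmaone{}-hard. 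Your reduction itself still yields the first half of the theorem (it maps \yes{} to \yes{} and \no{} to \no{} with no gap needed), but the stronger claim that the hard instances satisfy the promise must be dropped.

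A second, smaller issue concerns the \class{QMA} verifier. You treat $\calL^{G_s}$ as acting on $n$ qubits ``where $2^n$ is the number of vertices,'' but in this input model $V\subseteq[2^n]\setminus\{0\}$ with $N<2^n$ in general (and necessarily so for the graphs arising from the clique-complex encoding, whose vertices are bitstrings of Hamming weight $p+1$ forming cliques). A naive zero-padded embedding of $\calL^{G_s}$ into the $2^n$-dimensional Hilbert space has every non-vertex basis state in its kernel, so Merlin could cheat by sending a state supported outside $V$ and soundness would fail. The paper repairs this by block-encoding $H=\calL^{G_s}+\sum_{i\notin V}\ketbra{i}{i}$, which preserves $\ker H=\ker\calL^{G_s}$ while penalizing the complement subspace; your verifier needs the same fix, and the sparse-access circuits must be shown to support it (the paper does this via the ``marked'' adjacency lists that let the oracle recognize $i\notin V$).
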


We prove this theorem via a reduction from the clique homology problem. Note that the promise variant mentioned here is not necessarily \qmaone{}-hard.

\begin{informaltheorem*}
Determining if a sparse unsigned graph has a bipartite connected component is \qmaone{}-hard. Furthermore, the task under the promise that its smallest eigenvalue is either zero or not less than inverse-polynomially small is contained in \class{QMA}.
\end{informaltheorem*}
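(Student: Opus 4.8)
The plan is to obtain the statement from the signed-graph theorem above, exploiting the elementary fact that bipartiteness is detected by the \emph{signless} Laplacian. For an unsigned graph $H$ with degree matrix $D$ and adjacency matrix $A$, the signless Laplacian $\calQ(H)=D+A$ is positive semidefinite and satisfies $x^{\top}\calQ(H)x=\sum_{\{u,v\}\in\calE(H)}(x_u+x_v)^2$, so $\ker\calQ(H)\neq 0$ exactly when $H$ has a bipartite connected component, in which case $\lambda_{\min}(\calQ(H))=0$. Equivalently, $\calQ(H)$ is the signed graph Laplacian of the all-negative signed graph on $H$, and that signed graph is balanced on a component precisely when the component is bipartite --- this is the bridge to the previous theorem. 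Thus ``does $H$ have a bipartite component'' is ``is $\ker\calQ(H)$ nontrivial'', and the promise variant is ``is $\lambda_{\min}(\calQ(H))$ either $0$ or at least $1/\poly$''.

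For the \qmaone{}-hardness I would reduce from the signed-graph balanced-component problem, shown \qmaone{}-hard in the sparse model by the previous theorem. Given a sparse signed graph $G_s$, let $H$ be obtained by \textbf{subdividing every positive edge} of $G_s$ (inserting one fresh vertex in the interior of each positive edge) while keeping negative edges as ordinary edges. A cycle of $G_s$ with $p$ positive and $q$ negative edges becomes a cycle of $H$ of length $2p+q$, whose parity is that of $q$; since (Harary's criterion) a connected signed graph is balanced iff every cycle carries an even number of negative edges, and subdivision preserves the partition into connected components, a component of $G_s$ is balanced iff the corresponding component of $H$ is bipartite. Hence $G_s$ has a balanced component iff $H$ has a bipartite component: \yes{} maps to \yes{} and \no{} to \no{}, so this is a valid reduction of promise problems, and no spectral gap needs to be propagated since both endpoints use the exact nontriviality-of-kernel condition. (Composing this gadget with the transformation of the previous theorem also gives a direct reduction from clique homology.)

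The point requiring real care is that the transformation stay inside the sparse-input model. I would encode $\calV(H)$ by bit strings --- an original vertex $v$ as $0\Vert v$ and the subdivision vertex of a positive edge $\{u,v\}$, $u<v$, as $1\Vert u\Vert v$ --- so the description length grows only by a constant factor and $H$ still has $2^{\poly}$ vertices; the adjacency-list circuit of $H$ is then computable in $\poly$ time from that of $G_s$ (the neighbours of an original vertex $v$ are its negative neighbours together with one subdivision vertex per positive neighbour; a subdivision vertex has its two endpoints as its only neighbours), and subdivision never increases the degree, so row-sparsity is preserved. This establishes the hardness part.

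The promise variant lies in \class{QMA} because $\calQ(H)$ is a row-sparse, efficiently row-computable, positive semidefinite operator of polynomially bounded norm on $\ceil{\log|\calV(H)|}$ qubits, and deciding whether its smallest eigenvalue is $0$ or at least $1/\poly$ is an instance of the sparse Hamiltonian problem: the prover sends the ground state (a normalized bipartition witness of some bipartite component), the verifier estimates its energy by sparse Hamiltonian simulation and phase estimation, and accepts iff the estimate is below the threshold. The only delicate step is the embedding when $|\calV(H)|$ is not a power of two: the extra dimensions must be filled by a \emph{non-bipartite} graph (e.g.\ a single long odd cycle), since padding with isolated vertices or edges would create spurious zero eigenvalues and break soundness. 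Relative to the previous theorem, I expect the encoding and padding bookkeeping to be the only genuine obstacle; once that theorem is available, the spectral and combinatorial content is otherwise routine.
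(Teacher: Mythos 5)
Your proposal follows essentially the same route as the paper: a reduction from the sparse balanced-component problem via Zaslavsky's negative subdivision (your Harary-criterion cycle-parity argument is exactly the content of the cited Proposition 2.2 of Zaslavsky), a bit-string relabelling of subdivision vertices to preserve efficiently implementable sparse access, and containment in \class{QMA} via the kernel of the signless Laplacian $\calQ = \mathcal{D}+\mathcal{A}$ with phase estimation on a block-encoding, including the correct observation that the padding of the unused dimensions must not introduce spurious zero eigenvalues (the paper pads with $\sum_{i\notin V}\ketbra{i}{i}$ rather than an odd cycle, but this serves the same purpose). The only cosmetic difference is your canonical labelling $1\Vert u\Vert v$ of a subdivision vertex by its sorted endpoints, versus the paper's labelling $2^n + i\cdot 2^m+\ell$ by one endpoint and an adjacency-list index; both are valid, and yours is arguably the more symmetric choice.
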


We prove this theorem via a reduction from the task of determining the existence of balanced connected components. A recently discovered reduction from balance to bipartite testing is known in the literature \cite{zaslavsky2018negative}. We build upon this work by ensuring that the reduction preserves efficiently implementable sparse access to the graph.

The connection between the Combinatorial Hodge Laplacian and the Laplacian of the signed graph unveils a link between topological data analysis and network science, possibly leading this latter as a novel applicative area for quantum computing. In fact, signed graphs are used in network science to model entities (vertices) with relationships (edges) that can be either synergistic (positive) or adversarial (negative). Balanced graphs have been extensively studied for their ability to model specific behaviors within these networks, finding application, for example, in finance, marketing, biology, and ecology~\cite{figueiredo2014maximum, tang2016recommendations, tian2021extracting, vicsek1995novel, ou2015detecting, reynolds1987flocks}. Similarly, the unsigned graph is the most basic model used in network science; here, a bipartite network often exhibits hidden patterns and relationships (e.g., student-teacher, employer-employee).

Our work can be framed within the context of quantum Hamiltonian complexity. These network-related tasks represent a restriction of the sparse Hamiltonian problem. The focus on graphs with \emph{efficiently implementable} sparse access and the emphasis on the Laplacian operator distinguishes our work from the existing literature.

Graphs with \emph{efficiently implementable} sparse access typically exhibit simple or highly regular topologies. However, this framework can also implement notable families of graphs, such as graphs induced by clique complexes (as done in this work), balanced binary trees \cite{camps2024explicit}, Toeplitz graphs (whose adjacency matrix is a Toeplitz matrix \cite{sunderhauf2024block}), and circulant graphs, among others. The extent to which real-world networks commonly analyzed in network science, such as scale-free networks \cite{barabasi1999emergence}, can be mapped to these topologies, and whether it is possible to efficiently compress such graphs to fit these succinct descriptions, will be the subject of future investigation. A possible path forward for adapting graphs to such succinct access is the use of sparsification techniques, which have been investigated in \cite{herbert2019spectral, apers2022quantum}.

The literature on sparse Hamiltonians is extensive. Thanks to the connections between signed and unsigned graphs, this framework allows us to naturally discuss the Laplacian of unsigned graphs. The latter operator is \emph{stoquastic}. Notably, certain tasks that are \class{QMA}-hard for general Hamiltonians become easier when restricted to stoquastic Hamiltonians, such as ground state energy estimation. However, determining the excited states of stoquastic operators can still be \class{QMA}-complete~\cite{jordan2010quantum}. Here, we prove that testing for bipartite components, which depends non-trivially on the spectral properties of the Laplacian other than the ground state energy, is \qmaone{}-hard; this provides another example where estimating spectral properties of stoquastic Hamiltonians is quantumly hard. 

\subsection{Overview of the results}

The first part of this work unveils a connection between abstract simplicial complexes and signed graphs. 

\begin{restatable}{definition}{restatableasc}
An \emph{abstract simplicial complex} (ASC) is an ordered pair $\Gamma = (V, \Sigma)$ where $V$ is a non-empty set of vertices, and $\Sigma$ is a non-empty subset of $2^V \setminus \{ \varnothing \}$ that is closed under inclusion. The elements of $\Sigma$ are called simplices. The elements of $\Sigma_p \subset \Sigma$ are $p$-simplices, each of which has size $p+1$. A face of a $p$-simplex $\sigma$ is any $(p-1)$-simplex obtained by removing a single vertex from $\sigma$. Two $p$-simplices $\sigma, \tau$ are lower-adjacent, denoted as $\sigma \sim_l \tau$, if they share the face $\sigma \cap \tau \in \Sigma_{p-1}$. Two $p$-simplices $\sigma, \tau$ are upper-adjacent, denoted as $\sigma \sim_u \tau$, if are faces of a common $(p+1)$-simplex $\rho$.
\end{restatable}

A key relation between $p$-simplices in $\Gamma$ is captured by the operator known as the Combinatorial Hodge Laplacian, $\calL_p^\Gamma$. The elements of $\ker \calL_p^\Gamma$ can be associated with features known as $p$-dimensional holes in algebraic topology. We generalize the work in \textcite{jost2023cheeger}, showing the construction of a graph $G_s(\Gamma, p)$ such that $\calL_p^\Gamma$ and $\calL^{G_s}$ are unitarily equivalent.

\begin{restatable}{proposition}{restatablepropascsigned}\label{prop:full_laplacian} 
For any abstract simplicial complex $\Gamma = (V, \Sigma)$ and any $p \in \NN$, let $G_s = (V_s, E_s, s)$ be the signed graph having
\begin{align*} 
V_s & = \Sigma_p, \\ 
E_s & = \{ \{\sigma, \tau\} \mid \sigma, \tau \in \Sigma_p, \text{either } \sigma \sim_u \tau \text{ or } \sigma \sim_l \tau \}, \\ 
s(\sigma, \tau) & = - \Big(\sng(\sigma \cap \tau, \sigma) \sng(\sigma \cap \tau, \tau) + \sng(\sigma, \sigma \cup \tau) \sng(\tau, \sigma \cup \tau)\Big). 
\end{align*}
Then, $\calL_p^\Gamma$ and $\calL^{G_s}$ are unitarily equivalent. Here, $\sng$ is defined as per Definition \ref{def:sng}.
\end{restatable}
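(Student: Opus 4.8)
The plan is to prove the unitary equivalence by an explicit entrywise comparison of matrices. Fix an orientation on every simplex of $\Gamma$; then $C_p(\Gamma)$ carries an orthonormal basis indexed by $\Sigma_p$, and I identify $\CC^{V_s}=\CC^{\Sigma_p}$ with $C_p(\Gamma)$ via this basis. Under this identification the assertion becomes: the matrix of $\calL_p^\Gamma=\partial_{p+1}\partial_{p+1}^{\dagger}+\partial_p^{\dagger}\partial_p$ and the matrix of $\calL^{G_s}$ agree up to conjugation by a diagonal $\{\pm1\}$-matrix, which is unitary and absorbs the freedom in the choice of orientations. So everything reduces to comparing the two matrices slot by slot, recalling that $\calL^{G_s}$ has $-s(\sigma,\tau)$ in the $(\sigma,\tau)$ entry on $E_s$, $0$ off $E_s$, and the vertex degrees of $G_s$ on the diagonal.

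For $\sigma\neq\tau$ I would expand $\langle\partial_p\sigma,\partial_p\tau\rangle$ and $\langle\partial_{p+1}^{\dagger}\sigma,\partial_{p+1}^{\dagger}\tau\rangle$ in the simplex bases. The down term only sees common codimension-one faces of $\sigma$ and $\tau$, and two distinct $p$-simplices have at most one such face — namely $\sigma\cap\tau$, which occurs exactly when $|\sigma\cap\tau|=p$, i.e.\ exactly when $\sigma\sim_l\tau$ — so the sum collapses to the single product $\sng(\sigma\cap\tau,\sigma)\,\sng(\sigma\cap\tau,\tau)$ of incidence coefficients. Dually, the up term only sees a common coface, whose sole candidate is $\sigma\cup\tau$, contributing $\sng(\sigma,\sigma\cup\tau)\,\sng(\tau,\sigma\cup\tau)$ precisely when $\sigma\sim_u\tau$ and nothing otherwise. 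Adding the two and comparing with the definition of $s$ shows the off-diagonal entries of $\calL_p^\Gamma$ are exactly $-s(\sigma,\tau)$ on $E_s$ and $0$ elsewhere, matching $\calL^{G_s}$. The diagonal is handled by $\langle\partial_p\sigma,\partial_p\sigma\rangle=p+1$ (the $p+1$ faces of $\sigma$ are distinct, hence orthonormal) and $\langle\partial_{p+1}^{\dagger}\sigma,\partial_{p+1}^{\dagger}\sigma\rangle=$ the number of $(p+1)$-simplices containing $\sigma$; one then checks that their sum agrees with the $(\sigma,\sigma)$ entry of $\calL^{G_s}$ prescribed by the construction.

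The step I expect to be the main obstacle is the sign bookkeeping, which is most delicate when $\sigma$ and $\tau$ are simultaneously upper- and lower-adjacent (for $p\ge1$ this is automatic as soon as $\sigma\sim_u\tau$, since then $\sigma\cap\tau\subset\sigma\cup\tau\in\Sigma$). There the up- and down-contributions must combine exactly as the formula for $s$ dictates, and the identity that makes this work is forced by $\partial_p\partial_{p+1}=0$: for the length-two interval $[\nu,\rho]$ of the face poset with $\nu=\sigma\cap\tau$ and $\rho=\sigma\cup\tau$, whose only intermediate $p$-simplices are $\sigma$ and $\tau$, one has $\sng(\sigma,\rho)\,\sng(\nu,\sigma)+\sng(\tau,\rho)\,\sng(\nu,\tau)=0$, which pins down the relative sign of the two incidence products entering $s(\sigma,\tau)$. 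I would also keep a careful global-sign ledger — choosing a coherent orientation system so that $\sng$ reproduces the incidence coefficients directly — and rely on the diagonal $\{\pm1\}$-conjugation to soak up any discrepancy a different sign convention would cause. Finally I would dispatch the degenerate case $p=0$, where the down part vanishes, $G_s$ is the $1$-skeleton of $\Gamma$ with all-positive signs, and $\calL^{G_s}$ is the ordinary graph Laplacian, so the claim is immediate.
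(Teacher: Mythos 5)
Your proof is correct and follows essentially the same route as the paper's: an entrywise comparison after splitting $\calL_p^\Gamma$ into its upper and lower components, identifying the off-diagonal entries with $-s(\sigma,\tau)$ and the crucial point being the cancellation of the two sign contributions when $\sigma$ and $\tau$ are simultaneously upper- and lower-adjacent. The only difference is cosmetic: you derive that cancellation from $\partial_p\partial_{p+1}=0$ applied to the interval $[\sigma\cap\tau,\sigma\cup\tau]$ of the face poset, whereas the paper verifies the same identity by writing out the incidence signs explicitly as $(-1)^i(-1)^{j-1}+(-1)^i(-1)^j=0$; your version is a cleaner way to see the same fact.
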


An important aspect to consider here is the input model. Some families of abstract simplicial complexes, such as the clique complex, are large combinatorial structures that admit a compact description of $\order{\poly n}$ bits, where $n$ is the number of vertices. Here, we consider sparse graphs with $N < 2^n$ vertices that allow for a similarly compact description. These graphs are provided through a classical circuit that enables sparse access to the graph in $\order{\poly n}$ time steps.

\begin{restatable}[Marked sparse access for signed graphs]{definition}{restatabledefsparsesigned}\label{def:bounded_degree_signed}
Let $G_s = (V, E, s)$ be a signed graph having $N \le 2^n - 1$ vertices, $V \subseteq [2^n] \setminus \{0\}$, and where each vertex has at most $S \in \order{\poly n}, S \ge 2$ neighbors. Let $\operatorname{adj}: [2^n] \times [S] \to [2^n]$ be the mapping such that
\begin{equation}
    \operatorname{adj}(i, \ell) = \begin{cases}
        j, & i \in V \text{ and } j \text{ is the } \ell\text{-th neighbor of } i \\
        0, & i \in V \text{ and } \ell \ge \deg(i)\\
        \ell, & i \not\in V \\
    \end{cases}.
\end{equation}
A sparse access for $G_s$ is given by the pair of oracles $O_\text{adj}, O_\text{sign}$,
\begin{align}
    O_\text{adj} \ket{i} \ket{\ell} & = \ket{\operatorname{adj}(i, \ell)} \ket{\ell}, \nonumber \\
    O_\text{sign} \ket{i} \ket{\ell} \ket{z} & = \ket{i} \ket{\ell} \ket{z \oplus s(\{i, \operatorname{adj}(i, \ell)\})}. \nonumber
\end{align}
\end{restatable}

Here, the bitstring $0 \not\in V$ serves as a placeholder indicating the end of the adjacency list. The sparse access returns a list of all zeros for any isolated vertex $i \in V$, and a marked list $[0, ..., S-1]$ for any bitstring $i \not\in V$. 

Note that this approach provides a natural way to define oracle access to the adjacency list of the graph. The only difference between our definition and that of \textcite{goldreich1997property} is that we explicitly encode the set of vertices as a set of \( n \)-bit natural numbers, \( V \subseteq [2^n] \setminus \{0\} \), with \( V \) not necessarily equal to \( \{1, \ldots, N\} \). This distinction plays an important role in proving our results. However, also we demonstrate that our results hold under the traditional input model for sparse graphs. Notably, for other tasks, the equivalence between these two models may not hold.

We prove that the task of testing the presence of balanced connected components of a signed graph in the marked sparse access is at least as hard as testing homologies in a clique complex. A balanced component contains no cycles with an odd number of negative edges. This is obtained via a reduction from the \prbm{homology} problem in \textcite{crichigno2022clique}. 

\begin{restatable}[\prbm{sparse balancedness}]{problem}{restatablesparsebalancedness} \phantom{.} \\
\emph{Input}: A signed graph $G_s$ with $N \le 2^n - 1$ vertices such that
\begin{itemize}
    \setlength\itemsep{0em}
    \item the graph is sparse, i.e., there is an upper bound $S \in \order{\poly n}$ on the degree of the vertices;
    \item the graph is given as a classical circuit of size $\order{\poly n}$ implementing the marked sparse access $(O_\text{adj}, O_\text{sign})$.
\end{itemize} 
\emph{Output}: \yes{} if $G_s$ has a balanced component, \no{} otherwise.
\end{restatable}

\begin{restatable}{theorem}{restatabletheoremone}\label{thm:thm_1}
\prbm{sparse balancedness} is \qmaone{}-hard.
\end{restatable}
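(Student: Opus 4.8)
The plan is to give a polynomial-time many-one reduction from the clique-homology problem shown \qmaone{}-hard by \textcite{crichigno2022clique}: given an explicitly specified graph $G$ on $m$ vertices and an integer $k$, decide whether the $k$-th (rational) Betti number $\beta_k$ of $\clique(G)$ is nonzero. Since $\dim \ker \calL_k^{\clique(G)} = \beta_k$ by Hodge theory for the combinatorial Hodge Laplacian, this is the same as deciding whether $\ker \calL_k^{\clique(G)} \ne \{0\}$. Given such an instance I set $p := k$ and take $G_s := G_s(\clique(G),p)$, the signed graph of Proposition~\ref{prop:full_laplacian}. As $\calL_p^{\clique(G)}$ and $\calL^{G_s}$ are unitarily equivalent, they have equal kernel dimension, so $\clique(G)$ has a $p$-dimensional hole iff $\calL^{G_s}$ is singular; and by the standard fact of signed-graph spectral theory that a connected signed graph is balanced precisely when $0$ is an eigenvalue of its Laplacian — so that $\dim \ker \calL^{G_s}$ counts the balanced connected components of $G_s$ — this holds iff $G_s$ has a balanced component. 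Hence $(G,k)\mapsto G_s$ preserves the yes/no answer, and the remaining work is to realize this map efficiently in the marked sparse-access model.

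For that I would first fix the encoding. A $p$-simplex of $\clique(G)$ is a $(p+1)$-clique of $G$; I identify it with its $0/1$ indicator string in $\{0,1\}^m \setminus \{0^m\}$, so the bit count is $n := m$, the vertex set lies in $[2^n]\setminus\{0\}$ as required (the all-zeros string would be the empty set, which is not a $p$-simplex), and $|V_s| = |\Sigma_p| < 2^n$. For sparsity, note that for distinct $p$-simplices upper-adjacency forces lower-adjacency, so the neighbours of $\sigma$ in $G_s$ are exactly the sets $(\sigma\setminus\{a\})\cup\{b\}$ with $a \in \sigma$, $b \notin \sigma$ that are again cliques of $G$; hence $\deg_{G_s}(\sigma) \le (p+1)(m-p-1) < m^2 = \order{\poly n}$, and $G_s$ is sparse with a polynomial degree bound $S$.

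It then suffices to build $\order{\poly m}$-size classical circuits computing $\operatorname{adj}$ and $s$. For $O_\text{adj}$: on input $(i,\ell)$, first test in $\order{m^2}$ edge-queries to $G$ whether $i$ is the indicator of a $(p+1)$-clique; if not, output $\ell$ (the marked list $[0,\dots,S-1]$); if so, iterate over the candidate pairs $(a,b)$ in lexicographic order, discard those for which $(\sigma\setminus\{a\})\cup\{b\}$ is not a clique of $G$, and return the indicator of the $\ell$-th surviving candidate, or $0^m$ when $\ell$ exceeds the degree. For $O_\text{sign}$: compute $\tau = \operatorname{adj}(i,\ell)$ and then evaluate $s(\sigma,\tau)$ from the closed form in Proposition~\ref{prop:full_laplacian} — the incidence signs $\sng(\cdot,\cdot)$ of two explicitly listed simplices are $\pm 1$ powers determined by the positions of the inserted and deleted vertices, hence $\poly(m)$-time computable, with the term attached to $\sigma\cup\tau$ taken to be $0$ whenever $\sigma\cup\tau$ is not a simplex of $\clique(G)$. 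Turning these deterministic procedures into reversible circuits is routine, so $(G,k)\mapsto (O_\text{adj},O_\text{sign})$ runs in time $\poly(m)$; combined with the first paragraph this is the claimed polynomial-time reduction to \prbm{sparse balancedness}.

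The step I expect to be the main obstacle is exactly this efficiency argument, and with it the reason the \emph{marked} input model (arbitrary $V \subseteq [2^n]\setminus\{0\}$ rather than $V = \{1,\dots,N\}$) is essential: the indicator strings of the $(p+1)$-cliques of $G$ form a scattered, non-contiguous subset of $\{0,1\}^m$, and compressing them onto $\{1,\dots,N\}$ would require counting cliques of $G$, which we have no reason to believe is tractable — the marked access lets us leave the vertices of $G_s$ where they naturally sit. By contrast, the spectral/combinatorial chain of the first paragraph is essentially bookkeeping once Proposition~\ref{prop:full_laplacian} and the balanced-component characterization are granted, though one should still check that characterization for the graphs that actually arise here (in particular the treatment of isolated $p$-simplices, and the convention pinning down the diagonal of $\calL^{G_s}$).
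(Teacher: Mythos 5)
Your proposal is correct and follows the paper's own proof essentially step for step: the same reduction from \prbm{clique homology} via Proposition~\ref{prop:full_laplacian}, Proposition~\ref{thm:friedman} and the balanced-kernel characterization, the same indicator-bitstring encoding of $p$-simplices with the $(p+1)(n-p-1)$ degree bound, and the same pair of classical circuits for the oracles (the only cosmetic difference being that the paper returns the candidate $(\sigma\setminus\{a\})\cup\{b\}$ from $O_\text{adj}$ and defers the clique check to $O_\text{sign}$, whereas you filter inside $O_\text{adj}$). The caveats you flag at the end (isolated $p$-simplices, the diagonal convention) are legitimate but are exactly what the paper absorbs into Proposition~\ref{prop:full_laplacian} and the block-diagonal decomposition of $\calL^{G_s}$, so no new idea is missing.
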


Notably, a \emph{promise} variant of \prbm{sparse balancedness} is contained in \class{QMA}.

\begin{restatable}[\prbm{promise sparse balancedness}]{problem}{restatablepromisesparsebalancedness} \phantom{.} \\
\emph{Input}: A signed graph $G_s$ with $N \le 2^n - 1$ vertices such that
\begin{itemize}
    \setlength\itemsep{0em}
    \item the graph is sparse, i.e., there is an upper bound $S \in \order{\poly n}$ on the degree of the vertices;
    \item the graph is given as a classical circuit of size $\order{\poly n}$ implementing the marked sparse access $(O_\text{adj}, O_\text{sign})$.
\end{itemize} 
\emph{Promise}: the smallest eigenvalue of $\calL^{G_s}$ is either $0$ or $\ge \delta \in \Omega(1/\poly n)$ \\[0.5em]
\emph{Output}: \yes{} if $G_s$ has a balanced component, \no{} otherwise.
\end{restatable}

\begin{restatable}{proposition}{restatabletheoremoneqma}\label{prop:thm_1_qma}
\prbm{promise sparse balancedness} is contained in \class{QMA}.
\end{restatable}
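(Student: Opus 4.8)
The plan is to decide \prbm{promise sparse balancedness} with a \class{QMA} protocol whose witness is the ground state of an efficiently simulable embedding of the signed Laplacian and whose verifier estimates the ground energy by quantum phase estimation. First I would record the relevant spectral fact: $\calL^{G_s}=D-A_s$ is positive semidefinite because $\calL^{G_s}=\sum_{\{u,v\}\in E}(\mathbf{e}_u - s(u,v)\,\mathbf{e}_v)(\mathbf{e}_u - s(u,v)\,\mathbf{e}_v)^{\mathsf T}$, so $x\in\ker\calL^{G_s}$ iff $x_u=s(u,v)\,x_v$ along every edge; on a connected component this has a nonzero solution exactly when the component is balanced (gauge-fix $x$ at one vertex and propagate --- consistency around cycles is balance), hence $\dim\ker\calL^{G_s}$ equals the number of balanced components. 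Therefore $G_s$ is a \yes{} instance iff $\lambda_{\min}(\calL^{G_s})=0$, while under the promise a \no{} instance has $\lambda_{\min}(\calL^{G_s})\ge\delta\in\Omega(1/\poly n)$.

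Next I would convert the marked sparse access into something the verifier can simulate. From the size-$\order{\poly n}$ circuit implementing $(O_\text{adj},O_\text{sign})$ one builds, by the standard translation of (marked) sparse access into a Hermitian operator together with sparse-Hamiltonian simulation (qubitization / LCU), a circuit of size $\poly(n,1/\varepsilon)$ that approximates $e^{-iHt}$ to precision $\varepsilon$ for any $t=\order{1/\|H\|}$, where $H\succeq 0$ acts on $n$ qubits, $\|H\|\in\order{\poly n}$, and $H$ restricted to $\operatorname{span}\{\ket{i}:i\in V\}$ equals $\calL^{G_s}$. The purpose of the marked entries in Definition~\ref{def:bounded_degree_signed} is to supply a spectral floor for the rest: $\lambda_{\min}(H)=0$ when $\calL^{G_s}$ is singular, and $\lambda_{\min}(H)\ge\min(\delta,c)$ with $c\in\Omega(1/\poly n)$ when $\lambda_{\min}(\calL^{G_s})\ge\delta$. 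Together with the previous paragraph, $\lambda_{\min}(H)=0$ for \yes{} instances and $\lambda_{\min}(H)\ge\delta'\in\Omega(1/\poly n)$ for \no{} instances.

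The verifier then works as follows. The witness is the purported ground state $\ket{\psi}$ of $H$ on $n$ qubits. Fix $t=\Theta(1/\|H\|)$ small enough that every eigenvalue $\lambda$ of $H$ satisfies $\lambda t\in[0,\pi)$, run phase estimation of $e^{-iHt}$ on $\ket{\psi}$ to $b=\order{\log(\|H\|/\delta')}=\order{\log n}$ bits using the simulation above with $\varepsilon$ an inverse polynomial, and accept iff the returned estimate of $\lambda$ lies below $\delta'/2$. Completeness: on a \yes{} instance an honest prover sends an exact $0$-eigenvector of $H$, phase estimation returns $0$, and the verifier accepts with probability $\ge 1-1/\poly n$. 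Soundness: on a \no{} instance every state has energy $\ge\delta'$ and the phase resolution $2\pi/2^{b}$ (in the $\lambda t$ scale) is below $\delta' t/4$, so for any witness the estimate exceeds $\delta'/2$ with probability $\ge 1-1/\poly n$, i.e.\ the verifier rejects. A constant number of independent repetitions absorbs the simulation and phase-estimation errors and yields a constant completeness--soundness gap; all registers and gate counts are $\poly(n)$, so this is a valid \class{QMA} protocol. (Two-sided error is intrinsic here, which is why we get \class{QMA} rather than \qmaone{}.)

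The main obstacle is establishing the spectral floor in the second step: one must check that the operator extracted from the \emph{marked} sparse access is genuinely Hermitian and positive semidefinite and that no eigenvector with support on the placeholder bitstring $0$ or on bitstrings $i\notin V$ has eigenvalue $o(1/\poly n)$, since otherwise a cheating prover could exploit such a spurious near-kernel vector to be accepted on a \no{} instance. Showing that the marking device of Definition~\ref{def:bounded_degree_signed} forces $\lambda_{\min}(H)$ to match $\lambda_{\min}(\calL^{G_s})$ up to an inverse-polynomial floor is the heart of the argument; the remaining work --- interleaving the error budgets of sparse Hamiltonian simulation, phase-estimation discretization, and amplification so that completeness and soundness stay separated by a constant while everything remains polynomial --- is routine.
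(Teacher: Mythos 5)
Your proposal is correct and follows essentially the same route as the paper: the witness is the alleged ground state, the verifier runs phase estimation on an $n$-qubit embedding of $\calL^{G_s}$ whose restriction to $\operatorname{span}\{\ket{i}:i\in V\}$ is the signed Laplacian and whose remaining eigenvalues are bounded away from zero, and the promise supplies the inverse-polynomial completeness--soundness gap. The one obligation you flag as ``the heart of the argument'' is discharged in the paper exactly as you anticipate: it defines $H=\calL^{G_s}+\sum_{i\notin V}\ketbra{i}$ (so $\ker H=\ker\calL^{G_s}$ and the floor on the complement is $1$ before normalization by $2S$) and shows the marked access lets a block-encoding detect $i\notin V$ in $\order{1}$ oracle calls, so that step is constructive rather than a remaining difficulty.
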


The second part of this work studies a construction proposed by \textcite{zaslavsky2018negative} that connects signed and unsigned graphs, specifically, with respect to the properties of balancedness and bipartiteness. For any signed graph $G_s = (V, E, s)$, there exists an unsigned graph $G' = (V', E')$ of comparable size such that $G_s$ has a balanced component if and only if $G'$ has a bipartite component. A bipartite component contains no cycles of odd length.
We prove that when the input is restricted to signed graphs with sparse access, these guarantees hold for the output of the construction as well. A corollary of these results is the following statement regarding bipartiteness.

\begin{restatable}[\prbm{sparse bipartitedness}]{problem}{restatablesparsebipartitedness} \phantom{.} \\
\emph{Input}: An unsigned graph $G$ with $N \le 2^n - 1$ vertices such that
\begin{itemize}
    \setlength\itemsep{0em}
    \item the graph is sparse, i.e., there is an upper bound $S \in \order{\poly n}$ on the degree of the vertices;
    \item the graph is given as a classical circuit of size $\order{\poly n}$ implementing the marked sparse access $O_\text{adj}$.
\end{itemize} 
\emph{Output}: \yes{} if $G$ has a bipartite component, \no{} otherwise.
\end{restatable}

\begin{restatable}{theorem}{restatabletheoremtwo}\label{thm:thm_2}
\prbm{sparse bipartitedness} is \qmaone{}-hard.
\end{restatable}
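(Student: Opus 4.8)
The plan is to establish Theorem~\ref{thm:thm_2} by exhibiting a polynomial-time mapping reduction from \prbm{sparse balancedness}, which is \qmaone{}-hard by Theorem~\ref{thm:thm_1}. Given a sparse signed graph $G_s = (V, E, s)$ presented by an $\order{\poly n}$-size classical circuit for its marked sparse access $(O_\text{adj}, O_\text{sign})$, I would produce an unsigned graph $G'$ together with an $\order{\poly n}$-size classical circuit for \emph{its} marked sparse access $O'_\text{adj}$, such that $G_s$ has a balanced component if and only if $G'$ has a bipartite component. For $G'$ I would use the standard balance-to-bipartiteness construction that subdivides each positively signed edge exactly once: the vertices of $G'$ are the vertices of $V$ together with one fresh vertex $w_{\{u,v\}}$ per positive edge $\{u,v\} \in E$; every negative edge $\{u,v\}$ is kept as an unsigned edge $\{u,v\}$, and every positive edge $\{u,v\}$ is replaced by the path $u - w_{\{u,v\}} - v$ of length two.

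Correctness rests on two elementary facts about subdivision. First, the connected components of $G'$ are in bijection with those of $G_s$ (each subdivision vertex lies in the component of the edge it subdivides), and the simple cycles of $G'$ are in bijection with those of $G_s$, because a subdivision vertex has degree $2$ and is therefore used ``all or nothing'' by a simple cycle: a cycle of $G_s$ traversing $k$ negative and $\ell$ positive edges corresponds to a cycle of $G'$ of length $k + 2\ell$. Second, a component is balanced precisely when it has no cycle with an odd number of negative edges, while a component is bipartite precisely when it has no cycle of odd length; since $k + 2\ell \equiv k \pmod 2$, a cycle of $G'$ is odd exactly when the corresponding cycle of $G_s$ has an odd number of negative edges. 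Combining the two, a component of $G_s$ is balanced if and only if the matching component of $G'$ is bipartite, hence $G'$ has a bipartite component iff $G_s$ has a balanced one. All boundary cases are consistent (an isolated vertex, a tree, or a single negative edge is balanced and maps to a bipartite component; a lone positive edge maps to a path of length two).

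It remains to verify that $G'$ inherits \emph{efficiently implementable} marked sparse access. I would encode the vertices of $G'$ as bitstrings of length $n' = 2n + \order{1}$: a tag bit separates an ``original'' vertex $v \in V$ from a ``subdivision'' vertex, the latter stored as the ordered pair $(\min\{u,v\}, \max\{u,v\})$ of endpoints of its positive edge; this keeps $V' \subseteq [2^{n'}] \setminus \{0\}$ with at most $2^{n'}-1$ vertices. The degree bound survives: every original vertex keeps its degree, since each edge at it contributes exactly one edge of $G'$, and every subdivision vertex has degree $2$, so $G'$ is sparse with degree bound $S' = \max\{S,2\} = S \in \order{\poly n}$. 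The circuit for $O'_\text{adj}$ is then built from $O_\text{adj}$ and $O_\text{sign}$ by case analysis on a query $(i, \ell)$: if $i$ encodes an original vertex $v$, decide $v \in V$ using a constant number of calls to $O_\text{adj}$ (for instance, $i \notin V$ holds iff $\operatorname{adj}(i,0) = 0$ and $\operatorname{adj}(i,1) = 1$); if $v \notin V$ output the marked list $\ell \mapsto \ell$, otherwise compute $j = \operatorname{adj}(v,\ell)$ and, when $j \neq 0$, the sign $s(\{v,j\})$, returning (the encoding of) $j$ if the edge is negative, the encoding of $w_{\{v,j\}}$ if it is positive, and $0$ once $\ell \ge \deg(v)$. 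If $i$ encodes a pair $(a,b)$, verify in $\order{\poly n}$ steps that $\{a,b\}$ is indeed a positive edge of $G_s$ (scan the at most $S$ neighbors of $a$ for $b$ and check its sign); if so return $a$ for $\ell = 0$, $b$ for $\ell = 1$ and $0$ afterwards, otherwise declare $i \notin V'$ and return the marked list. Any remaining string is declared outside $V'$. Since $O_\text{adj}$ and $O_\text{sign}$ are $\order{\poly n}$-size circuits, so is $O'_\text{adj}$, and a description of it can be produced from a description of $(O_\text{adj}, O_\text{sign})$ in time $\order{\poly n}$. Composing this reduction with the \qmaone{}-hardness of \prbm{sparse balancedness} (which is in turn reduced from \prbm{homology}) proves Theorem~\ref{thm:thm_2}.

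I expect the main obstacle to be the oracle bookkeeping rather than the combinatorics. Implementing the \emph{marked} behavior of $O'_\text{adj}$ faithfully means it must output the list $\ell \mapsto \ell$ on exactly the strings that are not vertices of $G'$; this forces one to decide, using only the given oracles and $\order{\poly n}$ queries, membership in $V$ and whether a candidate pair genuinely encodes a positive edge of $G_s$, and to get these tests right in all boundary cases (isolated vertices, vertices whose neighbor labels happen to be small integers, strings with malformed tags, and so on). Carefully pinning down these membership predicates is the delicate part of the proof.
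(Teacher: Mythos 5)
Your proposal is correct and follows essentially the same route as the paper: a reduction from \prbm{sparse balancedness} via the negative-subdivision construction (replace each positive edge by a length-two path), together with an explicit $\order{\poly n}$ implementation of the marked sparse access for the subdivided graph. The only cosmetic differences are that you prove the balance-to-bipartiteness correspondence directly via cycle parity where the paper cites Zaslavsky's Proposition 2.2, and you encode each subdivision vertex canonically as the ordered pair of its endpoints rather than as an (endpoint, adjacency-list-index) offset as the paper does.
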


Similarly to the previous case, we can identify a promise variant of \prbm{sparse bipartitedness} that is contained in \class{QMA}. It is not straightforward to study the presence of bipartite components in relation to the spectral properties of the graph Laplacian operator. \textcite{trevisan2017lecture} shows the largest eigenvalue of a certain normalized variant of the graph Laplacian is exactly $2$ if the graph has a bipartite component. However, another characterization exists stated in terms of the signless Laplacian operator $\calQ^G$, a modification of the graph Laplacian. Specifically, the dimensionality of the kernel of the signless Laplacian of an unsigned graph corresponds to the number of its connected bipartite components \cite{cvetkovic2007signless}. Consequently, checking the non-triviality of this kernel serves as a test for the presence of bipartite components.

\begin{restatable}[\prbm{promise sparse bipartitedness}]{problem}{restatablepromisesparsebipartitedness} \phantom{.} \\
\emph{Input}: An unsigned graph $G$ with $N \le 2^n - 1$ vertices such that
\begin{itemize}
    \setlength\itemsep{0em}
    \item the graph is sparse, i.e., there is an upper bound $S \in \order{\poly n}$ on the degree of the vertices;
    \item the graph is given as a classical circuit of size $\order{\poly n}$ implementing the marked sparse access $O_\text{adj}$.
\end{itemize} 
\emph{Promise}: the smallest eigenvalue of $\calQ^{G}$ is either $0$ or $\ge \delta \in \Omega(1/\poly n)$ \\[0.5em]
\emph{Output}: \yes{} if $G$ has a bipartite component, \no{} otherwise.
\end{restatable}

\begin{restatable}{proposition}{restatabletheoremtwoqma}\label{thm:thm_2_promise}
\prbm{promise sparse bipartitedness} is contained in \class{QMA}.
\end{restatable}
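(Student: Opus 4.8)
The plan is to recognize \prbm{promise sparse bipartitedness} as a gapped low‑energy decision problem for an efficiently simulable sparse Hamiltonian and to settle it with the textbook \class{QMA} protocol, paralleling the argument for Proposition~\ref{prop:thm_1_qma}. The starting point is the spectral characterization of bipartiteness via the signless Laplacian $\calQ^G = D^G + A^G$: it is positive semidefinite since $\calQ^G = \sum_{\{i,j\}\in E}(e_i+e_j)(e_i+e_j)^{\top}$, and by~\cite{cvetkovic2007signless} the multiplicity of its zero eigenvalue equals the number of connected bipartite components of $G$. Hence $G$ has a bipartite component if and only if $\lambda_{\min}(\calQ^G)=0$, and under the promise the only alternative is $\lambda_{\min}(\calQ^G)\ge\delta\in\Omega(1/\poly n)$. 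So it suffices to decide whether the smallest eigenvalue of the PSD operator $\calQ^G$ is $0$ or at least $\delta$; the \class{QMA} witness will be (claimed to be) a ground state of $\calQ^G$, for instance the normalized $\pm1$ indicator of a bipartite component, which is an exact kernel vector in the \yes{} case.

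Next I would turn the marked sparse access $O_\text{adj}$ into efficient sparse access to an $n$‑qubit Hamiltonian $H$ with $\lambda_{\min}(H)=\lambda_{\min}(\calQ^G)$ and $\|H\|\in\order{\poly n}$. Directly, row $i$ of $\calQ^G$ has at most $S+1$ nonzero entries — a $1$ for each neighbor of $i$ and the diagonal value $\deg(i)$, all in $\{0,1,\dots,S\}$ — and a $\order{\poly n}$‑size classical circuit computes them by enumerating the adjacency list of $i$ through at most $S$ queries to $O_\text{adj}$ and counting its length. To promote $\calQ^G$, which acts on $\CC^V$ with $|V|$ not a power of two, to an operator on $\CC^{2^n}$, I use the fact that the marked convention makes membership testing cheap: a bitstring $i$ lies outside $V$ precisely when $\operatorname{adj}(i,0)=0$ and $\operatorname{adj}(i,1)=1$, since a genuine vertex — isolated or not — cannot produce an adjacency list beginning $[0,1,\dots]$ because $0\notin V$; thus $i\in V$ is decidable with $O(1)$ queries. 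I then set $H$ to be block diagonal, equal to $\calQ^G$ on $\CC^V$ and to $2S\cdot I$ on the span of the non‑vertex bitstrings; since $\calQ^G$ is PSD with $\|\calQ^G\|\le 2\max_i\deg(i)\le 2S$ and $\delta\le 2S$ (otherwise the instance is trivial), this preserves both $\lambda_{\min}=0$ in the \yes{} case and $\lambda_{\min}\ge\delta$ in the \no{} case, and $H$ inherits $\order{\poly n}$ sparsity, efficiently computable entries, and norm at most $2S$. I expect the main obstacle to be precisely this bookkeeping around the marked input model — verifying that the non‑vertex bitstrings can be detected and quarantined so that the $n$‑qubit Hamiltonian one actually simulates faithfully reflects both the non‑triviality of $\ker\calQ^G$ and the $\Omega(1/\poly n)$ spectral gap of $\calQ^G$.

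Finally I would run the standard verification. Given the classical circuit for $O_\text{adj}$, the verifier builds the circuit for the rows of $H$, simulates the evolution $e^{-iHt}$ — efficient by standard sparse‑Hamiltonian simulation, as $H$ is $\order{\poly n}$‑sparse with efficiently computable entries and $\|H\|\le 2S$ — and applies phase estimation on the received witness to estimate $\bra{\psi}H\ket{\psi}$ to additive precision $\delta/3$, using $\order{\poly n}$ ancilla qubits and gates; it accepts iff the estimate is below $\delta/2$. Completeness holds because a true kernel vector has energy $0$ and is estimated below $\delta/3<\delta/2$; soundness holds because in the \no{} case the witness decomposes over eigenstates all of energy at least $\delta$, so the estimate exceeds $2\delta/3>\delta/2$ except with small probability, which is boosted by standard amplification. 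This places \prbm{promise sparse bipartitedness} in \class{QMA}.
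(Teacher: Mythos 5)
Your proposal is correct and follows essentially the same route as the paper: characterize bipartite components via the kernel of the signless Laplacian, pad the non-vertex bitstrings with a positive-definite block so that $\ker H = \ker \calQ^G$ on the full $2^n$-dimensional space, and run phase estimation on Merlin's witness with precision resolving the promised $\Omega(1/\poly n)$ gap. The only differences are cosmetic (you penalize the complement with $2S\cdot I$ where the paper uses $\mathbb{I}_{\lnot V}$ before normalizing by $\alpha = 2S$, and you invoke sparse-Hamiltonian simulation where the paper constructs an explicit block-encoding), neither of which changes the argument.
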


\subsection{Related works}

\paragraph{Hardness of Clique Homology}

Our work is grounded in recent advancements in quantum algorithms for topological data analysis. The pioneering study by \textcite{lloyd2016quantum} introduced the first quantum algorithm for estimating a quantity related to the topological invariant known as Betti numbers. This foundational work has been followed by several improvements~\cite{mcardle2022streamlined, berry2024analyzing}. The hardness of estimating Betti numbers, and related proxy measures, has been explored in~\cite{gyurik2022towards, schmidhuber2023complexity}. Notably, \textcite{cade2024complexity} demonstrated that estimating the Betti numbers of an abstract simplicial complex is \qmaone{}-hard. This result was subsequently refined by \textcite{crichigno2022clique}, who restricted the input to clique complexes, which can be succinctly described in polynomial size in to the number of vertices. Building on this, \textcite{king2023promise} proved that the promise variant of the clique homology problem is both \class{QMA} and \qmaone{}-hard when weighted cliques are allowed. \textcite{gyurik2024quantum} have proven that a variant of computing persistence is \class{BQP_1}-hard and in \class{BQP}.

\paragraph{Stoquastic Hamiltonians}

The Laplacian of an unsigned graph is an example of a stoquastic Hamiltonian. Stoquasticity occurs when the operator admits a matrix representation with all its off-diagonal elements non-positive. This allows us to characterize the ground state via the Perron-Frobenius theorem. Consequently, certain tasks over stoquastic Hamiltonians are easier compared to non-stoquastic ones. Examples include the Hamiltonian simulation and determination of the ground state energy, which \textcite{bravyi2006complexity, bravyi2006merlin} have shown to lie between the classes \class{MA} and \class{QMA}. These form a new class denoted as \class{StoqMA}. The local stoquastic Hamiltonian problem, i.e., determining whether the ground state energy is below a constant \( a \) or above a constant \( b = a + \Omega(1/\poly n) \), is \class{StoqMA}-complete. Further restrictions, such as imposing \( a = 0 \) and \( b = \Omega(1) \), belong to \class{NP}~\cite{aharonov2019stoquastic}. \textcite{jordan2010quantum} has shown that estimating the energy of excited states for stoquastic Hamiltonians is significantly harder than estimating its ground state, in particular, it is \class{QMA}-complete. We remark that testing for the presence of bipartite components in an unsigned graph does not appear to be derivable from the kernel of the Laplacian. Instead, a possible spectral characterization of bipartite components can be obtained by checking whether the largest eigenvalue of a certain normalized variant of the graph Laplacian is exactly 
$2$ (cf. \textcite[Theorem 3.2, Section 4.1]{trevisan2017lecture}). The presence of bipartite component can be also stated as testing the non-triviality of $\ker\calQ^G$, where $\calQ^G$ is the signless Laplacian operator.

\paragraph{Comparison with \textcite{childs2014bose}}

Our work is related to the results in \textcite[Appendix A]{childs2014bose}. In this study, the authors demonstrate that the task of estimating the smallest eigenvalue of an adjacency matrix (symmetric with coefficients in \(\{0, 1\}\)) is \class{QMA}-complete. This result holds if the underlying graph of the adjacency matrix admits a sparse access, and such can be efficiently implemented. In contrast, our work focuses specifically on the Laplacian operator. The two operators are connected non-trivially; although a property stated in terms of the adjacency matrix can often be translated to the Laplacian and vice versa, the connection is not always straightforward. Notably, the low-energy eigenspace of the adjacency matrix intuitively corresponds to the high-energy eigenspace of the Laplacian: this connection is clear only for \(d\)-regular graphs (where each vertex has \(d\) neighbors) for which \(\calL^G = d \mathbb{I} - \mathcal{A}\), where $\mathcal{D} = d\mathbb{I}$ is the diagonal degree matrix, and thus the eigenvalue \(\lambda\) of the Laplacian and \(\mu\) of the adjacency matrix are related by \(\lambda = d - \mu\). In general, this does not hold. A possible connection with our work may arise from the use of the signless Laplacian for the bipartiteness testing. That is because $\calQ^G$ can be stated as $\calQ^G = \mathcal{D} + \mathcal{A}$. This paves an alternative path of proving Theorem \ref{thm:thm_2} starting from the results in \textcite{childs2014bose}. However, such a path is non-trivial as relating the eigenvalues of $\calQ^G$ and $\mathcal{A}$ is challenging as is the case for $\calL^G$ and $\mathcal{A}$.

\subsection{Organization of the paper}

The remainder of this paper is devoted to proving Theorems~\ref{thm:thm_1} and \ref{thm:thm_2}.

In Section~\ref{sec:definition}, we review some fundamental concepts, with a particular emphasis on the input model for sparse graphs used throughout the document.

In Section~\ref{sec:balance}, we prove that the \prbm{sparse balancedness} task is \qmaone{}-hard by characterizing the homology of a clique complex in terms of the balanced components of a sparse signed graph. This leads to a reduction from the \prbm{clique homology} problem to \prbm{sparse balancedness}. We prove a promise variant of \prbm{sparse balancedness} is contained in \class{QMA}.

In Section~\ref{sec:bipartite}, we prove that the \prbm{sparse bipartitedness} task is \qmaone{}-hard by using a characterization of the balanced components of a signed graph in terms of the bipartition of an unsigned graph of comparable size. This results in a reduction from the \prbm{sparse balancedness} problem to \prbm{sparse bipartitedness}. We prove a promise variant of \prbm{sparse bipartitedness} is contained in \class{QMA}.

\subsection{Future directions}

\begin{enumerate}
    \item \emph{Do the \prbm{sparse balancedness} and \prbm{sparse bipartitedness} problems remain \qmaone{}-hard even for connected graphs?} \\
    A consequence of the reduction from \prbm{clique homology} to \prbm{sparse balancedness} is that the signed graph, which is constructed via Proposition \ref{prop:full_laplacian}, can have many connected components. Showing that \prbm{sparse balancedness} is \qmaone{}-hard for connected graphs would provide a more natural formulation of the problem, specifically, asking if the entire graph is balanced rather than merely having a balanced component. Such a result would immediately extend to \prbm{sparse bipartitedness} being \qmaone{}-hard for connected graphs and could potentially lead to novel results in characterizing properties of sparse unsigned graphs. 

    \item \emph{Is it equivalent to express a graph in the marked sparse access model and in the traditional sparse access model for \prbm{sparse balancedness} and \prbm{sparse bipartitedness} restricted to connected graphs?} \\
    The procedure that converts a graph from the marked sparse access model to the traditional sparse access model, a key component in proving that \prbm{sparse balancedness} and \prbm{sparse bipartitedness} have the same complexity under both input models, does not preserve the number of connected components. If we restrict \prbm{sparse balancedness} and \prbm{sparse bipartitedness} to connected graphs, is having the graph in the traditional sparse access model equivalent to having it in the marked sparse access model, or is the latter model, in some sense, more powerful?
    
    \item \emph{Is there a promise variant of \prbm{sparse balancedness} and \prbm{sparse bipartitedness} that is both \qmaone{}-hard and \class{QMA}?} \\
    A possible direction could involve \prbm{gapped clique homology}, which has been shown to be both \qmaone{}-hard and in \class{QMA} only for \emph{weighted cliques} in \textcite{king2023promise}, while the \emph{unweighted} case has yet to be proven. If such a result were established, we could use our construction to immediately obtain promise variants of \prbm{sparse balancedness} and \prbm{sparse bipartiteness}.

    \item \emph{Can we approximate a graph given as adjacency list with one in the marked sparse access while preserving certain properties such as it being balanced?} \\
    We focus on families of graphs that can be succinctly described by circuits capable of generating these representations. An intriguing open problem is determining whether we can extend the class of graphs that can be effectively analyzed in this way. One potential approach might involve graph sparsification techniques, as discussed in \cite{herbert2019spectral, apers2022quantum}. These techniques could help preserve properties such as balancedness and bipartiteness, while converting graphs to an efficiently implemented marked sparse access.
\end{enumerate}

\section{Definitions}\label{sec:definition}

\subsection{\qmaone{} complexity class}
We begin by recalling the definitions of the \class{QMA} and \qmaone{} decision classes.

\begin{definition}{(\class{QMA})}
Let $A = (A_\text{yes}, A_\text{no})$ be a promise problem. Then, $A \in$ \class{QMA} if there exists a polynomial time quantum verifier $V$ and polynomial $p$ such that for every input $x \in \{0, 1\}^n$:
\begin{itemize} 
    \item If $x \in A_\text{yes}$, there exists a witness state $\ket{w}$ over $p(n)$ qubits such that $V(x, \ket{w})$ accepts with probability $\ge 2/3$.
    \item If $x \in A_\text{no}$, for any witness state $\ket{w}$ over $p(n)$ qubits we have that $V(x, \ket{w})$ accepts with probability $\le 1/3$.
\end{itemize} 
\end{definition}

\begin{definition}{(\qmaone{})}
Let $A = (A_\text{yes}, A_\text{no})$ be a promise problem. Then, $A \in$ \class{QMA} if there exists a polynomial time quantum verifier $V_x$ composed of gates from a fixed universal set and polynomial $p$ such that for every input $x \in \{0, 1\}^n$:
\begin{itemize} 
    \item If $x \in A_\text{yes}$, there exists a witness state $\ket{w}$ over $p(n)$ qubits such that $V(x, \ket{w})$ accepts with probability $1$.
    \item If $x \in A_\text{no}$, for any witness state $\ket{w}$ over $p(n)$ qubits we have that $V(x, \ket{w})$ accepts with probability $\le 1/3$.
\end{itemize} 
\end{definition}

Problems that are \class{QMA}-hard and \qmaone{}-hard are believed to be intractable for both classical and quantum computers. However, some modified variants of these tasks have been found to be tractable on a quantum computer while remaining intractable on classical ones \cite{cade_et_al}.

The key distinction between \class{QMA} and \qmaone{} lies in the \emph{completeness} condition, i.e. the probability that the verifier accepts a valid input. Specifically, \qmaone{} is characterized by a \emph{perfect completeness}; if the input is valid, the verifier always accepts. This aspect prevents us from being independent with respect to the universal gate set used to define the quantum verifier: in general, we cannot define a quantum circuit equivalent to a given one, specified in a different universal gate set, with exactly zero error.
In our work, we root our results in a reduction from the \prbm{clique homology} problem defined in \textcite{cade2024complexity}. Consequently, we are compelled to use the same gate set, consisting of the \textsc{cnot} gate and the gate $U_\text{pyth} = \frac{1}{5} \smqty(3 & 4 \\ -4 & 3)$. This choice, however, has no direct impact on the validity or implications of our results.

\subsection{Basic definitions from graph theory}

We recall the definitions of a graph and a signed graph.

\begin{definition}{(Simple graph)}
A simple graph is an ordered pair $G = (V, E)$ where $V$ is a non-empty, finite set of vertices, and $E \subseteq \binom{V}{2}$ is a (possibly empty) set of edges. The set $\binom{V}{2}$ refers to the collection of all subsets of size 2 that can be formed from the elements of $V$. 
\end{definition}

Throughout this document, we assume that graphs are simple, undirected, and unweighted unless otherwise stated. The simplicity condition means that self-loops and multiple edges between the same pair of vertices are not allowed. Graphs can represented as a linear operator over the finite-dimensional Hilbert space spanned by the set of vertices of $G$, $\calV^G = \operatorname{span}(V, \RR)$, known as the graph Laplacian operator.

\begin{definition}\label{def:unsigned_laplacian}
The \emph{graph Laplacian} of a graph $G = (V, E)$ is a linear operator $\calL^G: \calV^G \to \calV^G$ defined as 
\begin{equation}
    \mel{i}{\calL^G}{j} = \begin{cases}
        \deg(i), & i = j \\
        -1, & i \sim j \\
        0, & \text{otherwise}
    \end{cases}.
\end{equation}
Here, $i \sim j$ indicates that the vertices $i$ and $j$ are adjacent, i.e., $\{i, j\} \in E$, while $\deg(i)$ is the degree of vertex $i$, corresponding to the number of its adjacent vertices, i.e., $\deg(i) = \sum_{i \sim j} 1$.
\end{definition}

We omit the explicit reference to $G$ when it is clear from the context. The graph Laplacian is sometimes expressed as $\calL = \mathcal{D} - \mathcal{A}$, where $\mathcal{D}$ is the (diagonal) degree operator, and $\mathcal{A}$ the adjacency matrix, which is the symmetric matrix with elements in $\{0, 1\}$. A signed graph is a generalization of the simple graph defined above. In this case, each edge is assigned a positive or negative sign. Formally,

\begin{definition}{(Signed graph)}
A \emph{signed graph} is an ordered pair $G_s = (V, E, s)$ where $V$ is a non-empty, finite set of vertices, $E \subseteq \binom{V}{2}$ is a (possibly empty) set of edges, and $s$ is the signature, a mapping $s: E \to \{-1, 1\}$. The set $\binom{V}{2}$ refers to the collection of all subsets of size 2 that can be formed from the elements of $V$. 
\end{definition}

\begin{restatable}{definition}{restatabledefsignedlaplacian}\label{def:signed_laplacian}
The \emph{signed Laplacian} is the operator $\calL: \calV \to \calV$ defined as 
\begin{equation}
    \mel{i}{\calL}{j} = \begin{cases}
        \deg(i), & i = j \\
        -s(i, j), & i \sim j \\
        0, & \text{otherwise}
    \end{cases}.
\end{equation}
Here, $i \sim j$ indicates that the vertices $i$ and $j$ are adjacent, i.e., $\{i, j\} \in E$, while $\deg(i)$ is the degree of vertex $i$, which corresponds to the number of its adjacent vertices, irrespective of the sign of the edge, i.e., $\deg(i) = \sum_{i \sim j} 1$.
\end{restatable}

\subsection{Marked sparse access for sparse graphs}

We introduce the graph input model that is used throughout this work.

\begin{restatable}[Marked sparse access for unsigned graphs]{definition}{restatabledefsparseunsigned}\label{def:sparse_access_unsigned}
Let $G = (V, E)$ be an unsigned graph having $N \le 2^n-1$ vertices, $V \subseteq [2^n] \setminus \{0\}$, and where each vertex has at most $S \in \order{\poly n}, S \ge 2$ neighbors. Let $\operatorname{adj}: [2^n] \times [S] \to [2^n]$ be the mapping such that
\begin{equation}
    \operatorname{adj}(i, \ell) = \begin{cases}
        j, & i \in V \text{ and } j \text{ is the } \ell\text{-th neighbor of } i \\
        0, & i \in V \text{ and } \ell \ge \deg(i) \\
        \ell, & i \not\in V 
    \end{cases}.
\end{equation}
A sparse access for $G$ is given by the oracle
\begin{align}
    O_\text{adj} \ket{i} \ket{\ell} & = \ket{\operatorname{adj}(i, \ell)} \ket{\ell}. \nonumber 
\end{align}
\end{restatable}

\restatabledefsparsesigned* 

We compare our model with the traditional model, as introduced, for example, by \textcite{goldreich1997property}. In that model, the authors define a function \( f_G: V \times [S] \to V \cup \{ \mathbf{0} \} \), where \( G = (V, E) \), \( V = \{ 1, \ldots, N \} \) is the set of vertices of an undirected graph, \( S \) is the upper bound on the degree of the vertices, and \( f_G(v, \ell) \) returns the \(\ell\)-th neighbor of \( v \in V \) if it exists, or \(\mathbf{0}\) otherwise. Since we need to implement this model using circuits, we assign a specific bitstring to the placeholder value; here, \(\mathbf{0}\) is represented by the bitstring \(0\). Similarly, the domain of \( f_G \) is limited to the set of vertices, so the behavior of the circuit for bitstrings \( i > N \) is undefined; in such cases, we define it to return the list \([0, 1, \ldots, S-1]\). The extension to signed graphs is straightforward.

\begin{restatable}[Traditional sparse access for unsigned graphs]{definition}{restatabledefsparseunsignednm}\label{def:sparse_access_unsigned_nm}
Let $G = (V, E)$ be an unsigned graph with $V = \{1, \ldots, N\}$, and where each vertex has at most $S \in \order{\poly n}$ neighbors. Let $\operatorname{adj}: [2^n] \times [S] \to [2^n]$ be the mapping such that
\begin{equation}
    \operatorname{adj}(i, \ell) = \begin{cases}
        j, & 1 \le i \le N \text{ and } j \text{ is the } \ell\text{-th neighbor of } i \\
        0, & 1 \le i \le N \text{ and } \ell \ge \deg(i) \\
        \ell, & \text{otherwise} 
    \end{cases}.
\end{equation}
A sparse access for $G$ is given by the oracle
\begin{align}
    O_\text{adj} \ket{i} \ket{\ell} & = \ket{\operatorname{adj}(i, \ell)} \ket{\ell}. \nonumber 
\end{align}
\end{restatable}

\begin{restatable}[Traditional sparse access for signed graphs]{definition}{restatabledefsparsesignednm}\label{def:sparse_access_signed_nm}
Let $G_s = (V, E, s)$ be a signed graph with $V = \{1, \ldots, N\}$, and where each vertex has at most $S \in \order{\poly n}$ neighbors. Let $\operatorname{adj}: [2^n] \times [S] \to [2^n]$ be the mapping such that
\begin{equation}
    \operatorname{adj}(i, \ell) = \begin{cases}
        j, & 1 \le i \le N \text{ and } j \text{ is the } \ell\text{-th neighbor of } i \\
        0, & 1 \le i \le N \text{ and } \ell \ge \deg(i) \\
        \ell, & \text{otherwise} 
    \end{cases}.
\end{equation}
A sparse access for $G_s$ is given by the pair of oracles $O_\text{adj}, O_\text{sign}$,
\begin{align}
    O_\text{adj} \ket{i} \ket{\ell} & = \ket{\operatorname{adj}(i, \ell)} \ket{\ell}, \nonumber \\
    O_\text{sign} \ket{i} \ket{\ell} \ket{z} & = \ket{i} \ket{\ell} \ket{z \oplus s(\{i, \operatorname{adj}(i, \ell)\})}. \nonumber
\end{align}
\end{restatable}

The marked input model is nearly identical to the usual one, but it relaxes the requirement that the set of vertices must be \( \{ 1, \ldots, N\} \), allowing it instead to be any subset \( V \subseteq [2^n] \setminus \{0\} \). This difference is useful in the reduction from the \prbm{clique homology} task, as removing this requirement avoids the need to enumerate cliques. However, the same results can be proven in the traditional model with more effort.

\section{QMA1-hardness of \prbm{sparse balancedness}}\label{sec:balance}

This section is dedicated to proving Theorem~\ref{thm:thm_1}. In Section~\ref{sec:balance:def}, we explore key concepts in spectral \emph{signed} graph theory. Section~\ref{sec:balance:homology} introduces the \prbm{clique homology} problem. In Section~\ref{sec:balance:characterization}, we explain how the homological holes of a clique complex can be characterized in terms of the balancedness components of a signed graph. In Section~\ref{sec:balance:reduction}, we demonstrate that our characterization preserves efficient access to the corresponding Laplacian operators, thereby enabling a reduction from \prbm{clique homology} to the computational problem we define as \prbm{sparse balancedness}. 
In Section~\ref{sec:balance:model}, we show that our results extend to graphs expressed in the traditional sparse access.
Finally, in Section~\ref{sec:balance:containment} we prove that \prbm{sparse balancedness} under a suitable promise on the ground state energy of the signed Laplacian of the input graph is contained in \class{QMA}.

\subsection{Signed graphs and balancedness}\label{sec:balance:def}

We provide a brief introduction to signed graphs and their properties. An accessible treatment of this topic can be found in~\cite{dittrich2020signal}. Let $G_s = (V, E, s)$ be a signed graph. Without loss of generality, we assume the set of vertices is $V \subseteq [2^n] \setminus \{0\}, |V| = N$. We denote the real vector spaces of vertices and edges as $\calV^{G_s}$ and $\calE^{G_s}$, respectively:
\begin{align}
    \calV^{G_s} & = \operatorname{span}(V, \RR), \\
    \calE^{G_s} & = \operatorname{span}(E, \RR).
\end{align}

We omit the explicit reference to $G_s$ when it is clear from the context. The natural basis for $\calV$ is $\{\, \ket{i} \,\}_{i = 1}^N$, and for $\calE$ it is $\{ \,\ket{(i, j)} \,\}_{\{i, j\} \in E, i < j}$. The structure of the signed graph is encoded in the Laplacian operator. Analogous to the unsigned case, the spectral theory of the Laplacian of a signed graph provides deep insights into the network.

\restatabledefsignedlaplacian*

The Laplacian can also be equivalently defined in terms of the \emph{incidence operator} $\calN$.

\begin{proposition}
The operator $\calN: \calV \to \calE$ is defined as
\begin{equation}
    \mel{(i, j)}{\calN}{k} = \begin{cases}
        1, & i = j \\
        -\sng((i,j)), & i = k \\
        0, & \text{otherwise}
    \end{cases}
\end{equation}
and satisfies $\calN^\top \calN = \calL$. 
\end{proposition}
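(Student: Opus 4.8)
The plan is to verify the identity $\calN^\top \calN = \calL$ entrywise, working in the standard basis $\{\ket{k}\}$ for $\calV$ and $\{\ket{(i,j)}\}_{\{i,j\}\in E,\, i<j}$ for $\calE$. Expanding the product, $\mel{k}{\calN^\top\calN}{l} = \sum_{(i,j)\in E,\, i<j}\mel{(i,j)}{\calN}{k}\,\mel{(i,j)}{\calN}{l}$, so it suffices to treat the diagonal and off-diagonal cases separately and match each against Definition~\ref{def:signed_laplacian}.

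For a fixed vertex $k$, the only edges contributing to $\mel{k}{\calN^\top\calN}{k}$ are those incident to $k$; for each such edge $e$, the entry $\mel{e}{\calN}{k}$ equals $1$ when $k$ is the larger endpoint of $e$ and $-\sng(e)$ when $k$ is the smaller endpoint, and in either case its square is $1$ since $\sng$ takes values in $\{-1,1\}$. Hence $\mel{k}{\calN^\top\calN}{k} = \deg(k)$, as required. For $k \ne l$, simplicity of $G_s$ guarantees at most one edge between $k$ and $l$: if there is none, every product in the sum vanishes and the entry is $0$; if $\{k,l\}\in E$, say with $k<l$, only the term $e=(k,l)$ survives and contributes $\big(-\sng((k,l))\big)\cdot 1$, which by the defining property of $\sng$ (Definition~\ref{def:sng}) reproduces $-s(\{k,l\})$. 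In all three cases $\mel{k}{\calN^\top\calN}{l}$ agrees with $\mel{k}{\calL}{l}$, so $\calN^\top\calN = \calL$.

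There is no serious obstacle here: the argument mirrors the classical factorization of the unsigned graph Laplacian through an oriented incidence matrix. The only points needing care are bookkeeping the orientation convention — i.e.\ tracking which endpoint of the ordered pair $(i,j)$ carries the entry $1$ versus $-\sng((i,j))$ — and checking that the sign assignment $\sng$ is exactly the one for which the single surviving off-diagonal product equals $-s(\{k,l\})$. I would also briefly remark that the same computation shows $\calN^\top\calN$ is positive semidefinite, which is consistent with $\calL$ having nonnegative spectrum.
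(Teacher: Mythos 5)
Your proof is correct and follows essentially the same route as the paper's: an entrywise expansion of $\calN^\top\calN$ over the edge basis, with the three cases (diagonal, adjacent, non-adjacent) matched against Definition~\ref{def:signed_laplacian}, using simplicity of the graph to isolate the single surviving off-diagonal term. Your closing remark on positive semidefiniteness is also consistent with the paper, which records that observation as a separate short proposition immediately afterwards.
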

\begin{proof}
By direct calculation,
\begin{equation*}
    \mel{i}{\calN^\top \calN}{j}
    = \mel{i}{\calN^\top \mathds{1}_\calE \calN}{j}
    = \sum_{e \in E} \mel{i}{\calN^\top}{e} \!\! \mel{e}{\calN}{j}.
\end{equation*}
For $i = j$, we have
\begin{equation*}
    \sum_{e \in E} \mel{i}{\calN^\top}{e} \!\! \mel{e}{\calN}{i}
    = \sum_{i \sim k} \mel{i}{\calN^\top}{(i,k)} \!\! \mel{(i,k)}{\calN}{i}
    = \sum_{i \sim k} |\!\mel{i}{\calN^\top}{(i,k)}\!|^2
    = \sum_{i \sim k} 1 = \deg(i).
\end{equation*}
For $i \sim j, i < j$, there is a unique edge $\bar{e}$ connecting the vertices since the graph is simple. It follows 
\begin{equation*}
    \sum_{e \in E} \mel{i}{\calN^\top}{e} \!\! \mel{e}{\calN}{j}
    = \mel{i}{\calN^\top}{\bar e} \!\! \mel{\bar e}{\calN}{j} 
    = \mel{i}{\calN^\top}{(i,j)} \!\! \mel{(i,j)}{\calN}{j} 
    = 1 \cdot (-\sng((i,j))) = - \sng((i,j))
\end{equation*}
For $i \neq j$ and $i \not\sim j$, there are no edges connecting the vertices, so
\begin{equation*}
    \sum_{e \in E} \mel{i}{\calN^\top}{e} \!\! \mel{e}{\calN}{j}
    = 0. \qedhere
\end{equation*}
\end{proof}

A convenient aspect of using the Laplacian operator over other representations (e.g., adjacency matrix) is its positive semi-definiteness.

\begin{proposition}
$\calL$ is positive semi-definite.
\end{proposition}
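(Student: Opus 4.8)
The plan is to obtain positive semi-definiteness as an immediate corollary of the factorization $\calN^\top \calN = \calL$ established in the previous proposition. First I would record that $\calL$ is symmetric: from Definition~\ref{def:signed_laplacian} the diagonal entries $\deg(i)$ are real, and for adjacent $i \neq j$ the off-diagonal entry equals $-s(\{i,j\})$, which does not depend on the order of $i$ and $j$ since $s$ is a function on unordered edges; hence $\calL^\top = \calL$. Consequently, to conclude that $\calL$ is positive semi-definite it suffices to show that its quadratic form is nonnegative on all of $\calV$.

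For this I would take an arbitrary $v \in \calV$ and compute, using the previous proposition together with the definition of the adjoint,
\begin{equation*}
    \langle v | \calL | v \rangle = \langle v | \calN^\top \calN | v \rangle = \langle \calN v | \calN v \rangle = \norm{\calN v}^2 \ge 0 .
\end{equation*}
Since $v$ was arbitrary, this establishes the claim.

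I do not anticipate any genuine obstacle: all the content already sits in the identity $\calN^\top \calN = \calL$. If one wished to avoid referencing the incidence operator, an equally short alternative is to expand $\langle v | \calL | v \rangle$ coordinatewise and recognize it as the sum of squares $\sum_{\{i,j\} \in E,\, i<j} \big(v_i - s(\{i,j\})\, v_j\big)^2$, each term being manifestly nonnegative; but this is just the factorization argument written out in components, so I would present the operator form.
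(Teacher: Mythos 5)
Your proposal is correct and follows exactly the paper's own argument: the quadratic form $\mel{x}{\calL}{x} = \mel{x}{\calN^\top\calN}{x} = \norm{\calN\ket{x}}_2^2 \ge 0$ via the factorization from the preceding proposition. The additional remarks on symmetry and the coordinatewise sum-of-squares expansion are fine but not needed beyond what the paper states.
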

\begin{proof}
For all $\ket{x} \in \calV$, it holds that
\begin{equation*}
    \mel{x}{\calL}{x} = \mel{x}{\calN^\top \calN}{x} = \norm{\calN \ket{x}}^2_2 \ge 0. \qedhere
\end{equation*}
\end{proof}

Note that this implies the (unsigned) graph Laplacian is positive semi-definite, too. One of the most important aspects of a signed network is its balancedness, which is defined as follows.

\begin{definition}\label{def:balanceness}
A signed graph is \emph{balanced} or \emph{structurally balanced} if there exists a partition of the vertices $V = V_1 \oplus V_2$ such that the edges within $V_1$ and within $V_2$ are all positive, while the edges between $V_1$ and $V_2$ are all negative.
\end{definition}

As previously mentioned, balancedness can be determined using linear algebraic methods based on the spectrum of $\calL$. Specifically, if the kernel of $\calL$ is non-trivial, its eigenvectors correspond to the indicator vectors of the partition. If the network is not balanced, the first non-zero eigenvalue is called the \emph{algebraic conflict} and serves as a measure of "how imbalanced" the network is~\cite{dittrich2020signal}. The role of the smallest non-zero eigenvalue is analogous to that of the Fiedler value in quantifying the connectedness in unsigned graphs.

\begin{proposition}\label{prop:balanced_kernel}
A \emph{connected} signed graph is balanced if and only if $\ker(\calL)$ is non-trivial.
\end{proposition}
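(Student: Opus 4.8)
The plan is to prove both implications using the characterization $\calL = \calN^\top\calN$, so that $\ket{x}\in\ker\calL$ iff $\calN\ket{x}=0$, i.e.\ iff $x_i = \sng((i,j))\,x_j$ for every edge $\{i,j\}$ (reading off the two nonzero entries of the corresponding row of $\calN$). First I would establish the easy direction: if $G_s$ is balanced with partition $V = V_1\oplus V_2$, define $\ket{x} = \sum_{i\in V_1}\ket{i} - \sum_{i\in V_2}\ket{i}$. Then for a positive edge (both endpoints in the same block) the endpoints of $x$ agree in sign, and for a negative edge (endpoints in different blocks) they differ, which is exactly the relation $x_i = \sng((i,j))x_j$; hence $\calN\ket{x}=0$ and $\ket{x}\neq 0$, so $\ker\calL$ is non-trivial. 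Note connectedness is not even needed here.

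For the converse, suppose $\ker\calL\neq\{0\}$ and pick $0\neq\ket{x}\in\ker\calL$, so $x_i = \sng((i,j))x_j$ on every edge. Since $\sng$ takes values in $\{-1,+1\}$, this forces $|x_i| = |x_j|$ across every edge; connectedness then propagates this to show $|x_i|$ is a single constant $c$ on all of $V$, and $c\neq 0$ since $\ket{x}\neq 0$. Rescaling, assume $x_i\in\{-1,+1\}$ for all $i$. Now set $V_1 = \{i : x_i = +1\}$ and $V_2 = \{i : x_i = -1\}$; this is a partition of $V$. For any edge $\{i,j\}$: if $s(i,j)=+1$ then $x_i = x_j$, so $i,j$ lie in the same block; if $s(i,j)=-1$ then $x_i = -x_j$, so $i,j$ lie in different blocks. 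That is precisely Definition \ref{def:balanceness}, so $G_s$ is balanced.

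The only subtle point — the place I would be most careful — is the propagation step in the converse: one must argue that connectedness genuinely upgrades the edge-local constraint $|x_i|=|x_j|$ to the global statement that $|x_i|$ is constant, and, more importantly, that the \emph{signs} assigned to vertices are globally consistent. The second concern is automatically handled because $\ket{x}$ itself is a fixed vector, so the block assignment $i\mapsto\operatorname{sign}(x_i)$ is well-defined with no path-dependence; it is exactly the existence of a kernel vector that certifies this consistency. (By contrast, on a disconnected graph each component could independently be balanced or not, and a single kernel vector need not be supported off a non-balanced component — which is why the "if and only if" is stated for connected graphs, and why the reduction later produces graphs with many components.) I would also remark that along the way this shows $\dim\ker\calL \le 1$ for connected $G_s$, a fact that foreshadows the general "number of balanced components" count used later.
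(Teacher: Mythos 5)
Your proof is correct. Note that the paper does not actually supply an argument here---it simply cites Kunegis et al.\ (Theorem 4.4 of \emph{kunegis2010spectral})---so your write-up is a self-contained version of the standard argument, and it matches the mechanism the paper itself relies on elsewhere: the factorization $\calL=\calN^\top\calN$ reduces membership in $\ker\calL$ to the edge-local constraint $x_i=s(i,j)\,x_j$ (equivalently, $\mel{x}{\calL}{x}=\sum_{\{i,j\}\in E}(x_i-s(i,j)x_j)^2$), and connectedness propagates $|x_i|$ to a global nonzero constant so that the sign pattern of the kernel vector \emph{is} the balancing partition. Your side remarks are also accurate: the forward direction needs no connectedness, and the argument gives $\dim\ker\calL\le 1$ per connected component, which is consistent with how the paper later decomposes $\ker\calL$ over components.
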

\begin{proof}
See \textcite[Theorem 4.4]{kunegis2010spectral}.
\end{proof}

For signed graphs with possibly many connected components, $\calL$ admit a block diagonal form.

\begin{proposition}
Let $G_s$ be a signed graph with $k > 1$ connected components and $\calL^{G_s}$ its signed Laplacian. Then, $\calL^{G_s}$ admit a block diagonal form with $k$ blocks,
\begin{equation}
    \calL^{G_s'} = \mqty[ \calL^{G_s^{(1)}} & & \\ & \ddots & \\ & & \calL^{G_s^{(k)}} ].
\end{equation}
Here, $\mathcal{L}^{G_s^{(j)}}$ denotes the signed Laplacian of the subgraph induced by the $j$-th connected component of $G_s$.
\end{proposition}
\begin{proof}
See \textcite{kunegis2010spectral}.
\end{proof}

It follows that, for a signed graph $G_s$ with $k$ connected components whose kernel of its signed Laplacian can be stated as $\ker(\mathcal{L}^{G_s}) = \bigoplus_{j=1}^k \ker(\mathcal{L}^{G_s^{(j)}})$, we have $\ker(\mathcal{L}^{G_s}) \neq \varnothing$ if and only if at least one of its components is balanced.

\subsection{Clique homology}\label{sec:balance:homology}

In algebraic topology, homology is used to characterize the "holes" in a manifold, which are topological features that remain invariant under smooth deformations. Computationally, manifolds are approximated by structures called \emph{simplicial complexes} (which may be \emph{abstract} if geometrical information is disregarded), generalizing graphs. We briefly recall a few concepts from algebraic topology; for an introduction focused on the combinatorial aspect, refer to \textcite{lim2020hodge}. We recall the definition of ASC:

\restatableasc* 

In principle, the number of simplices can grow combinatorially with \( n \), making them difficult to represent explicitly. However, certain classes of ASCs allow for a more succinct representation. A relevant family in this context is that of clique complexes.

\begin{definition}
Let $G = (\{1, \ldots, n\}, E)$ be an undirected, unsigned graph. The \emph{clique complex} of $G$, denoted as $\clique(G)$, is an ASC with the set of vertices $V = \{1, \ldots, n\}$ and the set of simplices
\begin{equation}
\Sigma = \{ \varnothing \neq \sigma \subseteq V \mid \forall v, v' \in \sigma : \{v, v'\} \in E \}.
\end{equation}
\end{definition}

Let $\Gamma = (V, \Sigma)$ be an ASC. We can assign this object a linear algebraic structure called the linear chain space,
\begin{equation}
    \mathcal{C}^\Gamma = \operatorname{span}(\Sigma, \mathbb{R}).
\end{equation}
We will omit the explicit reference to $\Gamma$ when it is clear from the context. This space can be structured as a graded vector space,
\begin{equation}
    \mathcal{C}^\Gamma = \bigoplus_{p=0}^n \mathcal{C}_p^\Gamma,
\end{equation}
where $\mathcal{C}_p^\Gamma = \operatorname{span}(\Sigma_p, \mathbb{R})$ is the space of $p$-chains of $\Gamma$. A natural basis for $\mathcal{C}_p$ consists of the elementary $p$-chains, each corresponding to a single simplex. The interest in this algebraic representation lies in the operators that can be defined on it, particularly the boundary operator.

\begin{definition}
Let $\Gamma = (V, \Sigma)$ be an ASC and $\mathcal{C}^\Gamma = \bigoplus_{p=0}^n \mathcal{C}_p^\Gamma$ its chain space. The $p$-boundary operator $\partial_p^\Gamma: \mathcal{C}_p^\Gamma \to \mathcal{C}_{p-1}^\Gamma$, for $p = 0, \ldots, n$, is defined on the natural basis of $\mathcal{C}_p^\Gamma$ as
\begin{equation}
\partial_p^\Gamma \sigma = \sum_{j=0}^p (-1)^j (\sigma \setminus \{ \sigma_j \}).
\end{equation}
Here, $\sigma_j$ denotes the $j$-th element of $\sigma$. To refer to the $j$-th element of a set, we consider the simplices ordered with respect to the natural ordering of their vertices, $\sigma = (v_0, \ldots, v_p)$ with $v_0 < \ldots < v_p$.
\end{definition}

A key property of the boundary operator is that the boundary of a boundary is empty.

\begin{proposition}
$\partial_{p+1} \circ \partial_p = 0$, i.e., $\operatorname{im}(\partial_{p+1}) \subseteq \ker(\partial_p)$.
\end{proposition}
\begin{proof}
See \textcite[Theorem B.4]{lim2020hodge}.
\end{proof}

The $p$-cycles are elements of $\ker(\partial_p)$, i.e., $p$-chains with zero boundary. Similarly, a $p$-boundary is a $p$-cycle that belongs to the subset $\operatorname{im}(\partial_{p+1}) \subseteq \ker(\partial_p)$, i.e., a $p$-cycle that is also the boundary of a $(p+1)$-chain. A $p$-hole is an element of $\ker(\partial_p) / \operatorname{im}(\partial_{p+1})$, i.e., an equivalence class of $p$-cycles that are not $p$-boundaries. The quotient space $\ker(\partial_p) / \operatorname{im}(\partial_{p+1})$ is called the $p$-homology space. Now, we are ready to define the computational problem at the core of our reduction.

\begin{problem} \prbm{clique homology}

\noindent{}Input: graph $G = (\{1, \ldots, n\}, E)$, $p \in \{0, \ldots, n\}$.

\noindent{}Output: \yes{} if $\ker(\partial_p^{\clique(G)}) / \operatorname{im}(\partial_{p+1}^{\clique(G)})$ is non-trivial, \no{} otherwise.
\end{problem}

Recent work has focused on characterizing the hardness of this computational problem. The result we will use is stated in the following proposition:

\begin{proposition}\label{thm:clique_homology_qmahard}
\prbm{clique homology} is \qmaone{}-hard.
\end{proposition}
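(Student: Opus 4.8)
\textbf{Proof proposal for Proposition~\ref{thm:clique_homology_qmahard}.}

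The plan is to cite and briefly unpack the result of Crichigno and Kohler~\cite{crichigno2022clique}, which established precisely that \prbm{clique homology} is \qmaone{}-hard. Since the statement is quoted verbatim from prior work, the ``proof'' here is a pointer to that paper together with a reconciliation of conventions: I would check that the input model used there (a graph on $\{1,\dots,n\}$ given explicitly, plus an integer $p$) matches the one in our \prbm{clique homology} definition, and that their notion of ``having a hole in dimension $p$'' coincides with the non-triviality of $\ker(\partial_p^{\clique(G)})/\ima(\partial_{p+1}^{\clique(G)})$ used here. In particular I would note that the \qmaone{}-hardness there is for the decision version exactly as we have phrased it, so no further reduction is needed inside this proposition.

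If instead a self-contained argument were wanted, the approach would be the one of~\cite{crichigno2022clique}: start from a \qmaone{}-complete problem — the quantum $k$-\prbm{sat} / history-state verification problem with perfect completeness, over the Pythagorean gate set $\{\textsc{cnot}, U_\text{pyth}\}$ as discussed above — and encode a verifier circuit into the Combinatorial Hodge Laplacian $\calL_p^\Gamma$ of a clique complex. The key steps in order would be: (i) recall that $\calL_p^\Gamma$ is positive semi-definite and $\dim\ker\calL_p^\Gamma$ equals the $p$-th Betti number, i.e. the dimension of the $p$-homology space, so detecting a hole is equivalent to detecting a zero eigenvalue of $\calL_p^\Gamma$; (ii) realize a Kitaev-style history Hamiltonian as (a suitable restriction of) such a Laplacian, using that clique complexes are flexible enough to encode arbitrary adjacency structure while remaining succinctly describable; (iii) argue the \yes{}/\no{} gap translates to $\ker\calL_p^\Gamma$ being non-trivial versus the smallest eigenvalue being inverse-polynomially bounded away from zero, using perfect completeness on the \yes{} side.

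The main obstacle in the self-contained route — and the technical heart of~\cite{crichigno2022clique} — is step (ii): controlling the \emph{homology} of the clique complex, not just the spectrum of a single boundary map. One must ensure that the clifting of the circuit-to-Hamiltonian construction does not introduce spurious cycles or kill intended ones in adjacent dimensions $p\pm 1$, and that the ``clique'' (downward-closure) constraint is compatible with the gadgets. This is why~\cite{crichigno2022clique} needed a careful gadget design and a delicate analysis of the relevant chain groups. Since that work already carries out this analysis in full, I would simply invoke Proposition~\ref{thm:clique_homology_qmahard} as stated, noting only that the matching of gate sets (cf. the \prbm{clique homology} formulation of~\cite{cade2024complexity}, which we adopt) is what lets us use it as the starting point of our own reductions.
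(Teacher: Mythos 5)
Your proposal matches the paper exactly: the paper's entire proof of this proposition is a citation to \textcite{crichigno2022clique}, which is precisely what you do, and your additional sketch of that paper's construction is accurate context rather than a deviation. No gap here.
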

\begin{proof}
See \textcite{crichigno2022clique}.
\end{proof}

The work of \textcite{friedman1996computing} established a deep connection between algebraic topology and Hodge theory, showing that a linear operator known as the Combinatorial Hodge Laplacian (defined over an ASC) encodes the $p$-holes in its kernel. This has paved the way for the application of linear algebraic tools in algebraic topology.

\begin{definition}
Let $\Gamma = (V, \Sigma)$ be an ASC with $\deg \Gamma = \max_{\sigma \in \Sigma} \abs{\sigma} - 1$ being the maximum degree of the simplices in $\Gamma$. For $p = 0, \ldots, \deg \Gamma$, the $p$-\emph{Combinatorial Hodge Laplacian} is the linear operator $\calL_p^{\Gamma}: \mathcal{C}_p^\Gamma \to \mathcal{C}_p^\Gamma$ defined as
\begin{equation*}
    \calL_p^{\Gamma} = \begin{cases}
    \partial_{1} \partial_1^\dagger & p = 0 \\
    \partial_{p}^\dagger \partial_p + \partial_{p+1} \partial_{p+1}^\dagger & p = 1, \ldots, \deg \Gamma - 1 \\
    \partial_{p}^\dagger \partial_p & p = \deg \Gamma.
    \end{cases}
\end{equation*}
\end{definition}

The following proposition states that the $p$-holes of the abstract simplicial complex are encoded in the kernel of its Combinatorial Hodge Laplacian.

\begin{proposition}{\emph{(\cite{friedman1996computing})}}\label{thm:friedman}
The kernel of $\mathcal{L}_p^{\Gamma}$ is isomorphic to $\ker{\partial_p} / \operatorname{im}(\partial_{p+1})$.
\end{proposition}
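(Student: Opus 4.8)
The statement to prove is Proposition~\ref{thm:friedman}: the kernel of $\mathcal{L}_p^\Gamma$ is isomorphic to $\ker(\partial_p)/\operatorname{im}(\partial_{p+1})$, the $p$-homology space. This is the classical Hodge decomposition for finite-dimensional inner product chain complexes, so my plan is to reproduce that argument rather than cite it (the excerpt does cite Friedman, but a self-contained proof is short).

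The plan is to first establish the orthogonal (Hodge) decomposition of the chain space $\mathcal{C}_p = \operatorname{im}(\partial_{p+1}) \oplus \ker(\mathcal{L}_p) \oplus \operatorname{im}(\partial_p^\dagger)$. I would start from the elementary fact that for any linear map $A$ between finite-dimensional inner product spaces, $\ker(A^\dagger A) = \ker(A)$ (since $\langle x, A^\dagger A x\rangle = \|Ax\|^2$) and $\operatorname{im}(A^\dagger) = \ker(A)^\perp$. Applying this with $A = \partial_p$ and $A = \partial_{p+1}^\dagger$, together with $\partial_p \partial_{p+1} = 0$ (Proposition on boundary-of-boundary, applied to the adjoint it also gives $\partial_{p+1}^\dagger \partial_p^\dagger = 0$), I would show that $\operatorname{im}(\partial_{p+1})$ and $\operatorname{im}(\partial_p^\dagger)$ are mutually orthogonal subspaces of $\mathcal{C}_p$, and that a vector $x$ is in $\ker(\mathcal{L}_p)$ iff $\partial_p x = 0$ and $\partial_{p+1}^\dagger x = 0$, i.e. iff $x \in \ker(\partial_p) \cap \ker(\partial_{p+1}^\dagger) = \ker(\partial_p) \cap \operatorname{im}(\partial_{p+1})^\perp$. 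For the case $p=0$ one uses $\mathcal{L}_0 = \partial_1\partial_1^\dagger$ and the convention $\partial_0 = 0$; for $p = \deg\Gamma$ one uses $\mathcal{L}_{\deg\Gamma} = \partial_{\deg\Gamma}^\dagger \partial_{\deg\Gamma}$ and $\partial_{\deg\Gamma+1}=0$, so the boundary cases are handled by reading the middle case with the appropriate zero operator.

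Next I would assemble these into the decomposition: since $\operatorname{im}(\partial_{p+1}) \subseteq \ker(\partial_p)$ and $\operatorname{im}(\partial_p^\dagger) = \ker(\partial_p)^\perp$, we get $\ker(\partial_p) = \operatorname{im}(\partial_{p+1}) \oplus (\ker(\partial_p) \cap \operatorname{im}(\partial_{p+1})^\perp)$, and the second summand is exactly $\ker(\mathcal{L}_p)$ by the previous paragraph. Hence the restriction to $\ker(\partial_p)$ of the quotient map $\ker(\partial_p) \to \ker(\partial_p)/\operatorname{im}(\partial_{p+1})$, when composed with the inclusion $\ker(\mathcal{L}_p) \hookrightarrow \ker(\partial_p)$, is a linear isomorphism: it is injective because $\ker(\mathcal{L}_p) \cap \operatorname{im}(\partial_{p+1}) = \{0\}$ (orthogonal complements intersect trivially), and surjective because every $p$-cycle decomposes as a boundary plus an element of $\ker(\mathcal{L}_p)$. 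That gives the claimed isomorphism $\ker(\mathcal{L}_p) \cong \ker(\partial_p)/\operatorname{im}(\partial_{p+1})$.

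The main obstacle is really just bookkeeping: making sure the orthogonality relations $\operatorname{im}(\partial_{p+1}) \perp \operatorname{im}(\partial_p^\dagger)$ and the identity $\ker(\mathcal{L}_p) = \ker(\partial_p) \cap \ker(\partial_{p+1}^\dagger)$ are proved in the right order, and treating the two boundary values of $p$ uniformly by adopting the convention that out-of-range boundary operators are zero. A secondary subtlety is that "isomorphic" here should be read as isomorphism of real vector spaces (which is all that is needed downstream, since only $\dim\ker\mathcal{L}_p$ — the Betti number — matters), so I would not need to worry about naturality. If one wants the cleanest exposition, one can simply invoke the standard finite-dimensional Hodge decomposition theorem and cite \textcite{lim2020hodge} or \textcite{friedman1996computing}, but the argument above is short enough to include in full.
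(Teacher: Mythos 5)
Your argument is correct: it is the standard finite-dimensional Hodge decomposition ($\mathcal{C}_p = \operatorname{im}(\partial_{p+1}) \oplus \ker(\mathcal{L}_p) \oplus \operatorname{im}(\partial_p^\dagger)$, with $\ker(\mathcal{L}_p) = \ker(\partial_p)\cap\ker(\partial_{p+1}^\dagger)$ following from positive semi-definiteness of each summand), and the paper itself gives no proof but simply cites \textcite{friedman1996computing}, whose proof is exactly this argument. So you have correctly reconstructed, in full, the standard route the paper defers to; the boundary cases $p=0$ and $p=\deg\Gamma$ are handled properly by your zero-operator convention.
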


\subsection{Characterization of homology via balancedness}\label{sec:balance:characterization}

Abstract simplicial complexes share some similarities with signed graphs. These similarities have been explored in the literature, notably in the work of \textcite{she2019algebraic}, which uses algebraic topology related to simplicial homologies to study the balancedness of signed graphs. In this work, we take the opposite approach by characterizing simplicial homologies as the balanced components of a signed graph.

We begin by recalling the definition of the Combinatorial Hodge Laplacian and highlighting its decomposition into two distinct components:
\begin{equation}\label{eq:decomposition} 
    \calL_p = \begin{cases}
        \partial_{1} \partial_1^\dagger, & p = 0 \\
        \partial_{p}^\dagger \partial_p + \partial_{p+1} \partial_{p+1}^\dagger, & p = 1, \ldots, \deg \Gamma - 1 \\
        \partial_{p}^\dagger \partial_p, & p = \deg \Gamma
    \end{cases}
    = \begin{cases}
        \phantom{\calL_p^{\downarrow} +\;\;} \calL_p^{\uparrow}, & p = 0 \\
        \calL_p^{\downarrow} + \calL_p^{\uparrow}, & p = 1, \ldots, \deg \Gamma - 1 \\
        \calL_p^{\downarrow} \phantom{+ \calL_p^{\uparrow}\;\,}, & p = \deg \Gamma
    \end{cases}.
\end{equation} 

The two components, $\calL_p^{\downarrow}$ and $\calL_p^{\uparrow}$, encode different types of information about the $p$-simplices. Specifically, $\calL_p^{\downarrow}$, known as the \emph{lower Laplacian}, encodes the lower-adjacency relation, where $\sigma \sim_l \tau$ if and only if they share a face. Similarly, $\calL_p^{\uparrow}$, called the \emph{upper Laplacian}, encodes the upper-adjacency relation, where $\sigma \sim_u \tau$ if and only if they are faces of a common $(p+1)$-simplex $\rho$. The number of upper-adjacent simplices to a given $p$-simplex $\sigma$ is denoted by $\deg^\uparrow(\sigma)$.

The work of \textcite{jost2023cheeger} has characterized a variant of the upper component of the upper $p$-Combinatorial Hodge Laplacian over some ASC $\Gamma = (V, \Sigma)$ in terms of a variant of the signed Laplacian over a signed graph whose vertices correspond to the $p$-simplices $\Sigma_p$. We extend this construction by removing the normalization factor and modifying their approach so that an identical result holds for the lower-Laplacian component. Finally, we show that adding the matrix representations of the signed Laplacians of the two components results in a new matrix, which is again the signed Laplacian of yet another signed graph.

To build this construction, we rely on the following definition:

\begin{definition}\label{def:sng}
Let $(V, \Sigma)$ be an ASC, $\sigma, \rho \in \Sigma$, and $\sigma = \{v_0, \ldots, v_{j-1}, v_{j+1}, \ldots, v_p\}$ be a face of $\rho = \{v_0, \ldots, v_p\}$. The \emph{sign} of $\sigma$ with respect to $\rho$ is defined as the mapping
\begin{align} 
\sng & : \Sigma_{p-1} \times \Sigma_{p} \to \{ 0, \pm 1 \}, \\
\sng & (\sigma, \rho) = \begin{cases}
    (-1)^j, & \sigma \text{ is the } j\text{-th face of } \rho \\
    0, & \text{otherwise}.
\end{cases} 
\end{align} 
The boundary operator $\partial_{p}$ can then be written as
\begin{equation}
    \mel{\sigma}{\partial_{p}}{\rho} = \sng(\sigma, \rho).
\end{equation}
\end{definition} 

Then, we prove the following proposition:

\restatablepropascsigned*
\begin{proof}
Let $\Gamma = (V, \Sigma)$ be an ASC and $p \in \NN$. We can deal with the upper- and lower- portion of the Combinatorial Hodge Laplacian separately. It follows from a direct calculation that  
\begin{align} 
\mel{\tau}{\calL_p^{\uparrow, \Gamma}}{\sigma} & = \mel{\tau}{\partial_{p+1} \partial_{p+1}^\top}{\sigma} \nonumber \\
& = \sum_{\rho \in \Sigma_{p+1}} \mel{\tau}{\partial_{p+1}}{\rho} \!\! \mel{\rho}{\partial_{p+1}^\top}{\sigma} \\
& = \sum_{\rho \in \Sigma_{p+1}}  \sng(\sigma, \rho) \sng(\tau, \rho) \nonumber \\ 
& = \begin{cases} 
    \sum_{\rho \in \Sigma_{p+1}} (\sng(\sigma, \rho))^2 = \deg^{\uparrow}(\sigma), & \sigma = \tau \\
    \sum_{\rho \in \Sigma_{p+1}} \sng(\sigma, \rho) \sng(\tau, \rho) = \sng(\sigma, \bar\rho) \sng(\tau, \bar\rho), & \sigma \sim_u \tau \\
    0, & \text{otherwise} 
\end{cases}. 
\end{align} 
Here, $\bar\rho$ is the unique $p+1$ simplex that the two faces might share. Let $G_s^\uparrow = (V_s^\uparrow, E_s^\uparrow, s^\uparrow)$ with
\begin{align} 
V_s^\uparrow & = \Sigma_p, \nonumber \\ E_s^\uparrow & = \{ \{ \sigma, \tau \} \mid \sigma, \tau \in \Sigma_p, \sigma \sim_u \tau \}, \nonumber \\ s^\uparrow(\sigma, \tau) & = - \sng(\sigma, \bar\rho) \sng(\tau, \bar\rho).
\end{align}
Note that $\calC_p^\Gamma = \calV_s^\uparrow = \operatorname{span}(\Sigma_p^\Gamma, \RR)$. Consider the unitary bijection $\iota: \calC_p^\Gamma \to \calV_s^\uparrow, \iota(\ket{\sigma}) = \ket{v_\sigma}$, which maps $p$-chains of the ASC to vertices of our construction. From Definition~\ref{def:signed_laplacian} and direct calculation, it follows that:
\begin{align}
    & \bra{\tau} \calL_p^{\uparrow, \Gamma} \ket{\sigma} \nonumber \\
    & = \iota(\bra{\tau}) \; \iota(\calL_p^{\uparrow, \Gamma} \ket{\sigma}) \nonumber \\
    & = \iota(\bra{\tau}) \; \iota \left(
        \sum_{\alpha, \beta \in \Sigma_p} (\deg^\uparrow(\alpha) \delta_{\alpha\beta} + \sng(\alpha,\bar\rho)\sng(\beta,\bar\rho)\delta_{\alpha\sim\beta}) \ketbra{\alpha}{\beta} \cdot \ket{\sigma}
    \right) \nonumber \\
    & = \bra{v_\tau} \; \sum_{v_\alpha, v_\beta \in V_s^\uparrow} (\deg(v_\alpha) \delta_{v_\alpha v_\beta} -s^\uparrow(v_\alpha, v_\beta) \delta_{v_\alpha\sim v_\beta}) \ketbra{v_\alpha}{v_\beta} \cdot \ket{v_\sigma}
     \nonumber \\
    & = \bra{v_\tau} \calL^{G_s^\uparrow} \ket{v_\sigma}.
\end{align}
Similarly, 
\begin{align} 
\mel{\tau}{\calL_p^{\downarrow, \Gamma}}{\sigma}
& = \sum_{\nu \in \Sigma_{p-1}} \mel{\tau}{\partial_{p}^\top}{\nu} \mel{\nu}{\partial_{p}}{\sigma} \nonumber \\ 
& = \begin{cases} 
\sum_{\nu \in \Sigma_{p-1}} (\sng(\nu, \sigma))^2, & \sigma = \tau \\ \sum_{\nu \in \Sigma_{p-1}} \sng(\nu, \sigma) \sng(\nu, \tau), & \sigma \neq \tau \end{cases} \nonumber \\ 
& = \begin{cases} 
p + 1, & \sigma = \tau \\ 
\sng(\sigma \cap \tau, \sigma) \sng(\sigma \cap \tau, \tau), & \sigma \sim_l \tau \\ 
0, & \text{otherwise} 
\end{cases}. \nonumber 
\end{align} 
Here, $\sigma \cap \tau$ is the only possible $(p-1)$-simplex whose product of $\sng$ is non-zero. Furthermore, the number of $(p-1)$-simplices lower-adjacent is always $p+1$. This is obtained by considering the faces of the $p$-simplices that, by the closure by inclusion of the set of simplices in the ASC, are always present. Let $G_s^\downarrow = (V_s^\downarrow, E_s^\downarrow, s^\downarrow)$ with: 
\begin{align} 
V_s^\downarrow & = \Sigma_p, \nonumber \\ 
E_s^\downarrow & = \{ \{ \sigma, \tau \} \mid \sigma, \tau \in \Sigma_p, \sigma \sim_l \tau \}, \nonumber \\ 
s^\downarrow(\sigma, \tau) & = - \sng(\sigma \cap \tau, \partial \sigma) \sng(\sigma \cap \tau, \partial \tau). 
\end{align} 
From Definition~\ref{def:signed_laplacian} and a direct calculation similar to that in the previous proposition, we have that $\calL_p^{\downarrow, \Gamma}$ and $\calL^{G_s^\downarrow}$ are unitarily equivalent. Finally, we can prove that the sum of these components results in the signed Laplacian of yet another signed graph, capturing a different relationship: \emph{either $\sigma \sim_u \tau$ or $\sigma \sim_l \tau$}. For Equation~\ref{eq:decomposition}, it holds that $\calL_p^\Gamma = \calL_p^{\downarrow, \Gamma} + \calL_p^{\uparrow, \Gamma} = \calL^{G_s^\uparrow} + \calL^{G_s^\downarrow}$. However, it might not be the case that $\calL^{G_s^\uparrow} + \calL^{G_s^\downarrow}$ correctly describes the Laplacian of yet another signed graph, as the off-diagonal entries are in the range $\{0, \pm 1, \pm 2\}$. The explicit form of this sum of operators results in 
\begin{equation*} 
\mel{\tau}{\calL^{G_s^\uparrow} + \calL^{G_s^\downarrow}}{\sigma}
= \begin{cases} 
    p + 1 + \deg^\uparrow(\sigma), & \sigma = \tau \\ 
    - \mathbf{s}(\sigma, \tau) & \sigma \sim_u \tau \text{ or } \sigma \sim_l \tau \\ 
    0, & \text{otherwise} 
\end{cases}
\end{equation*} 
where 
\begin{align*} 
\mathbf{s}(\sigma, \tau) = - & \Big( \sng(\sigma \cap \tau, \partial \sigma) \times \sng(\sigma \cap \tau, \partial \tau) + \sng(\sigma, \partial \rho) \times \sng(\tau, \partial \rho) \Big). 
\end{align*} 
We can show that $\mathbf{s}$ is a proper sign function. For $\sigma \not\sim_l \tau$ and $\sigma \not\sim_u \tau$, both terms in $\mathbf{s}$ are zero by direct calculation. For either $\sigma \sim_l \tau$ or $\sigma \sim_u \tau$, exactly one term is non-zero and has a value in $\{ \pm 1\}$. If both $\sigma \sim_l \tau$ and $\sigma \sim_u \tau$ hold simultaneously, we can prove the term is zero by direct calculation. Without loss of generality, let $0 \le i < j \le p+1$ and 
\begin{align*} 
\rho & = \{ v_0, \ldots, v_{p+1} \}, \\
\sigma & = \{ v_0, \ldots, v_{i-1}, v_{i+1}, \ldots, v_{p+1} \}, \\ 
\tau & = \{ v_0, \ldots, v_{j-1}, v_{j+1}, \ldots, v_{p+1} \}, \\ 
\sigma \cap \tau & = \{ v_0, \ldots, v_{i-1}, v_{i+1}, \ldots, v_{j-1}, v_{j+1}, \ldots, v_{p+1} \}.
\end{align*} 
Then, 
\begin{align*} 
\sng(\sigma \cap \tau, \partial \sigma) \times \sng(\sigma \cap \tau, \partial \sigma) & = (-1)^i \times (-1)^{j-1}, \\ 
\sng(\sigma, \partial \rho) \times \sng(\tau, \partial \rho) & = (-1)^i \times (-1)^{j}. 
\end{align*} 
Thus, the sum of the two terms is zero. Once we have proven that $\mathbf{s}$ is a proper sign function, the two operators are unitarily equivalent as a sum of unitarily equivalent pairs of operators.
\end{proof}

\subsection{Reduction}\label{sec:balance:reduction}

We recall the definition of the balancedness task for graphs admitting an efficient implementation of its sparse access.

\restatablesparsebalancedness*

\restatabletheoremone*

\begin{proof} 
We will demonstrate hardness via a reduction from \prbm{clique homology}.

\medskip\noindent\textbf{Reduction}:

\begin{enumerate}
    \item The input to \prbm{clique homology} is a graph $G = (\{1, \ldots, n\}, E)$ and an integer $p \in \NN$.
    \item We define the classical circuit $O_\text{adj}$ implementing the following algorithm: \\
    
    \begin{algorithmic}[1]
\Statex \textbf{Function} $O_\text{adj}(i, \ell)$
\Require $i \in 0, \ldots, 2^n-1$
\Require $\ell \in 0, \ldots, 2^m-1, m = \lceil \log_2 (n-p-1)(p+1) \rceil$
\State $\sigma \gets \{ k \mid \operatorname{bitstring}(i)[k] = 1 \}$
\If{($|\sigma| \neq p+1$) or ($\sigma$ is not a clique in $G$)} \Comment{$i$ is not a vertex of $G_s$, $\order{n^2}$}
    \State \textbf{return} $\ell$
\Else \Comment{A vertex of $G_s$, corresponding also to a clique of size $p+1$ in $G$}
    \State $V' = V \setminus \sigma$ \Comment{Set of vertices not belonging to the current clique, $|V'| = n - p - 1$}
    \State $a = \ell \text{ rem } (p + 1)$ \Comment{$a \in \{0, ..., p\}$}
    \State $b = \lfloor \ell / (p + 1) \rfloor$ \Comment{$b \in \{0, ..., n-p-1\}$}
    \State $\sigma' \gets \sigma $ minus its $a$-th vertex \Comment{$\order{n}$}
    \State $\sigma'' \gets \sigma' $ plus the $b$-th vertex in $V'$ \Comment{$\order{n}$}
    \State \textbf{return} bitstring representation of $\sigma''$ \Comment{Possibly a clique in $G$, will be verified afterwards}
\EndIf
\end{algorithmic}

    \item We define the classical circuit $O_\text{sign}$ implementing the following algorithm: \\

    \begin{algorithmic}[1]
\Statex \textbf{Function} $O_\text{sign}(i, \ell)$
\Require $i \in 0, \ldots, 2^n-1$
\Require $\ell \in 0, \ldots, 2^m-1, m = \lceil \log_2 n^2 \rceil$
\State $j \gets O_\text{adj}(i, \ell)$
\State $\sigma \gets \{ k \mid \operatorname{bitstring}(i)[k] = 1 \}$
\State $\tau \gets \{ k \mid \operatorname{bitstring}(j)[k] = 1 \}$
\If{$j = 0$ or ($|\tau| \neq p+1$) or ($\tau$ is not a clique in $G$)}
    \State \textbf{return} 0
\Else
    \State \textbf{return} $-\sng(\sigma\cap\tau,\sigma)\sng(\sigma\cap\tau,\tau) -\sng(\sigma,\sigma\cup\tau)\sng(\tau,\sigma\cup\tau)$ \Comment{$\sng$ costs $\order{n}$}
\EndIf
\end{algorithmic}

    \item The circuits $(O_\text{adj}, O_\text{sign})$ implement sparse access as per Definition \ref{def:bounded_degree_signed}. The circuit for $O_\text{adj}$ meets the three conditions in the definition of sparse access: the placeholder index $0$ has the adjacency list $[0, \dots, 0]$, while indices not associated with vertices in $G_s$ (either bitstrings with Hamming weight different from $p+1$, corresponding in our encoding to simplices of dimension different from $p$, or bitstrings correctly addressing a subset of $p+1$ vertices that do not form a clique) are assigned the placeholder adjacency list $[0, 1, \dots, S-1]$. The circuit for $O_\text{sign}$ assigns a non-zero value only to pairs of indices that actually correspond to vertices of the graph.

    \item The circuits $(O_\text{adj}, O_\text{sign})$ implement sparse access for the graph $G_s$ obtained via Proposition \ref{prop:full_laplacian}. The circuit for $O_\text{sign}$ assigns a non-zero value only to pairs of indices that correspond to vertices of the graph, and the assigned value matches the sign function of the graph as described in Proposition \ref{prop:full_laplacian}.

    \item We solve \prbm{sparse balancedness} with $O_\text{adj}$ and $O_\text{sign}$ as input.
\end{enumerate}

\medskip\noindent\textbf{Completeness}: 

\begin{enumerate}
    \item Let $(G, p)$ be a \yes{} instance of \prbm{clique homology}. \item For Theorem~\ref{thm:friedman}, $\ker\calL_p^{\clique(G)} \neq \varnothing$.
    \item For Proposition~\ref{prop:full_laplacian}, there exists a signed graph $G_s$ such that $\calL^{G_s}$ and $\calL_p^{\clique(G)}$ are unitarily equivalent, and the classical circuits $(O_\text{adj}, O_\text{sign})$ defined in the reduction above implement its sparse access. 
    \item $\ker\calL^{G_s} \neq \varnothing$. For Theorem \ref{prop:balanced_kernel}, $(O_\text{adj}, O_\text{sign})$ is a \yes{} instance of \prbm{sparse balancedness}.
\end{enumerate}

\medskip\noindent\textbf{Soundness}: 

\begin{enumerate}
    \item Let $(G, p)$ be a \no{} instance of \prbm{clique homology}. \item For Theorem~\ref{thm:friedman}, $\ker\calL_p^{\clique(G)} = \varnothing$.
    \item For Proposition~\ref{prop:full_laplacian}, there exists a signed graph $G_s$ such that $\calL^{G_s}$ and $\calL_p^{\clique(G)}$ are unitarily equivalent, and the classical circuits $(O_\text{adj}, O_\text{sign})$ defined in the reduction above implement its sparse access.
    \item $\ker\calL^{G_s} = 0$. For Theorem \ref{prop:balanced_kernel}, $(O_\text{adj}, O_\text{sign})$ is a \no{} instance of \prbm{sparse balancedness}.
\end{enumerate}
\end{proof}

Notably, we could have proven Theorem \ref{thm:thm_1} could have been proven even with a slightly simpler statement than the one in Proposition~\ref{prop:full_laplacian}: for each symmetric matrix with off-diagonal entries in $\{-1, 0, 1\}$ and diagonal counting the non-zero entries per row there is a signed graph whose signed Laplacian matches the given matrix. The Combinatorial Hodge Laplacian is an instance of such symmetric matrices.

\subsection{Role of the marked sparse access}\label{sec:balance:model}

Our results for the marked sparse access model can be translated back to the traditional sparse access model. Specifically, for any signed graph \( G_s \), we can construct a signed graph \( G_s' \) with a comparable number of vertices and an upper bound on their degree such that \( G_s \) has a balanced component if and only if \( G_s' \) has a balanced component.

This result is achieved informally by augmenting \( G_s = (V, E, s) \) with vertices associated with bitstrings \( i \not\in V \), along with \( \lceil 2^n / S \rceil + 3 \) auxiliary vertices. These auxiliary vertices form a new connected component, disconnected from the existing components in \( G_s \), and we ensure that this new component is unbalanced so that it does not contribute additional elements to the kernel of \( \mathcal{L}^{G_s'} \). We confirm that this construction is efficient.

This result emphasizes that the marked sparse access model is, for this task, merely a convenient way to represent our graph without adding any new capability beyond what is available in the traditional sparse access model. This statement is not guaranteed to hold for other tasks, such as \prbm{sparse balancedness} in case the input is restricted to connected graphs or \prbm{guided local hamiltonian} \cite{cade_et_al} (in case the Hamiltonian corresponds to a Combinatorial Laplacian).

\begin{proposition}\label{prop:equiv_marked_access}
Let $G_s = (V, E, s)$ be a signed graph with $V \subseteq [2^n] \setminus \{0\}, |V| = N$, such that
\begin{itemize}
    \setlength\itemsep{0em}
    \item the graph is sparse, i.e., there is an upper bound $2 \le S \in \order{\poly n}$ on the degree of the vertices;
    \item the graph is given as a classical circuit of size $\order{\poly n}$ implementing the marked sparse access $O_\text{adj}, O_\text{sign}$.
\end{itemize} 
Then, there exists a signed graph $G_s' = (V', E', s')$ vertices, $V' = \{1, ..., N'\}$, such that
\begin{itemize}
    \setlength\itemsep{0em}
    \item $N' \le 2^{n+1}$;
    \item the upper bound on the degree of the vertices is $S + 2\in \order{\poly n}$;
    \item $G_s$ has a balanced component iff $G_s'$ has a balanced component. 
    \item there exists a pair of circuits $O_\text{adj}', O_\text{sign}'$ implementing the traditional sparse access and uses $\order{1}$ calls to $O_\text{adj}, O_\text{sign}$;
\end{itemize} 
\end{proposition}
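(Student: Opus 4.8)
# Proof Proposal for Proposition~\ref{prop:equiv_marked_access}

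\textbf{Overall strategy.} The plan is to build $G_s'$ in two stages. First, I would "materialize" every bitstring $i \in [2^n]\setminus(\{0\}\cup V)$ as an honest isolated vertex of a new graph $\tilde G_s$ on vertex set $[2^n]\setminus\{0\}$, so that the vertex set becomes an honest interval after relabeling. An isolated vertex is trivially balanced (the empty partition works, or the singleton), so $\tilde G_s$ has a balanced component iff $G_s$ does — the new isolated vertices only add balanced components, but the \yes/\no answer is governed by whether \emph{any} component is balanced, which is unchanged since $G_s$ already had some vertices. Wait — this is the subtlety: adding isolated vertices would make \emph{every} instance a \yes{} instance. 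So the second stage must \emph{kill} the balancedness contributed by the placeholder vertices and by any genuinely balanced padding: I would instead route all the "junk" bitstrings into a single auxiliary gadget component that is provably \emph{unbalanced}, and make sure no isolated vertices survive. Concretely, take $\lceil 2^n/S\rceil + 3$ fresh auxiliary vertices (labelled, say, $2^n, 2^n+1, \ldots$, staying within $2^{n+1}$), wire them together with a fixed sparse unbalanced gadget (e.g.\ a triangle with an odd number of negative edges, then attach the remaining auxiliary vertices and all $\approx 2^n$ junk bitstrings as pendant vertices distributed so each auxiliary vertex picks up at most $S$ pendants, keeping degree $\le S+2$), and let $s'$ assign, say, all-positive signs on the pendant edges. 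Then this whole auxiliary component contains an odd negative cycle, hence is unbalanced, hence contributes nothing to $\ker\mathcal{L}^{G_s'}$.

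\textbf{Key steps in order.}
\begin{enumerate}
\item \emph{Define $V' $ and the relabeling.} Set $V' = \{1,\ldots,N'\}$ with $N' = (2^n-1) + (\lceil 2^n/S\rceil+3) \le 2^{n+1}$. Fix an efficiently computable bijection between $\{1,\ldots,2^n-1\}$ and $[2^n]\setminus\{0\}$ (the identity works), and assign the top $\lceil 2^n/S\rceil+3$ labels to the auxiliary vertices.
\item \emph{Define the auxiliary component.} Specify the fixed unbalanced gadget on three auxiliary vertices, attach the remaining auxiliary vertices and the $2^n - 1 - N$ junk bitstrings (those $i\notin V$) as pendants, at most $S$ per auxiliary vertex — this needs $\lceil (2^n)/S\rceil$-ish auxiliary vertices, which is why that count appears. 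Check degrees: original vertices keep degree $\le S$; the three gadget vertices have degree $\le 2 + S$; pendant vertices have degree $1$. So the global degree bound is $S+2$.
\item \emph{Prove the balancedness equivalence.} Use the block-diagonal / connected-component structure (Proposition on $\mathcal{L}^{G_s}$ block form and the remark after it): $G_s'$ has a balanced component iff some component is balanced. The components of $G_s'$ are exactly those of $G_s$ (untouched) plus the single auxiliary component; the latter is unbalanced because it contains an odd negative cycle (invoke Definition~\ref{def:balanceness} / Proposition~\ref{prop:balanced_kernel}, no switching can remove an odd negative cycle). Hence $G_s'$ has a balanced component $\iff$ $G_s$ does.
\item \emph{Construct $O_\text{adj}'$ and $O_\text{sign}'$ with $\order{1}$ oracle calls.} On input $(i,\ell)$: if $i$ is an original-vertex label, translate to the corresponding bitstring, call $O_\text{adj}$ (resp.\ $O_\text{sign}$) once, and translate the answer back, handling the fact that $O_\text{adj}$ on a genuine vertex already returns $0$-padded lists; if $i$ is a junk-bitstring label or an auxiliary label, compute its (hard-coded, arithmetic) adjacency inside the auxiliary component directly without any oracle call. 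Observe the overhead is $\order{\poly n}$ classical gates and $\order 1$ calls.
\end{enumerate}

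\textbf{Main obstacle.} The delicate point is \textbf{step 2–3}: arranging the junk bitstrings and auxiliary vertices into one connected, low-degree, \emph{provably unbalanced} component while keeping the adjacency function computable by a small classical circuit \emph{and} keeping the degree bound at $S+2$ rather than something larger. The pendant-distribution bookkeeping (each auxiliary "hub" absorbs $\le S$ pendants, and the hubs themselves form a sparse connected backbone containing a fixed odd negative cycle) is what forces the $\lceil 2^n/S\rceil + 3$ count, and one must verify that the map $i \mapsto$ (which hub, which slot) is a simple arithmetic function — division and remainder by $S$ — so the circuit stays polynomial-size. Everything else (the equivalence of balancedness via connected components, the correctness of the oracle wrappers) is routine once the gadget is pinned down. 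I would also double-check the edge case where $G_s$ itself is already fully an interval or where $N$ is small, to confirm $N' \le 2^{n+1}$ and $S+2 \in \order{\poly n}$ hold unconditionally.
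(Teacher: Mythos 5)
Your proposal is correct and follows essentially the same route as the paper's proof: pad the vertex set to an interval, absorb all non-vertex bitstrings as pendants hanging off roughly $\lceil 2^n/S\rceil$ auxiliary hub vertices (at most $S$ pendants each, giving the $S+2$ degree bound), force that single auxiliary component to be unbalanced via a triangle with an odd number of negative edges, and conclude by the block-diagonal structure of the signed Laplacian, with oracle wrappers making $\order{1}$ calls to the marked-access oracles. The only cosmetic difference is that the paper connects the hubs into an explicit line rather than your unspecified "sparse connected backbone," and it writes out the wrapper algorithms in full.
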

\begin{proof}

Let $A = \lceil 2^n/S \rceil$. Define $G_s' = (V', E', s')$ such that
\begin{align}
    V' & = \{1, ..., N'\} \text{ with } N' = 2^n + A + 3 \\
    E' & = E & (\text{all the edges in } G_s) \\
    & \phantom{=} \cup \{ \{i, i+1\} \mid i = 2^n, \ldots, 2^n+A \} & (\text{auxiliary vertices form a line}) \\
    & \phantom{=} \cup \{ \{ i, 2^n + \lceil i / S \rceil \} \mid i \in [2^n], i \not\in V \} & (\text{bitstring } i \not\in V \text{ connected to their nearest auxiliary vertex}) \\
    & \phantom{=} \cup \{ \{2^n+A+1, 2^n+A+2\}, \\
    & \quad \phantom{=} \{2^n+A+2, 2^n+A+3\}, \\
    & \quad \phantom{=} \{2^n+A+1, 2^n+A+3\} \} & (\text{triangle})
\end{align}
and
\begin{equation}
    s'(\{i, j\}) = \begin{cases}
        s(\{i, j\}) & i, j < 2^n \\
        -1 & \{i, j\} = \{ 2^n+A+2, 2^n+A+3\} \\
        +1, & \text{otherwise}
    \end{cases}
\end{equation}

By construction, \( G_s' \) has fewer than \( 2^{n+1} \) vertices for \( S \ge 2 \), and each auxiliary vertex has no more than \( S + 2 \) neighbors: two vertices preceding and succeeding it in the line, and \( S \) additional vertices corresponding to bitstrings not associated with vertices of \( G_s \). 

Denoting \( G_s^\text{new} \) as the subgraph composed solely of the augmented vertices (auxiliary ones and the triangle), we note that this subgraph is connected by construction, without connections to any \( i \in V \). Consequently, the Laplacian of \( G_s' \) can be written as
\begin{equation}
    \calL^{G_s'} = \mqty[ \calL^{G_s} & \mathbf{0} \\ \mathbf{0} & \calL^{G_s^\text{new}} ].
\end{equation}
The subgraph \( G_s^\text{new} \) is unbalanced because it contains a path with an odd number of negative edges (\(\{2^n+A+1, 2^n+A+2\}\), \(\{2^n+A+2, 2^n+A+3\}\), \(\{2^n+A+3, 2^n+A+1\}\)), and thus this component is never balanced, yielding \( \ker \calL^{G_s^\text{new}} = \varnothing \). This implies that \( \ker \calL^{G_s'} = \ker \calL^{G_s} \oplus \ker \calL^{G_s^\text{new}} = \ker \calL^{G_s} \).

Finally, we show that the traditional sparse access \( O_\text{adj}', O_\text{sign}' \) can be efficiently implemented from the marked sparse access model using the following algorithms.

\begin{algorithmic}[1]
\Statex \textbf{Function} $O_\text{adj'}(i, \ell)$
\Require $i \in 0, \ldots, 2^{n+1}-1$
\Require $\ell \in 0, \ldots, S +2 - 1$
\If{$i < 2^n$} \Comment{Vertices from the original graph $G_s$}
    \If{$i \not\in V$} \Comment{Not a vertex of $G_s$}
        \If{$\ell = 0$} \Comment{The vertex $i$ will have as its only neighbor the vertex $\lceil i/S \rceil$ in the second half of $V'$}
            \State \textbf{return} $2^n + \lceil i/S \rceil$ 
        \Else
            \State \textbf{return} $0$
        \EndIf
    \ElsIf{$i \in V$} \Comment{A vertex of $G_s$}
        \State \textbf{return} $O_\text{adj}(i, \ell)$
    \EndIf
\ElsIf{$2^n \le i \le 2^{n} + \lceil 2^n/S \rceil$} \Comment{Auxiliary vertices}
    \If{$\ell = 0$} \Comment{The first neighbor is the next in the chain of new vertices}
        \State \textbf{return} $i+1$
    \ElsIf{$i \neq 2^n$ and $\ell = 1$} \Comment{The second neighbor is the next in the chain of new vertices (except $2^n$)}
        \State \textbf{return} $i-1$
    \Else \Comment{The other are the non-vertices in the original graph} 
        \State $i' \gets i - 2^n$ \Comment{First of the bitstring in $[2^n]$ that might be associated with this vertex}
        \State $\ell' \gets 2 - \delta_{i, 2^n}$ \Comment{counter of neighbors, $\ell = 2$ for anyone except $\ell = 1$ for $i=2^n$}
        \For{$j = 0, \ldots, S-1$}
            \If{$(i' + j) \not\in V$} \Comment{$j$-th potential bistring associated with this vertex}
                \State $\ell' \gets \ell + 1$
            \EndIf
            \If{$\ell' = \ell$} \Comment{return the $\ell+1$ bitstring $\not\in V$}
                \State \textbf{return} $i' + j$
            \EndIf
        \EndFor
        \State \textbf{return} $0$
    \EndIf
\ElsIf{$2^{n} + \lceil 2^n/S \rceil + 1 \le i \le 2^{n} + \lceil 2^n/S \rceil + 3$} \Comment{Triangle}
    \If{$i = 2^{n} + \lceil 2^n/S \rceil + 1$}
        \State \textbf{return} $[i-1, i+1, i+2, 0, \ldots, 0]_\ell$
    \ElsIf{$i = 2^{n} + \lceil 2^n/S \rceil + 2$}
        \State \textbf{return} $[i-1, i+1, 0, \ldots, 0]_\ell$
    \ElsIf{$i = 2^{n} + \lceil 2^n/S \rceil + 3$}
        \State \textbf{return} $[i-1, i-2, 0, \ldots, 0]_\ell$
    \EndIf
\Else
    \State \textbf{return} $\ell$
\EndIf
\end{algorithmic}

\bigskip
\begin{algorithmic}[1]
\Statex \textbf{Function} $O_\text{sign}'(i, \ell)$
\Require $i \in 0, \ldots, 2^{n+1}-1$
\Require $\ell \in 0, \ldots, S +2 - 1$
\State $j \gets O_\text{adj}'(i, \ell)$
\If{$j = 0$}
    \State \textbf{return} $0$
\ElsIf{$i \in V \text{ and } j \in V$}
    \State \textbf{return} $O_\text{sign}(i, \ell)$
\ElsIf{$\{i, j\} = \{ 2^{n} + \lceil 2^n/S \rceil + 2, 2^{n} + \lceil 2^n/S \rceil + 3\}$}
    \State \textbf{return} $-1$
\Else
    \State \textbf{return} $+1$
\EndIf
\end{algorithmic}

\end{proof}

\subsection{Containment in \class{QMA}}\label{sec:balance:containment}

We start by recalling the definition of the block-encoding of a linear operator.

\begin{definition}{(Block encoding~\cite{camps2024explicit})}
Let $A \in \RR^{N \times N}$ for $N = 2^n$, and let $\alpha, \epsilon \in \RR_{\ge 0}$. A unitary $U_A$ over $m + n$ qubits is a $(\alpha, m, \epsilon)$-block encoding of $A$ if
\begin{equation*}
    \norm{A - \alpha (\bra{0^m} \otimes \mathbb{I}_2^{\otimes n}) U_A (\ket{0^m} \otimes \mathbb{I}_2^{\otimes n})}_2 \le \epsilon.
\end{equation*}
\end{definition}

A block encoding for sparse matrices, such as those for the Laplacian of sparse graphs and the Combinatorial Hodge Laplacian, can be obtained with the following approach:

\begin{proposition}{(Block encoding for sparse matrices~\cite{camps2024explicit})}\label{def:sparse_access_be}
Let $A \in \RR^{N \times N}, \norm{A}_2 \le 1$ be an $S$-sparse symmetric matrix, $N = 2^n, S = 2^m$. If there exist a unitary $O_\text{row}$ over $n + s$ qubits and a unitary $O_\text{entry}$ over $1 + n + s$ qubits implementing
\begin{align}
    O_\text{row} \ket{i} \ket{\ell} & = \ket{c(i, \ell)} \ket{\ell}, \nonumber \\
    O_\text{entry} \ket{0} \ket{i} \ket{\ell} & = \left(A_{i,c(i,\ell)} \ket{0} + \sqrt{1 - |A_{i,c(i,\ell)}|^2}\right) \ket{i} \ket{\ell}, \nonumber
\end{align}
then there exists a $(1, m, 0)$-block encoding of $A$ that performs $\order{1}$ calls to $O_\text{row}$ and $O_\text{entry}$.
\end{proposition}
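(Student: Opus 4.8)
\emph{Proof plan.} This is the standard sparse-matrix block-encoding construction, and I would proceed by following \textcite{camps2024explicit} and checking that the oracle pair $(O_\text{row},O_\text{entry})$ supplied here is exactly what that construction consumes. Use three registers: an $m$-qubit \emph{index} register $R_\ell$ initialised to $\ket{0^m}$ (recall $2^m=S$), the $n$-qubit \emph{main} register $R_i$ on which $A$ acts, and the single \emph{flag} qubit $R_b$ rotated by $O_\text{entry}$; the ancillas of the block encoding are $R_\ell$ together with $R_b$. Write $D=H^{\otimes m}$ for the Hadamard layer on $R_\ell$, so $D\ket{0^m}=S^{-1/2}\sum_{\ell}\ket{\ell}$. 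The one structural fact I would exploit beyond the oracle syntax is that $A$ is symmetric: then ``the $\ell$-th nonzero of row $i$'' and ``the $\ell$-th nonzero of column $i$'' index the same positions, so the single neighbour oracle $O_\text{row}$ does the job that a general (non-symmetric) sparse matrix would need both a row-oracle and a column-oracle for.

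The candidate block encoding is
\begin{equation*}
    U_A \;=\; \bigl(\mathbb{I}_b\otimes\mathbb{I}_i\otimes D^{\dagger}\bigr)\;O_\text{row}\;O_\text{entry}\;\bigl(\mathbb{I}_b\otimes\mathbb{I}_i\otimes D\bigr),
\end{equation*}
with the understanding that $O_\text{entry}$ fires while $R_i$ still holds the input index and that $O_\text{row}$ then rewrites $R_i$ from $\ket{j}$ to $\ket{c(j,\ell)}$. The check is a direct amplitude calculation: from $\ket{0}_b\ket{0^m}_\ell\ket{j}$, after $D$ one has a uniform superposition over $\ell$ in $R_\ell$; after $O_\text{entry}$ the flag carries amplitude $A_{j,c(j,\ell)}$ on $\ket{0}_b$; after $O_\text{row}$ the main register holds $c(j,\ell)$; and after $D^{\dagger}$, projecting $R_\ell$ onto $\ket{0^m}$ and $R_b$ onto $\ket{0}$, the surviving amplitude of $\ket{i}$ in $R_i$ is a fixed scalar times $\sum_{\ell:\,c(j,\ell)=i}A_{j,c(j,\ell)}$. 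Since the neighbour oracle enumerates each nonzero position of a given row exactly once and $A$ is symmetric, this sum collapses to $A_{ij}$, so the postselected action of $U_A$ on $R_i$ is proportional to $A$, and $O_\text{row},O_\text{entry},D,D^{\dagger}$ are each invoked $\order{1}$ times.

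I expect the main obstacle to be pinning the proportionality constant down to the claimed $\alpha=1$ (and not, say, $\alpha=S$): the bare circuit above carries the two $S^{-1/2}$ factors from the two copies of $D$, and removing them so as to reach normalisation $1$ with only $\order{1}$ queries and zero error is precisely the part that needs the care taken in \textcite{camps2024explicit}, beyond what a plain Hadamard diffusion gives. Note that the hypothesis $\norm{A}_2\le 1$ is necessary for an $\alpha=1$ block encoding to exist at all, since the top-left block of a unitary always has operator norm at most one. A further point to discharge is that $O_\text{row}$ really is a unitary --- for each fixed $\ell$ the map $i\mapsto c(i,\ell)$ must be a bijection of $[2^n]$, equivalently the neighbour orderings at the two endpoints of every edge must be mutually consistent --- but this is one of the hypotheses of the proposition.
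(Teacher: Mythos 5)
The paper does not actually prove this proposition --- it is imported verbatim (as a citation) from \textcite{camps2024explicit} --- so there is no in-paper argument to compare against; what can be assessed is whether your reconstruction establishes the claim. Your circuit $(\mathbb{I}\otimes\mathbb{I}\otimes D^{\dagger})\,O_\text{row}\,O_\text{entry}\,(\mathbb{I}\otimes\mathbb{I}\otimes D)$ is the standard construction, your amplitude computation is correct, and your two side remarks (that $\norm{A}_2\le 1$ is forced by unitarity of $U_A$, and that unitarity of $O_\text{row}$ requires $i\mapsto c(i,\ell)$ to be a bijection for each $\ell$) are both on point.

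The genuine problem is the step you defer: you concede that your circuit produces $A/S$ in the corner and express the hope that ``the care taken in \textcite{camps2024explicit}'' removes the two $S^{-1/2}$ factors to reach $\alpha=1$. It does not, and cannot: with $\order{1}$ black-box queries the subnormalization $S\max_{ij}|A_{ij}|$ (more generally $\sqrt{s_rs_c}\,\norm{A}_{\max}$) is essentially optimal, and boosting $A/S$ up to $A$ is a uniform spectral amplification that costs on the order of $S$ further queries. What \textcite{camps2024explicit} actually prove is exactly the statement your circuit delivers: an $(S,m+1,0)$-block encoding of $A$ in this paper's normalization convention (note also the ancilla count is $m+1$, since the flag qubit must be postselected along with the index register, not $m$ as stated). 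So the residual gap is not a missing idea in your argument but an inaccuracy in the proposition as transcribed; you should close your proof by asserting the $(S,m+1,0)$ version rather than leaving the $\alpha=1$ claim open. This discrepancy is harmless downstream --- Proposition~\ref{prop:efficient_be_laplacian} and the \class{QMA}-containment arguments only need the subnormalization to be $\order{\poly n}$, which $2S^2$ still is --- but as written your proof neither establishes the stated constant nor says explicitly that the stated constant is wrong, and one of the two is needed.
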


The matrix $A$ must be scaled to ensure the existence of the block encoding. In fact, the singular values of any submatrix block of a unitary matrix have to be $\leq 1$ \cite{camps2024explicit}. We recall the variant of \prbm{sparse balancedness} under a promise on the smallest eigenvalue of the Laplacian.

\restatablepromisesparsebalancedness*

We can prove that such a problem is contained in \class{QMA}. 

The quantum algorithm solving the task presents some subtleties regarding how we block encode the signed Laplacian operator. 
Specifically, we are \emph{embedding} the Laplacian $\calL^{G_s}$ as an operator acting on the Hilbert space of an $n$-qubit system. Notably, $\dim(\calL^{G_s}) = N < 2^n$, while the Hilbert space of the quantum system has dimension $2^n$. This may create some challenges: the kernel of the embedding operator $H$, which lives in the Hilbert space of dimension $2^n$, is not guaranteed to contain the same elements of the kernel of the operator $\calL^{G_s}$. To tackle this issue, we rely on the block-encoding of the following operator. 

\begin{proposition}\label{prop:ham_signed}
Let $G_s = (V, E, s)$ be a signed graph with $N \leq 2^n - 1$ vertices. The Hamiltonian $H = \calL^{G_s} + \sum_{i \not\in V} \ketbra{i}$ has dimension $\dim H = 2^n$ and kernel $\ker(H) = \ker(\calL^{G_s})$.
\end{proposition}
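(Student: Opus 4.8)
The plan is to analyze the structure of $H = \calL^{G_s} + \sum_{i \notin V} \ketbra{i}$ directly in the computational basis $\{\ket{i}\}_{i \in [2^n]}$. First I would observe that, with respect to the orthogonal decomposition $\CC^{2^n} = \calV^{G_s} \oplus \calW$, where $\calV^{G_s} = \operatorname{span}(\{\ket{i} : i \in V\})$ and $\calW = \operatorname{span}(\{\ket{i} : i \notin V\})$, the operator $H$ is block-diagonal: the defining matrix elements $\mel{i}{\calL^{G_s}}{j}$ vanish whenever either $i \notin V$ or $j \notin V$ (since non-vertices have no incident edges and zero degree), so $\calL^{G_s}$ acts as the honest signed Laplacian on $\calV^{G_s}$ and as $0$ on $\calW$, while the added term $\sum_{i \notin V}\ketbra{i}$ is the identity on $\calW$ and $0$ on $\calV^{G_s}$. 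Hence $H = \calL^{G_s}\big|_{\calV^{G_s}} \oplus \mathbb{I}_{\calW}$, and in particular $\dim H = |V| + (2^n - |V|) = 2^n$, which gives the first claim immediately.

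For the kernel claim, I would use the block-diagonal form: $\ker H = \ker(\calL^{G_s}\big|_{\calV^{G_s}}) \oplus \ker(\mathbb{I}_{\calW})$. The second summand is trivial since $\mathbb{I}_{\calW}$ is invertible on $\calW$, so $\ker H = \ker(\calL^{G_s}\big|_{\calV^{G_s}})$, which is by definition $\ker(\calL^{G_s})$ viewed inside the ambient space. The one point requiring a line of care is the identification of $\ker(\calL^{G_s})$ — as originally defined on $\calV^{G_s}$ — with a subspace of $\CC^{2^n}$; this is just the canonical isometric embedding $\calV^{G_s} \hookrightarrow \CC^{2^n}$, and positive semidefiniteness of $\calL^{G_s}$ (proved earlier) guarantees that no eigenvector with nonzero eigenvalue can sneak into the kernel of the embedded operator.

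I do not expect a serious obstacle here; the statement is essentially a bookkeeping lemma whose content is the observation that padding the Laplacian with rank-one projectors onto the "non-vertex" basis states shifts those unwanted zero eigenvalues (arising purely from the dimension mismatch $N < 2^n$) up to eigenvalue $1$, without touching the genuine kernel. If anything, the only thing to be careful about is to state explicitly that the sum $\sum_{i \notin V}\ketbra{i}$ and $\calL^{G_s}$ have mutually orthogonal supports and ranges, so that their spectra simply concatenate; everything else follows from elementary linear algebra. I would write the proof in two short paragraphs: one establishing the block-diagonal decomposition and the dimension count, and one deducing the kernel equality.
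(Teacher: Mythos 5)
Your proof is correct and follows essentially the same route as the paper: both decompose the ambient space as $\calH_V \oplus \calH_{\lnot V}$, observe that $H$ is block-diagonal with the signed Laplacian on one block and the identity on the other, and conclude that the kernel of the identity block is trivial while the dimension count gives $2^n$. Your additional remarks on the vanishing of the cross matrix elements and the isometric embedding make explicit what the paper leaves implicit, but the argument is the same.
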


\begin{proof}
We can express the Hilbert space $\calH$ over $n$ qubits as the direct sum of the subspace spanned by $V$, the vertices of $G_s$, and the subspace spanned by $[2^n] \setminus V$. That is,
\begin{equation}
    \calH = \calH_V \oplus \calH_{\lnot V}.
\end{equation}

Define the Hamiltonian $H = \calL^{G_s} \oplus H'$, where $\calL^{G_s}$ acts non-trivially only on $\calH_V$ and $H'$ acts non-trivially only on $\calH_{\lnot V}$. Thus, we have $\ker(H) = \ker(\calL^{G_s}) \oplus \ker(H')$. Define $H' = \mathbb{I}_{\lnot V} = \sum_{i \not\in V} \ketbra{i}$. Since $\ker(\mathbb{I}_{\lnot V}) = \varnothing$, it follows that $\ker(H) = \ker(\calL^{G_s})$.

Furthermore, $\dim(H) = \dim(\calL^{G_s}) + \dim(\mathbb{I}_{\lnot V}) = N + 2^n - N = 2^n$. 
\end{proof}

Notably, the circuit implementing the marked sparse access allows us to obtain a block-encoding of the Hamiltonian in Proposition \ref{prop:ham_signed}.

\begin{proposition}\label{prop:efficient_be_laplacian}
Let $G_s$ be a signed graph with $N \le 2^n - 1$ vertices, $V \subseteq [2^n] \setminus \{0\}$, such that
\begin{itemize}
    \setlength\itemsep{0em}
    \item the graph is sparse, i.e., there is an upper bound $S \in \order{\poly n}$ on the degree of the vertices, $S \le 2^m$;
    \item the graph is given as a classical circuit of size $\order{\poly n}$ implementing the marked sparse access $(O_\text{adj}, O_\text{sign})$.
\end{itemize} 
Then, there exists a quantum circuit realizing a $(2S, m, 0)$-block encoding of $\calL^{G_s} + \sum_{i \not\in V} \ketbra{i}$ in time $\order{\poly n}$.
\end{proposition}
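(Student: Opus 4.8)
The plan is to obtain the block encoding by feeding the adjacency-list oracles into the generic sparse-matrix construction of Proposition~\ref{def:sparse_access_be}. Write $H = \calL^{G_s} + \sum_{i\notin V}\ketbra{i}$ (the operator of Proposition~\ref{prop:ham_signed}, whose kernel equals $\ker\calL^{G_s}$), and set $A = H/(2S)$. A $(1,m,0)$-block encoding of $A$ is, by definition, a $(2S,m,0)$-block encoding of $H$, so it suffices to build the two oracles $O_\text{row}$ and $O_\text{entry}$ required for $A$ as polynomial-size reversible circuits making $\order{1}$ calls to $(O_\text{adj},O_\text{sign})$.

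First I would verify the two structural hypotheses of Proposition~\ref{def:sparse_access_be}. \emph{Norm:} $H = \calL^{G_s}\oplus\mathbb{I}_{\lnot V}$ is symmetric and positive semidefinite (the first summand by the earlier positive-semidefiniteness proposition, $\mel{x}{\calL}{x}=\|\calN\ket{x}\|_2^2\ge 0$, the second trivially), and by Gershgorin each eigenvalue of $\calL^{G_s}$ lies in $[0,2\deg(i)]\subseteq[0,2S]$ while $\mathbb{I}_{\lnot V}$ has spectrum $\subseteq\{0,1\}\subseteq[0,2S]$ (using $S\ge 2$); hence $\|A\|_2=\|H\|_2/(2S)\le 1$. \emph{Sparsity:} each row of $\calL^{G_s}$ has at most $\deg(i)+1\le S+1$ nonzero entries (the $\deg(i)$ neighbours plus the diagonal), and each row of $\mathbb{I}_{\lnot V}$ has one, so $A$ is $(S+1)$-row-sparse; I would take the ancilla register with $\lceil\log_2(S+1)\rceil$ qubits, which is $\order{\log n}$, padding with idle ancillas (conjugation by the identity on them is harmless) to match the stated $m$.

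Next I would construct the oracles from $(O_\text{adj},O_\text{sign})$. For $O_\text{row}$, on input $(i,\ell)$ I enumerate the nonzero columns of row $i$ of $H$: reserve one slot for the diagonal column $i$ and route the remaining slots through $O_\text{adj}(i,\cdot)$, which returns the neighbours of $i$ when $i\in V$ and the placeholder list $[0,\dots,S-1]$ when $i\notin V$ — in the latter case only the diagonal entry is genuinely nonzero, and the spurious columns will simply be assigned value $0$ by $O_\text{entry}$, so the enumeration is consistent. For $O_\text{entry}$, on input $(i,\ell)$ I recompute the column $j=c(i,\ell)$ and then $A_{ij}$: if $j=i$ the value is $\deg(i)/(2S)$ for $i\in V$ — obtained by counting the nonzero outputs of $O_\text{adj}(i,\cdot)$ over the $S$ slots, a $\order{\poly n}$ computation that is uncomputed afterwards — or $1/(2S)$ for $i\notin V$; if $j$ is the $\ell$-th neighbour of $i$, the value is $-s(\{i,j\})/(2S)$, read with one call to $O_\text{sign}$; otherwise it is $0$. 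A controlled single-qubit rotation then loads this amplitude onto the flag qubit. Every step is a polynomial-size reversible circuit with a constant number of oracle calls, so Proposition~\ref{def:sparse_access_be} applies and delivers the claimed $(2S,m,0)$-block encoding in time $\order{\poly n}$.

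The main obstacle I anticipate is not the counting or the norm bound but the reversibility/consistency required of $O_\text{row}$: Proposition~\ref{def:sparse_access_be} demands that $i\mapsto c(i,\ell)$ be a bijection for each fixed $\ell$, whereas raw adjacency lists enumerate neighbours in an order that need not be ``symmetric'' (the $\ell$-th neighbour of $i$ need not list $i$ as its $\ell$-th neighbour). This is exactly the standard step in passing from an adjacency-list oracle to a block encoding of a Hermitian sparse matrix: one refines the slot index into a symmetric labelling of the nonzeros (equivalently, a proper edge colouring computed locally from the two endpoint-positions of an edge, at the cost of a polynomially larger but still $\order{\poly n}$ ancilla register), or one invokes the $1$-sparse-decomposition variant. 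I would carry out this bookkeeping — the marked structure of the oracle is convenient here, as it gives a uniform, query-consistent treatment of the placeholder rows $i\notin V$ alongside the genuine vertices — and then conclude as above.
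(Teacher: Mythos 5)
Your proposal follows essentially the same route as the paper: normalize by $2S$ via Gershgorin, and feed row/entry oracles built from $(O_\text{adj}, O_\text{sign})$ into the sparse block-encoding of Proposition~\ref{def:sparse_access_be}, using the marked structure to assign the value $1$ to the diagonal for $i \notin V$. If anything, you are more careful than the paper on two details it glosses over: reserving an explicit slot for the diagonal entry $\deg(i)$ (the paper's $O_\text{row}$ returns $O_\text{adj}(i,\ell)$ for $i \in V$, which never equals $i$ since the graph is simple), and the injectivity of $i \mapsto c(i,\ell)$ required for $O_\text{row}$ to be a genuine unitary.
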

\begin{proof}

A normalization constant for the signed Laplacian can be found using the Gershgorin circle theorem. Let $A \in \CC^{n \times n}$ be a square matrix. Every eigenvalue $\lambda \in \CC$ of $A$ lies in at least one of the disks $C_i = \{ c \in \mathbb{C} \mid |c - A_{ii}| \leq r_i \}$ for $i = 1, \ldots, n$, where $r_i = \sum_{j=1, j \neq i}^n |A_{ij}|$.
For $A = \calL^{G_s}$, the signed Laplacian, the diagonal terms correspond to the degrees of the vertices and are real non-negative. The disks $C_i$ are contained in the larger disks $D_i = \{ c \in \mathbb{C} \mid |c| \leq r_i + A_{ii} \}$.
For all $i = 1, \ldots, n$, $A_{ii} = \deg(i) \leq S$, due to the upper bound on the degree (here $\deg(i)$ is the degree of the $i$-th vertex of the graph). Similarly, $r_i = \sum_{j \sim i} |s((i, j))| = \deg(i) \leq S$. It follows that any eigenvalue, which is real due to the Hermitianity of the Laplacian and non-negative due to its positive semi-definiteness, is upper-bounded by $2S$.

We define the classical circuit implementing $O_\text{row}^{\calL^{G_s}}$ as per the following algorithm:

\begin{algorithmic}[1]
\Statex \textbf{Function} $O_\text{row}(i, \ell)$
\Require $i \in 0, \ldots, 2^n-1$
\Require $\ell \in 0, \ldots, 2^m-1$
\State invalid $\gets (O_\text{adj}(i, 0) = 0$ and $O_\text{adj}(i, 1) \neq 0)$ or $i = 0$ \Comment{True iff $i \not\in V$}
\If{invalid and $\ell = 0$} 
    \State \textbf{return} $i$ \Comment{If the index $i$ does not correspond to any vertex in the graph, we set the entry $(i,i)$ to one: as such, we will avoid the vector $\ket{i}$ to contribute to the kernel of the overall matrix.}
\ElsIf{invalid and $\ell > 0$}
    \State \textbf{return} $0$ \Comment{If the index $i$ does not correspond to any vertex in the graph, every entry except the one lying on the diagonal has to be zero, as such we mark it with the placeholder location $0$.}
\Else
    \State \textbf{return} $O_\text{adj}(i, \ell)$
\EndIf
\end{algorithmic}

\medskip The role of $O_\text{row}$ is uniquely to marks integers that are not associated with a vertex. These will correspond, in $O_\text{entry}$, to zero entries for the row. We define the classical circuit implementing $O_\text{entry}^{\calL^{G_s}}$ as per the following algorithm:

\medskip \begin{algorithmic}[1]
\Statex \textbf{Function} $O_\text{entry}(a, i, \ell)$
\Require $a$ ancillary qubit
\Require $i \in 0, \ldots, 2^n-1$
\Require $\ell \in 0, \ldots, 2^m-1$
\State invalid $\gets (O_\text{adj}(i, 0) = 0$ and $O_\text{adj}(i, 1) \neq 0)$ or $i = 0$ \Comment{True iff $i \not\in V$}
\State $j \gets O_\text{row}(i, \ell)$
\If{invalid} \Comment{index not corresponding to any vertex}
    \If{$i = j$}
        \State Apply $R_y$ on qubit $a$ for an angle $1/\alpha$ \Comment{Contribution $\ketbra{i}$ for $i\not\in V$}
    \Else
        \State Apply no rotation on $a$
    \EndIf
\Else 
    \If{$i = j$}
        \State Calculate $\deg(i)$ \Comment{$\order{S}$ calls to $O_\text{adj}$}
        \State Apply $R_y$ on qubit $a$ for an angle $\deg(i)/\alpha$
    \ElsIf{$j \neq 0$}
        \State $s \gets O_\text{sign}(i, \ell)$
        \State Apply $R_y$ on qubit $a$ for an angle $-s/\alpha$
    \Else
        \State Apply no rotation
    \EndIf
\EndIf
\end{algorithmic}
\end{proof}

\restatabletheoremoneqma*
\begin{proof}
Consider a protocol in which Merlin provides an \( n \)-qubit witness state \( \ket{\psi} \), allegedly the ground state of the Hamiltonian \( H = \frac{1}{\alpha}(\calL^{G_s} \oplus \mathbb{I}_{\lnot V}) \), with $\alpha$ normalization constant that can be set to $2S$. We have been promised the ground state energy of $\calL^{G_s} \in \calH_V$ is either zero or $\delta$, while the ground state energy of the identity matrix $\mathbb{I}_{\lnot V} \in \calH_{\lnot V}$ is one. The ground state energy of $H \in \calH$ is the minimum between the ground state energies of the two contributions. We introduce the precision parameter $\delta' = \min \{\delta, 1\}/\alpha \in 1/\order{\poly n}$.
Arthur verifies the witness by applying QPE on the unitary \( H \), for which we can construct a block-encoding in time polynomial in \( n \), according to Propositions \ref{prop:ham_signed} and \ref{prop:efficient_be_laplacian}. The precision of the QPE is set to \( t = \lceil \log_2 (1/\delta') \rceil \) bits.

\medskip\noindent\emph{Completeness:} Let \( G_s \) be a \yes{}-instance of the \prbm{promise sparse balancedness} problem. 
For Proposition \ref{prop:balanced_kernel}, \( \ker \calL^{G_s} \neq \varnothing \). We also have \( \ker \calL^{G_s} = \ker H \). In this case, Merlin provides a witness \( \ket{\psi} \) such that \( \calL^{G_s} \ket{\psi} = 0 \). It follows that \( H \ket{\psi} = 0 \). The protocol gives an estimated energy of zero, and Arthur accepts the proof.

\medskip\noindent\emph{Soundness:} Let \( G_s \) be a \no{}-instance of the \prbm{promise sparse balancedness} problem. Then, for every witness \( \ket{\psi} \) that Merlin can provide, the energy satisfies $\expval{H}{\psi} 
\geq \frac{1}{\alpha} \min\{\expval{\calL^{G_s}}{\psi}, \expval{\mathbb{I}_{\lnot V})}{\psi}\} 
\geq \delta'$. The estimated energy is at least \( \delta' \). Therefore, Arthur rejects the proof.

\end{proof}

\section{QMA1-hardness of \prbm{sparse bipartitedness}}\label{sec:bipartite}

This section is dedicated to proving Theorem~\ref{thm:thm_2}.
In Section~\ref{sec:bipartite:connection}, we briefly recall the work of \textcite{zaslavsky2018negative} that connects balancedness to the bipartiteness of an unsigned graph, proposing a construction that maps balanced graphs to bipartite graphs. 
In Section~\ref{sec:bipartite:oracle}, we show that the construction proposed preserves the sparsity and efficiency of sparse access.
In Section \ref{sec:bipartite:model}, we show that our results extend to graphs expressed in the traditional sparse access.
In Section~\ref{sec:bipartite:reduction}, we demonstrate a reduction from \prbm{sparse balancedness} to the computational problem we define as \prbm{sparse bipartiteness}. Finally, in Section~\ref{sec:bipartite:containment} we prove that \prbm{sparse bipartiteness} under a suitable promise on the ground state energy of the signed Laplacian of the input graph is contained in \class{QMA}.

\subsection{Connection between balanced signed graphs and bipartite unsigned graphs}\label{sec:bipartite:connection}

We recall the concept of bipartiteness.

\begin{definition}{(Bipartite graph)}
An unsigned graph $G = (V, E)$ is \emph{bipartite} if there exists a partition of its vertices $V = A \cup B$, where $A \cap B = \varnothing$, such that each edge in $G$ connects a vertex in $A$ with a vertex in $B$. 
\end{definition}

\textcite{zaslavsky2018negative} proposed a construction that, given any signed graph, constructs a new unsigned graph of comparable size such that the balancedness of the former corresponds to the bipartiteness of the latter. This construction utilizes the subdivision operation, which removes an edge $\{v, v'\}$ from the graph and substitutes it with a new vertex $w$ and a pair of edges $\{v, w\}$ and $\{w, v'\}$. This work has already been used in~\cite{adriaens2023testing} to reduce the problem of testing balance to that of testing bipartiteness in the bounded degree model.

\begin{proposition}\label{propo:zas18}
Let $G_s = (V, E, s)$ be a signed graph, and denote by $E^+$ and $E^-$ the sets of positive and negative edges, respectively. Let $G_u = (V_u, E_u)$ be the unsigned graph obtained by applying the negative subdivision operation, i.e., replacing any positive edge with a path of two negative edges, and then ignoring the signature:
\begin{align*}
    V_u & = V \cup E^+, \\
    E_u & = E^- \cup \big\{ \{ v, e \} \mid v \in e \text{ and } e \in E^+ \big\}.
\end{align*}
Then, $G_s$ has a balanced component if and only if $G_u$ has a bipartite component.
\end{proposition}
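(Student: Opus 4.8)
The plan is to reduce both properties to their standard combinatorial descriptions — balance as ``every cycle has an even number of negative edges'' and bipartiteness as ``no cycle of odd length'' — and then check that negative subdivision converts one into the other, component by component. I would start by recalling Harary's balance theorem: a \emph{connected} signed graph is balanced in the sense of Definition~\ref{def:balanceness} if and only if every cycle contains an even number of negative edges. One direction is immediate, since a partition $V=V_1\oplus V_2$ as in Definition~\ref{def:balanceness} is exactly a vertex $2$-colouring in which negative edges are bichromatic and positive edges monochromatic, so going around a cycle and returning to the start forces an even number of negative edges. For the converse one fixes a spanning tree, colours each vertex by the parity of the number of negative edges on its tree path to a root, and uses that every fundamental cycle (hence every cycle) has an even number of negative edges to verify the colouring is consistent on all edges. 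Dually, a connected unsigned graph is bipartite iff it has no odd cycle. Since ``has a balanced component'' and ``has a bipartite component'' refer to induced subgraphs on connected components, it is enough to match components and their cycles.

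Next I would record the component correspondence. In $G_u$ each subdivision vertex $e\in E^+$ is adjacent exactly to the two endpoints of $e$, both lying in $V$; no adjacency between old vertices is changed and every negative edge of $G_s$ is kept. Hence the connected components of $G_u$ biject with those of $G_s$: a component $C$ of $G_s$ corresponds to the component $C_u$ obtained by negative-subdividing precisely the positive edges of $C$, with $C_u\cap V$ equal to the vertex set of $C$.

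Then comes the cycle correspondence within a fixed component. Every subdivision vertex has degree $2$ in $G_u$, so any cycle of $C_u$ that visits such a vertex $e=\{v,v'\}$ must use both of its edges, i.e.\ it traverses the two-edge path on vertices $v,e,v'$ that replaced the positive edge. Contracting all such two-edge paths back to single edges sends a cycle of $C_u$ to a cycle of $C$, and subdividing the positive edges of a cycle of $C$ sends it to a cycle of $C_u$; a cycle of $C$ with $a$ negative and $b$ positive edges then corresponds to a cycle of length $a+2b$ in $C_u$, and $a+2b\equiv a\pmod 2$. So a cycle of $C_u$ has odd length exactly when the corresponding cycle of $C$ has an odd number of negative edges. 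I expect this to be the main obstacle: one must argue carefully that cycles of $C_u$ do nothing but pass straight through the degree-two subdivision vertices, and that no spurious short cycle is introduced — this last point uses that $s$ is a genuine function on $E$, so no vertex pair carries both a positive and a negative edge and $G_u$ stays simple.

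Finally I would chain the equivalences: $C$ is balanced $\iff$ every cycle of $C$ has evenly many negative edges (Harary) $\iff$ every cycle of $C_u$ has even length (cycle correspondence) $\iff$ $C_u$ is bipartite. The acyclic case is consistent, since then $C$ is vacuously balanced and $C_u$ is also acyclic, hence bipartite. Because $C\mapsto C_u$ is a bijection on components, $G_s$ has a balanced component if and only if $G_u$ has a bipartite component.
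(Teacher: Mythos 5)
Your proposal is correct. Note that the paper does not actually prove this proposition itself: it simply cites Zaslavsky [Proposition~2.2], so there is no in-paper argument to compare against line by line. What you supply is the standard self-contained route — Harary's characterization of balance (every cycle carries an even number of negative edges), the odd-cycle characterization of bipartiteness, the observation that negative subdivision preserves the component structure, and the parity bookkeeping $a+2b\equiv a\pmod 2$ showing that a cycle with $a$ negative and $b$ positive edges becomes a cycle of length $a+2b$ in $G_u$ — and this is essentially the argument underlying Zaslavsky's result. You also correctly identify and dispose of the one genuinely delicate point: that every cycle of $G_u$ must pass straight through the degree-two subdivision vertices and therefore contracts back to a \emph{simple} cycle of $G_s$, which relies on $G_s$ being a simple signed graph (no vertex pair carrying both a positive and a negative edge), so no spurious triangles $v\text{--}e\text{--}v'\text{--}v$ arise. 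The degenerate edgeless/acyclic components are handled consistently on both sides. In short, your blind proof fills in the content the paper delegates to the reference, and it does so correctly; the only thing it buys over the paper's citation is self-containedness, at the cost of having to state and (sketch-)prove Harary's theorem, which the paper avoids.
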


\begin{proof}
See \textcite[Proposition 2.2]{zaslavsky2018negative}. 
\end{proof}

\subsection{Sparse access to the negative subdivision graph}\label{sec:bipartite:oracle}

We prove that the construction proposed in the previous subsection results in an unsigned graph that is represented efficiently. This is demonstrated in the following proposition.

\begin{proposition}\label{prop:sparse_access_bipartite}
Let $G_s = (V, E, s)$ be a signed graph with $N \leq 2^n - 1$ vertices, such that the graph is sparse, i.e., there is an upper bound $S \in \order{\poly n}$ on the degree of the vertices, and its sparse access $(O_\text{adj}, O_\text{sign})$ can be implemented using a classical circuit of size $\order{\poly n}$. Let $G_u = (V_u, E_u)$ be the graph obtained by applying the negative subdivision operation to each positively signed edge of $G_s$. Then, $G_u$ has $|V_u| = N' \leq N + NS$ vertices, is sparse with an upper bound $S' = \max{2, S}$ on the degree of its vertices, and its sparse access $O_\text{adj}'$ can be implemented using a classical circuit of size $\order{\poly n}$.
\end{proposition}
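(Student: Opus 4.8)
The plan is to fix a concrete $n'$-bit encoding of the vertices of $G_u$, with $n' = 2n+1$, and then exhibit a classical circuit that recovers the adjacency list of $G_u$ from this encoding using only $\order{S}$ queries to $(O_\text{adj}, O_\text{sign})$ together with $\order{\poly n}$ arithmetic on $n$-bit registers.

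\emph{Encoding and the easy bounds.} Parse each string $i' \in \{0,1\}^{n'}$ as a triple $(f,a,b)$ consisting of a one-bit flag $f$ and two $n$-bit blocks $a,b$. Encode an original vertex $v \in V \subseteq [2^n]\setminus\{0\}$ as $(0, 0^n, \operatorname{bin}_n(v))$ --- nonzero since $v \neq 0$ --- and the subdivision vertex of a positive edge $\{u,v\} \in E^+$, $u < v$, as $(1, \operatorname{bin}_n(u), \operatorname{bin}_n(v))$. This map is injective, misses the all-zeros string, and embeds $V_u$ into $[2^{n'}] \setminus \{0\}$. Since $|V_u| = |V| + |E^+| \le N + |E| \le N + NS$, the vertex-count bound follows. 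For the degree bound: every negative edge of $G_s$ survives unchanged and every positive edge incident to $v$ is merely redirected to its subdivision vertex, so $\deg_{G_u}(v) = \deg_{G_s}(v) \le S$ for $v \in V$, while every subdivision vertex has exactly the two neighbors $u$ and $v$; hence $S' = \max\{2,S\}$ suffices.

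\emph{The circuit for $O_\text{adj}'$.} On input $(i',\ell')$, parse $i' = (f,a,b)$. If $f = 0$ but $a \ne 0^n$, or if $i' = 0$, output the placeholder $\ell'$. If $f = 0$, $a = 0^n$, $b \ne 0^n$, put $i = \operatorname{int}(b)$ and decide $i \in V$ via the marked convention ($i \notin V$ iff $O_\text{adj}(i,0)=0$ and $O_\text{adj}(i,1)\neq 0$): if $i \notin V$, output $\ell'$; if $i \in V$, set $j = O_\text{adj}(i,\ell')$, output $0$ when $j = 0$, and otherwise read $\varsigma = O_\text{sign}(i,\ell')$ and output the code $(0, 0^n, \operatorname{bin}_n(j))$ if $\varsigma = -1$ or the code $(1, \operatorname{bin}_n(\min(i,j)), \operatorname{bin}_n(\max(i,j)))$ if $\varsigma = +1$. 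If $f = 1$, set $u = \operatorname{int}(a)$, $v = \operatorname{int}(b)$ and verify that $(1,a,b)$ names a genuine subdivision vertex --- i.e., $u < v$ and $v$ appears in the adjacency list of $u$ with positive sign, a scan of at most $S$ calls to $O_\text{adj}, O_\text{sign}$; if the check fails, output the placeholder $\ell'$, and if it succeeds, output $(0,0^n,\operatorname{bin}_n(u))$ for $\ell' = 0$, $(0,0^n,\operatorname{bin}_n(v))$ for $\ell' = 1$, and $0$ for $\ell' \ge 2$. By construction $O_\text{adj}'$ matches Definition~\ref{def:sparse_access_unsigned} for $G_u$; each branch uses $\order{S} \subseteq \order{\poly n}$ calls to the given oracles plus $\order{\poly n}$ bit operations, and $O_\text{adj}, O_\text{sign}$ are themselves $\order{\poly n}$-size circuits, so $O_\text{adj}'$ has size $\order{\poly n}$.

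\emph{Where the work is.} The one delicate point is the ``recognition'' direction: an arbitrary string with $f = 1$ must be certified as an actual vertex of $G_u$, which forces a full scan of one adjacency list to confirm that the claimed edge is present \emph{and} positively signed; symmetrically, the subdivision vertex of $\{u,v\}$ must get the same code regardless of whether it is reached from $u$ or from $v$, which is why the two blocks store $\min(u,v)$ and $\max(u,v)$. Beyond that, the only thing to keep straight is that the marked-access conventions --- placeholder $\ell'$ for non-vertices, all-zeros lists for isolated vertices, $0$ as the end-of-list marker --- are respected in every branch, which is a routine case check.
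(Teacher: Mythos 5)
Your proof is correct and follows the same overall strategy as the paper's: enlarge the index space, give each subdivision vertex an address computable from the oracles, and implement $O_\text{adj}'$ by a case analysis that makes $\order{\poly n}$ calls to $(O_\text{adj}, O_\text{sign})$; the bounds on $|V_u|$ and $S'$ are obtained identically. The one substantive difference is the choice of address for the subdivision vertices. The paper encodes the subdivision vertex of a positive edge by the pair (endpoint, position in that endpoint's adjacency list), i.e., as $2^n + i\cdot 2^m + \ell$; this makes recognition essentially free (decoding is arithmetic plus $\order{1}$ oracle calls), but since a positive edge $\{u,v\}$ occurs in both endpoints' lists it receives two distinct codes, so the oracle the paper writes down actually realizes two parallel degree-$2$ vertices per positive edge, and moreover $u$'s list contains only the code derived from $u$ while both codes list $u$ as a neighbor --- an asymmetry that is harmless for the parity of cycles (and hence for the reduction) but means the oracle is not literally the adjacency access of the negative subdivision graph of Proposition~\ref{propo:zas18}. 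Your canonical code $(1,\min(u,v),\max(u,v))$ names the edge itself, so there is exactly one subdivision vertex per positive edge and the adjacency relation is manifestly symmetric; the price is the $\order{S}$ membership scan needed to certify that a string with $f=1$ names a genuine positively signed edge, which you correctly flag as the delicate step and which still keeps the circuit at size $\order{\poly n}$. On this point your construction is, if anything, cleaner than the paper's.
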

\begin{proof}
Let $S \le 2^m$. The set of vertices of $G_u$ is composed of the set $V$ plus as many vertices as the positively signed edges in $E$, therefore $V_u = V \cup E^+$; the cardinality of $V_u$ is $|V| + |E^+| \le N + NS \le (2^n - 1)(1 + 2^m) \le 2^{n+m+1}$. The upper bound $S'$ on the degree of the vertices is the maximum between the upper bound on the degree of $G_s$, which is $S$, and the maximum degree of the vertices associated with the negative subdivision operation, which is always $2$. The algorithm implementing $O_\text{adj}'$ is:

\bigskip
\begin{algorithmic}[1]
\Statex \textbf{Function} $O_\text{adj}'(i, \ell)$
\Require $i \in 0, \ldots, 2^{n+m+1}$
\Require $\ell \in 0, \ldots, \max\{2, S\} - 1$
\If{$0 \le i < 2^n$} \Comment{One of the vertices in $G_s$}
    \State invalid $\gets (O_\text{adj}(i, 0) = 0$ and $O_\text{adj}(i, 1) \neq 0)$ \Comment{True iff $i \not\in V$}
    \State $j \gets O_\text{adj}(i, \ell)$ 
    \If{invalid}
        \State \textbf{return} $\ell$
    \ElsIf{$j = 0$}
        \State \textbf{return} 0
    \ElsIf{$j \neq 0$ and $O_\text{sign}(i, \ell) = -1$} \Comment{Connect to the original endpoint of the edge} 
        \State \Return $j$
    \ElsIf{$j \neq 0$ and $O_\text{sign}(i, \ell) = +1$} \Comment{Connect to the vertex introduced by the negative subdivision}
        \State \Return $2^n + i \times 2^m + \ell$
    \EndIf
\EndIf
\If{$i \ge 2^n$} \Comment{One of the vertices introduced by the negative subdivision}
    \State $i', \ell' \gets \lfloor (i - 2^n)/2^m \rfloor, (i - 2^n) \text{ mod } 2^m$ 
    \State \Comment{From index $i$ possibly associated with a negative subdivision edge $e = \{i', j'\}$, retrieve $i'$} 
    \State \Comment{and $\ell'$ such that $j' = \operatorname{adj}(i', \ell')$. We must verify if $i$ is not associated to any such edge.}
    \State invalid $\gets (O_\text{adj}(i', 0) = 0$ and $O_\text{adj}(i', 1) \neq 0)$  \Comment{True iff $i \not\in V$}
    \If{invalid}
        \State \textbf{return} $\ell$
    \Else  
        \State $j' \gets O_\text{adj}(i', \ell')$
        \If{$j' = 0$}
            \State \textbf{return} $0$ \Comment{The index $i'$ is valid but has no $\ell$-th vertex}
        \ElsIf{$j' \neq 0$ and $O_\text{sign}(i', \ell') = -1$}
            \State \textbf{return} $0$ \Comment{The index $i'$ is valid, links to the vertex $j'$, but they are connected by a negative edge}
        \ElsIf{$j' \neq 0$ and $O_\text{sign}(i', \ell') = +1$}
            \State \textbf{return} $i'$ if $\ell = 0$ else ($j'$ if $\ell = 1$ else $0$)
        \EndIf
    \EndIf
\EndIf
\end{algorithmic}
\bigskip
\end{proof}

\subsection{Reduction}\label{sec:bipartite:reduction}

We define the computational problem of testing the bipartitedness of a sparse graph and prove that it is \qmaone{}-hard.

\restatablesparsebipartitedness*

\restatabletheoremtwo*
\begin{proof}
We will demonstrate hardness via a reduction from \prbm{sparse balancedness}.

\medskip\noindent\textbf{Reduction}:

\begin{enumerate}
    \item The input to \prbm{sparse balancedness} is the pair of classical circuits $(O_\text{adj}, O_\text{sign})$ implementing the sparse access to $G_s$.
    \item We use Proposition \ref{prop:sparse_access_bipartite} to obtain a classical circuit $O_\text{adj}'$ implementing the sparse access for the negative subdivision graph $G_u$. 
    \item We solve \prbm{sparse bipartitedness} with $O_\text{adj}'$ as input.
\end{enumerate}

\medskip\noindent\textbf{Completeness}: 

\begin{enumerate}
    \item Let $(O_\text{adj}, O_\text{sign})$ be a \yes{} instance of \prbm{sparse balancedness}. 
    \item For Theorem~\ref{prop:balanced_kernel}, $\ker\calL^{G_s} \neq \varnothing$.
    \item For Proposition~\ref{propo:zas18}, there exists an unsigned graph $G_u$ obtained by applying the negative subdivision operation such that $G_s$ is balanced if and only if $G_u$ is bipartite, and the classical circuit $O_\text{adj}'$ defined in the reduction above implement the sparse access for such $G_u$.
    \item $G_u$ is bipartite. Therefore, $O_\text{adj}'$ is a \yes{} instance of \prbm{sparse bipartitedness}.
\end{enumerate}

\medskip\noindent\textbf{Soundness}: 

\begin{enumerate}
    \item Let $(O_\text{adj}, O_\text{sign})$ be a \no{} instance of \prbm{sparse balancedness}. 
    \item For Theorem~\ref{prop:balanced_kernel}, $\ker\calL^{G_s} = \varnothing$.
    \item For Proposition~\ref{propo:zas18}, there exists an unsigned graph $G_u$ obtained by applying the negative subdivision operation such that $G_s$ is balanced if and only if $G_u$ is bipartite, and the classical circuit $O_\text{adj}'$ defined in the reduction above implement the sparse access for such $G_u$.
    \item $G_u$ is not bipartite. Therefore, $O_\text{adj}'$ is a \no{} instance of \prbm{sparse bipartitedness}.
\end{enumerate}
\end{proof}

\subsection{Role of the marked sparse access}\label{sec:bipartite:model}

The considerations detailed in Section \ref{sec:balance:model} regarding the role of marked sparse access for balance also apply to bipartiteness. For any unsigned graph \( G \), we can construct a signed graph \( G' \) with a comparable number of vertices and an upper bound on their degree such that \( G \) has a bipartite component if and only if \( G' \) has a bipartite component.

\begin{proposition}
Let $G = (V, E)$ be a unsigned graph with $V \subseteq [2^n] \setminus \{0\}, |V| = N$, such that
\begin{itemize}
    \setlength\itemsep{0em}
    \item the graph is sparse, i.e., there is an upper bound $2 \le S \in \order{\poly n}$ on the degree of the vertices;
    \item the graph is given as a classical circuit of size $\order{\poly n}$ implementing the marked sparse access $O_\text{adj}$.
\end{itemize} 
Then, there exists a unsigned graph $G_s' = (V', E')$ vertices, $V' = \{1, ..., N'\}$, such that
\begin{itemize}
    \setlength\itemsep{0em}
    \item $N' \le 2^{n+1}$;
    \item the upper bound on the degree of the vertices is $S + 2\in \order{\poly n}$;
    \item $G_s$ has a bipartite component iff $G_s'$ has a bipartite component. 
    \item it exists a circuit $O_\text{adj}', O_\text{sign}'$ implementing the traditional sparse access that uses $\order{1}$ calls to $O_\text{adj}$;
\end{itemize} 
\end{proposition}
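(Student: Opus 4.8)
The plan is to mirror, essentially verbatim, the construction in the proof of Proposition~\ref{prop:equiv_marked_access}, discarding all signature bookkeeping and replacing the signed triangle used there (to destroy balancedness) by an ordinary unsigned triangle (which destroys bipartiteness). Set $A = \lceil 2^n/S\rceil$ and build $G' = (V',E')$ on the contiguous label set $V' = \{1,\dots,N'\}$ with $N' = 2^n + A + 3$ as follows: keep every edge of $G$; add a path through auxiliary vertices $2^n, 2^n+1, \dots, 2^n+A+1$; attach each bitstring $i \in [2^n]\setminus V$ as a pendant to the auxiliary vertex $2^n + \lceil i/S\rceil$; and add a triangle on $\{2^n+A+1,\, 2^n+A+2,\, 2^n+A+3\}$, which shares the vertex $2^n+A+1$ with the path. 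Since no new edge is incident to a vertex of $V$, $G'$ is the disjoint union of $G$ and a single new connected component $G^{\mathrm{new}}$ consisting of the path, the triangle, and the pendants.

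Next I would verify the quantitative claims, all of which are immediate. For $S \ge 2$ one has $A \le 2^{n-1}$, hence $N' \le 2^{n+1}$ (the finitely many small $n$ are trivial). Every vertex has degree at most $S+2$: an original vertex keeps its degree $\le S$, a pendant has degree $1$, an interior auxiliary path vertex has two path neighbors and at most $S$ pendants, and a triangle vertex has degree at most $3$. For the bipartite-component equivalence, recall a connected graph is bipartite iff it contains no odd cycle; $G^{\mathrm{new}}$ contains the planted triangle, so it is not bipartite, while every other connected component of $G'$ is, verbatim, a connected component of $G$. Therefore $G'$ has a bipartite component iff $G$ does. (Equivalently one may phrase this spectrally, $\calQ^{G'} = \calQ^{G} \oplus \calQ^{G^{\mathrm{new}}}$ with $\calQ^{G^{\mathrm{new}}}$ having trivial kernel, paralleling the kernel argument of Proposition~\ref{prop:equiv_marked_access}.)

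It then remains to exhibit a classical circuit $O_\text{adj}'$ realizing the traditional sparse access for $G'$ with $\order{1}$ calls to $O_\text{adj}$; this is the $O_\text{adj}'$ routine from the proof of Proposition~\ref{prop:equiv_marked_access} with the sign-dependent branches deleted. For $i < 2^n$ one decides membership $i \in V$ using the two queries $O_\text{adj}(i,0)$ and $O_\text{adj}(i,1)$ together with the marked-access convention that $i \notin V$ returns $[0,1,\dots,S-1]$: if $i \in V$, forward $O_\text{adj}(i,\ell)$; if $i \notin V$, return its unique pendant neighbor $2^n + \lceil i/S\rceil$ for $\ell = 0$ and the placeholder $0$ otherwise. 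For an auxiliary path vertex, return its one or two path neighbors for the first one or two values of $\ell$ and then enumerate, in increasing bitstring order, the at most $S$ pendants attached to it (each membership test again costing $\order{1}$ oracle calls), terminating the list with $0$. For the three triangle vertices return hard-coded neighbor lists, and for $i \ge N'$ return $\ell$. Each branch runs in $\poly(n)$ time and issues $\order{1}$ calls to $O_\text{adj}$. The ``$O_\text{sign}'$'' appearing in the statement is vacuous here, since $G'$ is unsigned.

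The work is bookkeeping rather than conceptual. The only points needing care are keeping the index arithmetic consistent — the relabeling of $G$'s vertices and of the missing bitstrings into the contiguous block $\{1,\dots,N'\}$, the attachment formula $2^n + \lceil i/S\rceil$ (so that every attachment index lands in $\{2^n,\dots,2^n+A\}$), and the order in which $O_\text{adj}'$ emits a path vertex's pendants, correctly terminated by the placeholder. Conceptually there is no obstacle: bipartiteness of a connected component is a purely combinatorial, labeling-independent property, so the disjoint-union decomposition together with the odd cycle planted inside $G^{\mathrm{new}}$ carries the whole argument, exactly as the signed triangle does for balancedness in Proposition~\ref{prop:equiv_marked_access}.
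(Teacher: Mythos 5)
Your proposal is correct and follows essentially the same route as the paper: the same auxiliary line plus pendant attachment plus unsigned triangle construction, the same disjoint-union/odd-cycle argument for the bipartite-component equivalence, and the same adaptation of the $O_\text{adj}'$ routine from Proposition~\ref{prop:equiv_marked_access} with the sign branches removed. No gaps.
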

\begin{proof}

The procedure is close to the one in Proposition \ref{prop:equiv_marked_access}. Let $A = \lceil 2^n/S \rceil$. Define $G_s' = (V', E', s')$ such that
\begin{align}
    V' & = \{1, ..., N'\} \text{ with } N' = 2^n + A + 3 \\
    E' & = E & (\text{all the edges in } G_s) \\
    & \phantom{=} \cup \{ \{i, i+1\} \mid i = 2^n, \ldots, 2^n+A \} & (\text{auxiliary vertices form a line}) \\
    & \phantom{=} \cup \{ \{ i, 2^n + \lceil i / S \rceil \} \mid i \not\in V \} & (\text{bitstring } i \not\in V \text{ connected to their nearest auxiliary vertex}) \\
    & \phantom{=} \cup \{ \{2^n+A+1, 2^n+A+2\}, \\
    & \quad \phantom{=} \{2^n+A+2, 2^n+A+3\}, \\
    & \quad \phantom{=} \{2^n+A+1, 2^n+A+3\} \} & (\text{triangle}).
\end{align}
By construction, \( G' \) has fewer than \( 2^{n+1} \) vertices for \( S \ge 2 \), and each auxiliary vertex has no more than \( S + 2 \) neighbors: two vertices preceding and succeeding it in the line, and \( S \) additional vertices corresponding to bitstrings not associated with vertices of \( G_s \). Denoting \( G^\text{new} \) as the subgraph composed solely of the augmented vertices (auxiliary ones and the triangle), the Laplacian of \( G' \) can be written as
\begin{equation}
    \calL^{G'} = \mqty[ \calL^{G} & \mathbf{0} \\ \mathbf{0} & \calL^{G^\text{new}} ].
\end{equation}
The subgraph \( G_s^\text{new} \) is not bipartite due to the triangle component, and as such, it cannot introduce new bipartite components. Being disconnected from any vertex originally in $G$, it cannot even modify the existing bipartite components. The traditional sparse access \( O_\text{adj}'  \) to $G'$ is implemented by the same algorithm in Proposition \ref{prop:equiv_marked_access} (only $O_\text{adj}'$).
\end{proof}

\subsection{Containment in \class{QMA}}\label{sec:bipartite:containment}

We prove that a promise variant of \prbm{sparse bipartitedness} is contained in \class{QMA}. As anticipated in the introduction, it is challenging to characterize the presence of bipartite components in terms of the graph Laplacian. It is easier to tackle the problem using a modification of such an operator, the \emph{signless Laplacian}. 

\begin{definition}\label{def:signless_laplacian}
The \emph{signless Laplacian} of a graph $G = (V, E)$ is a linear operator $\calQ^G: \calV^G \to \calV^G$ defined as 
\begin{equation}
    \mel{i}{\calQ^G}{j} = \begin{cases}
        \deg(i), & i = j \\
        1, & i \sim j \\
        0, & \text{otherwise}
    \end{cases}.
\end{equation}
\end{definition}

The signless Laplacian can be expressed as $\mathcal{Q} = \mathcal{D} + \mathcal{A}$, where $\mathcal{D}$ is the degree operator and $\mathcal{A}$ is the adjacency matrix. In contrast, the graph Laplacian is defined as $\mathcal{L} = \mathcal{D} - \mathcal{A}$. It is well-known in the literature that the kernel of the signless Laplacian provides insights into the connected bipartite components of the graph.

\begin{proposition}{(\textcite{desai1994characterization}, \textcite[Corollary 2.2]{cvetkovic2007signless})}\label{prop:signless_components}
Let $G = (V, E)$ be an unsigned graph and $\calQ^G$ be its signless Laplacian. The dimensionality of $\ker \calQ^G$ is equal to the number of bipartite components in $G$.
\end{proposition}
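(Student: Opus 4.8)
The plan is to realize the signless Laplacian as a Gram matrix and then read off its kernel combinatorially. First I would introduce the \emph{unsigned incidence operator} $B:\calV^G \to \calE^G$, defined on basis elements by $\mel{(i,j)}{B}{k} = 1$ if $k \in \{i,j\}$ and $0$ otherwise — the analogue of the incidence operator $\calN$ introduced earlier for signed graphs, but with all signs dropped. A one-line computation gives $B^\top B = \mathcal{D} + \mathcal{A} = \calQ^G$: the entry at $(k,k)$ counts the edges incident to $k$, hence equals $\deg(k)$, while the entry at $(k,\ell)$ with $k \neq \ell$ counts the at most one edge joining $k$ and $\ell$, hence equals $1$ if $k \sim \ell$ and $0$ otherwise. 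In particular $\calQ^G$ is positive semi-definite, and since $\mel{x}{\calQ^G}{x} = \norm{B\ket{x}}_2^2$ we obtain $\ker \calQ^G = \ker B$.

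Next I would unfold $\ker B$ into a local condition on coordinates. Writing $x_k = \braket{k}{x}$, one has $B\ket{x} = \sum_{\{i,j\}\in E}(x_i + x_j)\ket{(i,j)}$, so $\ket{x}\in \ker\calQ^G$ if and only if $x_i + x_j = 0$ for every edge $\{i,j\}$. Since $\calQ^G$ is block-diagonal with respect to the connected components $C_1,\dots,C_k$ of $G$ — as is true for all of these graph operators — the kernel decomposes as $\ker\calQ^G = \bigoplus_{t=1}^k \ker\calQ^{C_t}$, and it suffices to prove that for a \emph{connected} graph $C$ one has $\dim\ker\calQ^C = 1$ when $C$ is bipartite and $\dim\ker\calQ^C = 0$ otherwise.

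For the connected case, fix a root $v_0 \in C$. Along any shortest path from $v_0$ to a vertex $u$, the relation $x_i = -x_j$ forces $x_u = (-1)^{d(v_0,u)} x_{v_0}$, where $d$ is the graph distance; hence the map $\ket{x}\mapsto x_{v_0}$ is injective on $\ker\calQ^C$ and $\dim\ker\calQ^C \le 1$. If $C$ is bipartite with parts $A, B$, the nonzero vector equal to $+1$ on $A$ and $-1$ on $B$ satisfies $x_i + x_j = 0$ on every (necessarily crossing) edge, so it lies in $\ker\calQ^C$ and the dimension equals $1$. If $C$ is not bipartite, it contains an odd cycle; picking a vertex $w$ on that cycle and applying $x_i = -x_j$ around it shows $x_w = -x_w$ (the cycle has odd length), forcing $x_w = 0$ and then $\ket{x} = 0$ by connectedness, so $\ker\calQ^C$ is trivial. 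Summing $\dim\ker\calQ^{C_t}$ over the components yields exactly the number of bipartite components.

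Every step is elementary linear algebra and standard graph theory; the only point deserving a moment of care is the last one — that the failure of bipartiteness produces the contradiction $x_w = -x_w$ — which is simply the familiar ``no odd cycle'' characterization of bipartite graphs and can be cited directly. I would also remark on an internal alternative: $\calQ^G$ equals the signed Laplacian $\calL^{G_s}$ of the signed graph $G_s$ obtained by assigning the sign $-1$ to every edge of $G$, and for such an all-negative signature the balancedness condition of Definition~\ref{def:balanceness} degenerates to bipartiteness (the positive-edge classes must be independent sets). Proposition~\ref{prop:balanced_kernel} together with the block-diagonal form of the signed Laplacian then recovers the count, once one additionally records that the kernel of a connected balanced signed graph is one-dimensional.
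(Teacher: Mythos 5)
Your proof is correct and complete. The paper itself offers no argument for this proposition --- it is stated as an imported result with a citation to Desai--Rao and to Cvetkovi\'c et al.\ --- so there is no internal proof to compare against; your argument is essentially the standard one from those references. The factorization $\calQ^G = B^\top B$ with the unsigned incidence operator, the reduction of the kernel to the local condition $x_i + x_j = 0$ on edges, the component-wise decomposition, and the odd-cycle obstruction in the non-bipartite connected case are all sound, and together they give exactly the count of bipartite components. Your closing remark is also apt and fits the paper's framework well: viewing $\calQ^G$ as the signed Laplacian of the all-negative signature makes the proposition a special case of Proposition~\ref{prop:balanced_kernel} (balancedness degenerating to bipartiteness), modulo the extra observation that the kernel of a connected balanced signed graph is exactly one-dimensional --- a point you correctly flag as needing to be recorded separately, since Proposition~\ref{prop:balanced_kernel} as stated only gives non-triviality.
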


We recall the variant of \prbm{sparse bipartitedness} under a promise on the smallest eigenvalue of the signless Laplacian.

\restatablepromisesparsebipartitedness*

We can prove that such a problem is contained in \class{QMA}. The process is analogous to the one presented in Section \ref{sec:balance:containment}. While the presence of balanced components is characterized by the non-triviality of the kernel of the signed Laplacian, the presence of bipartite components is characterized by the non-triviality of the kernel of the signless Laplacian.

\begin{proposition}\label{prop:ham_unsigned}
Let $G = (V, E)$ be an unsigned graph with $N \leq 2^n - 1$ vertices. The Hamiltonian $H = \calQ^{G} + \sum_{i \not\in V} \ketbra{i}$ has dimension $\dim H = 2^n$ and kernel $\ker(H) = \ker(\calQ^{G})$.
\end{proposition}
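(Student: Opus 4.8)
The plan is to mirror the argument for Proposition~\ref{prop:ham_signed} almost verbatim, with the signless Laplacian $\calQ^G$ playing the role of the signed Laplacian $\calL^{G_s}$. First I would fix notation for the $n$-qubit Hilbert space $\calH$ and split it along the computational basis according to which bitstrings index vertices of $G$:
\begin{equation*}
  \calH = \calH_V \oplus \calH_{\lnot V}, \qquad \calH_V = \operatorname{span}\{\ket{i} : i \in V\}, \quad \calH_{\lnot V} = \operatorname{span}\{\ket{i} : i \in [2^n]\setminus V\}.
\end{equation*}
By Definition~\ref{def:signless_laplacian}, every entry $\mel{i}{\calQ^G}{j}$ with $i \notin V$ or $j \notin V$ vanishes, so $\calQ^G$ (viewed as an operator on $\calH$ by zero extension) is supported entirely on $\calH_V$; dually, $\sum_{i\notin V}\ketbra{i} = \mathbb{I}_{\lnot V}$ is the identity on $\calH_{\lnot V}$ and zero on $\calH_V$. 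Hence $H$ is block diagonal with respect to this splitting, $H = \calQ^G \oplus \mathbb{I}_{\lnot V}$.

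From this block-diagonal form the kernel splits as $\ker(H) = \ker(\calQ^G) \oplus \ker(\mathbb{I}_{\lnot V})$. Since $\mathbb{I}_{\lnot V}$ is the identity on $\calH_{\lnot V}$, it has trivial kernel, $\ker(\mathbb{I}_{\lnot V}) = \varnothing$, so $\ker(H) = \ker(\calQ^G)$, the right-hand side understood as a subspace of $\calH_V \subseteq \calH$. For the dimension count, $\dim H = \dim \calH_V + \dim \calH_{\lnot V} = N + (2^n - N) = 2^n$.

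I do not expect any genuine obstacle here: the only points that need care are (i) checking that the off-diagonal entries of $\calQ^G$ indeed vanish outside the vertex set so that the block decomposition is legitimate (immediate from Definition~\ref{def:signless_laplacian}), and (ii) being consistent in treating $\calQ^G$ as its zero-extension to $\calH$. As with Proposition~\ref{prop:ham_signed}, the point of adjoining $\mathbb{I}_{\lnot V}$ is precisely to keep the dummy basis vectors $\ket{i}$, $i\notin V$, from contributing spurious kernel elements, so that via Proposition~\ref{prop:signless_components} non-triviality of $\ker H$ still witnesses a bipartite component of $G$. Downstream this proposition combines with a block encoding of $H$ built from the marked sparse access as in Proposition~\ref{prop:efficient_be_laplacian} (now using $\calQ^G = \mathcal{D} + \mathcal{A}$ in place of $\calL^{G_s} = \mathcal{D} - \mathcal{A}$, with the same Gershgorin bound $2S$) together with QPE, yielding the \class{QMA} containment of \prbm{promise sparse bipartitedness} exactly as in the proof of Proposition~\ref{prop:thm_1_qma}.
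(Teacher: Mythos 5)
Your proof is correct and follows essentially the same route as the paper: the paper simply notes that the argument of Proposition~\ref{prop:ham_signed} carries over verbatim with $\calQ^G$ in place of $\calL^{G_s}$, which is precisely the block decomposition $\calH = \calH_V \oplus \calH_{\lnot V}$, $H = \calQ^G \oplus \mathbb{I}_{\lnot V}$, kernel splitting, and dimension count you give. No gaps.
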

\begin{proof}
The procedure is identical to that described in Proposition \ref{prop:ham_signed} with $G$ replacing $G_s$ and $\calQ^G$ replacing $\calL^{G_s}$. Notably, $\calQ^G$ can be thought of as a signed Laplacian for a signed graph $G$ whose edges are all negative. 
\end{proof}

\begin{proposition}\label{prop:efficient_be_laplacian_unsigned}
Let $G$ be an unsigned graph with $N \le 2^n - 1$ vertices such that
\begin{itemize}
    \setlength\itemsep{0em}
    \item the graph is sparse, i.e., there is an upper bound $S \in \order{\poly n}$ on the degree of the vertices, $S \le 2^m$;
    \item the graph is given as a classical circuit of size $\order{\poly n}$ implementing the sparse access $O_\text{adj}$.
\end{itemize} 
Then, there exists a quantum circuit realizing a $(2S, m, 0)$-block encoding of $\calQ^{G} + \sum_{i \not\in V} \ketbra{i}$ in time $\order{\poly n}$.
\end{proposition}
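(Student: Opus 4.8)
The plan is to mirror the proof of Proposition~\ref{prop:efficient_be_laplacian} almost verbatim, replacing the signed Laplacian by the signless Laplacian and simply dropping the sign oracle. First I would fix the normalization constant $\alpha = 2S$ by bounding the spectrum of $H = \calQ^{G} + \sum_{i\not\in V}\ketbra{i}$. By the Gershgorin circle theorem, every eigenvalue of $\calQ^{G}$ lies in a disk centered at a diagonal entry $\deg(i) \le S$ with radius $r_i = \sum_{j \sim i} 1 = \deg(i) \le S$, hence in the disk $\{\,c : |c| \le 2S\,\}$; moreover $\calQ^{G}$ is positive semi-definite, since it coincides with the signed Laplacian of the signed graph on $V$ all of whose edges are negative, and signed Laplacians were already shown to be positive semi-definite. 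Therefore the eigenvalues of $\calQ^{G}$ lie in $[0, 2S]$. The added projector $\sum_{i\not\in V}\ketbra{i}$ acts on the orthogonal complement and only contributes the eigenvalue $1 \le 2S$ (as $S \ge 2$), so $\norm{H}_2 \le 2S$ and the block-encoding normalization $\alpha = 2S$ is legitimate.

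Next I would instantiate the sparse block-encoding construction of Proposition~\ref{def:sparse_access_be} with $A = H/\alpha$. The row oracle $O_\text{row}$ is defined exactly as in Proposition~\ref{prop:efficient_be_laplacian}: on input $(i,\ell)$ it detects whether $i \not\in V$ via two calls to $O_\text{adj}$ (the index $i$ is invalid iff $O_\text{adj}(i,0) = 0$ and $O_\text{adj}(i,1) \neq 0$, together with the case $i = 0$); for invalid $i$ it returns $i$ when $\ell = 0$ and the placeholder $0$ otherwise, so that only the diagonal entry $\ketbra{i}$ is exposed; for valid $i$ it returns $O_\text{adj}(i,\ell)$. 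The entry oracle $O_\text{entry}$ then applies, conditioned on the column $j = O_\text{row}(i,\ell)$, an $R_y$ rotation on the ancilla of angle $\deg(i)/\alpha$ when $i = j$ is a valid vertex (with $\deg(i)$ computed by scanning the adjacency list), of angle $1/\alpha$ when $i = j$ is invalid, of angle $1/\alpha$ when $j \neq 0$ and $i \sim j$ — this is the one substantive change: the off-diagonal coefficient is the constant $+1$, so $O_\text{sign}$ is never queried — and no rotation when $j = 0$. All entries of $H/\alpha$ are nonnegative reals in $[0,1]$, so the amplitude branching $A_{i,c(i,\ell)}\ket{0} + \sqrt{1 - |A_{i,c(i,\ell)}|^2}\ket{1}$ required by Proposition~\ref{def:sparse_access_be} is well defined.

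Finally I would account for the running time: every circuit makes $\order{1}$ calls to $O_\text{adj}$ except for the computation of $\deg(i)$, which scans the adjacency list with $\order{S} \subseteq \order{\poly n}$ calls to $O_\text{adj}$, and all arithmetic is performed on registers of $\order{n+m}$ qubits; hence both $O_\text{row}$ and $O_\text{entry}$ are implementable by circuits of size $\order{\poly n}$. Invoking Proposition~\ref{def:sparse_access_be} with $S \le 2^m$ then produces a $(1,m,0)$-block encoding of $H/\alpha$, i.e.\ a $(2S,m,0)$-block encoding of $\calQ^{G} + \sum_{i\not\in V}\ketbra{i}$, constructed in time $\order{\poly n}$. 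I do not anticipate a real obstacle; the only point needing a line of care is the positive semi-definiteness of $\calQ^{G}$ (used both to justify the normalization $\alpha = 2S$ and to avoid any signed branch), which is handled by the all-negative-edge reinterpretation noted above.
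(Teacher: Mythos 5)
Your proposal is correct and follows exactly the route the paper takes: the paper's own proof is a one-line remark that the construction of Proposition~\ref{prop:efficient_be_laplacian} carries over verbatim with $\calQ^G$ in place of $\calL^{G_s}$, relying (as you do) on viewing $\calQ^G$ as the signed Laplacian of the all-negative-edge signed graph to get positive semi-definiteness and the $2S$ normalization. Your write-up merely makes explicit the details the paper leaves implicit, including the correct observation that the only substantive change is hardcoding the off-diagonal entry to $+1$ so that $O_\text{sign}$ is never queried.
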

\begin{proof}
The procedure is identical to that described in Proposition \ref{prop:efficient_be_laplacian} with $G$ replacing $G_s$ and $\calQ$ replacing $\calL^{G_s}$. 
\end{proof}

\restatabletheoremtwoqma*
\begin{proof}
Consider a protocol in which Merlin provides an \( n \)-qubit witness state \( \ket{\psi} \), allegedly the ground state of the Hamiltonian \( H = \frac{1}{\alpha}(\calQ^{G} \oplus \mathbb{I}_{\lnot V}) \), with $\alpha$ normalization constant that can be set to $2S$. We have been promised the ground state energy of $\calQ^{G} \in \calH_V$ is either zero or $\delta$, while the ground state energy of the identity matrix $\mathbb{I}_{\lnot V} \in \calH_{\lnot V}$ is one. The ground state energy of $H \in \calH$ is the minimum between the ground state energies of the two contributions. We introduce the precision parameter $\delta' = \min \{\delta, 1\}/\alpha \in 1/\order{\poly n}$.
Arthur verifies the witness by applying QPE on the unitary \( H \), for which we can construct a block-encoding in time polynomial in \( n \), according to Propositions \ref{prop:ham_signed} and \ref{prop:efficient_be_laplacian}. The precision of the QPE is set to \( t = \lceil \log_2 (1/\delta') \rceil \) bits.

\medskip\noindent\emph{Completeness:} Let \( G \) be a \yes{}-instance of the \prbm{promise sparse bipartitedness} problem. 
For Proposition \ref{prop:balanced_kernel}, \( \ker \calQ^{G} \neq \varnothing \). We also have \( \ker \calQ^{G} = \ker H \). In this case, Merlin provides a witness \( \ket{\psi} \) such that \( \calQ^{G} \ket{\psi} = 0 \). It follows that \( H \ket{\psi} = 0 \). The protocol gives an estimated energy of zero, and Arthur accepts the proof.

\medskip\noindent\emph{Soundness:} Let \( G \) be a \no{}-instance of the \prbm{promise sparse bipartitedness} problem. Then, for every witness \( \ket{\psi} \) that Merlin can provide, the energy satisfies $\expval{H}{\psi} 
\geq \frac{1}{\alpha} \min\{\expval{\calQ^{G}}{\psi}, \expval{\mathbb{I}_{\lnot V})}{\psi}\} = \frac{1}{\alpha} \min \{ \delta, 1\} \geq \delta'$. The estimated energy is at least \( \delta' \). Therefore, Arthur rejects the proof.
\end{proof}

\section{Acknowledgements}

This publication is part of the project Divide \& Quantum (with project number 1389.20.241) of the research programme NWA-ORC which is (partly) financed by the Dutch Research Council (NWO). This work was also supported by the Dutch National Growth Fund (NGF), as part of the Quantum Delta NL programme. This work was also supported by the European Union’s Horizon Europe program through the ERC CoG BeMAIQuantum (Grant No. 101124342).
Part of the work for this manuscript was conducted while CG was affiliated with applied Quantum algorithms (aQa), Leiden University, the Netherlands.

\bibliography{biblio}

\end{document}